\newcommand{\un}{\mathbb I}
\newcommand{\ra}{\rightarrow}
\newcommand{\bra}{\langle} 
\newcommand{\ket}{\rangle}
\newcommand{\D}{{\mathbb D}}
\renewcommand{\vec}[1]{\mathbf{#1}}
\newcommand{\be}{\begin{equation}}
\newcommand{\ee}{\end{equation}}
\newcommand{\bea}{\begin{eqnarray}}
\newcommand{\eea}{\end{eqnarray}}
\newcommand{\circa}{{\circlearrowleft}}
\newcommand{\eps}{\epsilon}
\newcommand{\ffi}{\varphi}
\newcommand{\ep}{\hfill  {\vrule height 10pt width 8pt depth 0pt}}
\newcommand{\grintl}{[\kern-.18em [}
\newcommand{\grintr}{]\kern-.18em ]}
\newcounter{resultcounter}[section]
\newtheorem{thm}[resultcounter]{Theorem}
\newtheorem{lem}[resultcounter]{Lemma}
\newtheorem{prop}[resultcounter]{Proposition}
\newtheorem{cor}[resultcounter]{Corollary}
\newtheorem{definition}[resultcounter]{Definition}
\newtheorem{rem}[resultcounter]{Remark}
\newtheorem{rems}[resultcounter]{Remarks}
  \def\cC{{\cal C}}
 \def\cH{{\cal H}} 
 \def\cK{{\cal K}}
 \def\cT{{\cal T}}
\newcommand{\R}{{\mathbb R}}
\newcommand{\N}{{\mathbb N}}
\newcommand{\C}{{\mathbb C}}
\newcommand{\Z}{{\mathbb Z}}
\newcommand{\I}{{\mathbb I}}
\newcommand{\T}{{\mathbb T}}
\newcommand{\Ss}{{\mathbb S}}
\def\fS{{\mathfrak S}}
\def\fB{{\mathfrak B}}
\def\proof{\noindent{\bf Proof:}\ \ }
\begin{document}
\title{Spectral Properties of Quantum Walks \\ on Rooted Binary Trees}
 \author{ Alain Joye\footnote{ UJF-Grenoble 1, CNRS Institut Fourier UMR 5582, Grenoble, 38402, France} \footnote{Partially supported by the Agence Nationale de la Recherche, grant ANR-09-BLAN-0098-01} \and Laurent Marin\footnotemark[1] }

\date{ }

\maketitle
\vspace{-1cm}

\thispagestyle{empty}
\setcounter{page}{1}
\setcounter{section}{1}

\vspace{.6cm}

\centerline{\em  Dedicated to Herbert Spohn in celebration of his 65th birthday}

\vspace{.2cm}

\setcounter{section}{0}

\abstract{
We define coined Quantum Walks on the infinite rooted binary tree given by unitary operators $U(C)$ on an associated infinite dimensional Hilbert space, depending on a unitary coin matrix $C\in U(3)$, and study their spectral properties. For circulant unitary coin matrices $C$, we derive an equation for the Carath\'eodory function associated to the spectral measure of a cyclic vector for $U(C)$. This allows us to show that for all circulant unitary coin matrices, the spectrum of the Quantum Walk has no singular continuous component. Furthermore, for coin matrices $C$ which are orthogonal circulant matrices, we show that  the spectrum of the Quantum Walk  is absolutely continuous, except for four coin matrices for which the spectrum of $U(C)$ is pure point.
}

\thispagestyle{empty}
\setcounter{page}{1}
\setcounter{section}{1}

\setcounter{section}{0}

\section{Introduction}

Simple or coined Quantum Walks are defined as the discrete dynamics of a particle with an internal degree of freedom, called coin state, on a graph. The dynamics of a simple Quantum Walk, QW for short, consist in the repeated action of the composition of a unitary coin matrix on the internal degree of freedom followed by a finite range shift on the graph, conditioned on the coin state. In other words, the shift makes the particle propagate on the graph, whereas the coin state somehow selects the direction of the motion. QWs of this sort are sometimes considered as quantum analogs of classical random walks on the underlying graph, see {\it e.g.} \cite{ADZ, Ke, Ko, V-A}. 
Consequently, QWs play an important role in computer science, in particular in the development of search algorithms by the quantum computing community, see \cite{AAKV, S, MNRS}. QWs have also been introduced to provide effective dynamics of physical quantum systems in certain asymptotic regimes. For example, quantum lattice gases, or the dynamics of an electron in a two dimensional random background potential submitted to a large perpendicular magnetic field can be described in terms of QWs. Also, the dynamics of atoms trapped in time dependent optical lattices, that of ions caught in suitably tuned magnetic Paul traps or the propagation of polarized photons in networks of waveguides  are experimentally well captured by deterministic or random QWs {\it e.g.} \cite{M, CC, Ketal, sciarrino, Zetal}. These models naturally led to the study of random QWs, \cite{KLMW, ABJ2, J4, J5, HJ2}. Further generalizations have been proposed, notably extensions from the unitary framework to completely positive maps \cite{AAKV, Gu, APSS} defining open QWs, and from stationary to time-dependent QWs \cite{AVWW, J3, HJ}. 
The popularity of QWs in different fields illustrated by this non exhaustive list is certainly due to their flexibility in modeling and to the tractable, yet non trivial, mathematical analyses of their transport and spectral properties that their structure allows.
\medskip

In this paper, we construct QWs on rooted binary trees and we address their spectral properties in a stationary, deterministic and homogeneous setup:

On the one hand, we provide a general definition of  coined QWs on the infinite rooted binary tree, which, despite the interest of this particular infinite graph for many applications, does not seem to appear in the literature. Informally, the QW describes the dynamics of a particle with coin state in $\C^3$ on the binary tree, with certain boundary conditions at the root. The shift makes the particle jump from a site of the rooted binary tree to the nearest neighbors of this site, and the coin state update is performed by a matrix $C\in U(3)$.  The resulting QW is denoted here by the unitary operator $U(C)$ on the associated Hilbert space, where the coin matrix $C$ appears as a parameter. The precise construction of $U(C)$ is provided in Section 2, see Definition \ref{bc1}, together with some of its symmetry properties. 

On the other hand, we address the spectral properties of the QW when the coin matrix $C$ belongs to $U(3)\cap \mbox{Circ} (3)$, the set of circulant unitary matrices introduced in Section 3. We make use of the structure of the tree to derive an equation for the Carath\'eodory function of the spectral measure of a cyclic vector for $U(C)$ in Section 4, Theorem \ref{maintech}. This technical result allows us discuss some spectral features of the corresponding QWs, as stated in Corollary \ref{critacucirc} and Theorem \ref{mainthm}. In particular, we prove the absence of singular continuous spectrum for $U(C)$, when $C\in U(3)\cap \mbox{Circ} (3)$. Moreover, when $C\in O(3)\cap \mbox{Circ} (3)$, the set of orthogonal circulant matrices, we prove that $U(C)$ has purely absolutely continuous spectrum, unless $C$ is a permutation matrix, up to the sign, distinct from the identity. In the latter cases the spectrum of $U(C)$ consists in six infinitely degenerate eigenvalues. 

Let us note that there exist constructions of QWs on binary trees, \cite{Detal}, which, however, do not have the simple structure of coined QWs, and therefore lack the quantum mechanical interpretation of discrete dynamics of a particle with internal degree of freedom (or spin). Our definition is based on \cite{HJ2} which considers random coined QWs on the full {homogeneous tree} of coordination number 3, with different families of boundary conditions that are discussed below. An analogous special symmetric QW on the homogeneous tree was constructed \cite{CHKS}, with the main difference that the repeated action of the coin state conditioned shift alone does not induce propagation on the tree. 

While the results of \cite{HJ2} concern the localization-delocalization spectral transition in a random framework, they don't address the deterministic homogeneous QW on the tree. In the analogous self-adjoint setup where the operator of interest is the Laplacian on the tree, the spectral analysis is performed either by explicit diagonalization via the Fourier-Helgason transform, by reduction to an infinite direct  sum  of one-dimensional Jacobi matrices, or by computation of the resolvent operator, making use of the symmetries of the Laplacian, see {\em e.g.} \cite{CdVT}, \cite{Si}. We follow the last mentioned route in our analysis of $U(C)$. Even though we cannot prove it, we expect the spectrum of $U(C)$ to be absolutely continuous for any $C\in U(3)$ that is not a permutation matrix, up to phases, distinct from the identity.

{\bf Acknowledgements} A.J. wishes to thank Eman Hamza for many useful discussions at an early stage of this project.

\section{Quantum Walks on the Binary Tree}\label{setup}

We recall here the part of the general framework developed in \cite{HJ2} which is relevant for the definition of QWs on binary trees. We start with the definition of QWs on the homogeneous tree $\cT_3$, of coordination number equal to 3.

\subsection{Homogeneous Tree}

Let $\cT_3$ be the tree corresponding to the free group generated by 
\be
A_3=\{a,b,c\} \ \ \ \mbox{with}\ \ \ a^2=b^2=c^2=e, \ \  \ \ \mbox{$e$ the neutral element.}
\ee
Choose a vertex of $\cT_3$ to be the root of the tree, denoted by $e$. Each vertex $x=x_{1}x_{2}\dots x_{n}$, $n\in\N$ of $\cT_3$ is a reduced word made of finitely many letters from the alphabet $A_3$. Accordingly, an edge of $\cT_3$ consists in a pair of vertices $(x,y)$ such that $xy^{-1}\in A_3$. This last relation defines nearest neighbors in $\cT_3$ and any vertex has thus $3$ nearest neighbors. Any pair of vertices $x$ and $y$ can be joined by a unique set of edges, or path of $\cT_3$. The distance $|x|$ of a vertex $x=x_{1}x_{2}\dots x_{n}$ to the root  is $n$  and we denote by $d(x,y)$ the distance between two arbitrary vertices.
Given the order $A_3=\{a, b, c\}$, the sequence $xa$, $xb$, $xc$ of nearest neighbors of  any $x$, is ordered around $x$ in the positive orientation. By iteration, this provides a unique numbering of the vertices of $\cT_3$, which we identity with $\cT_3$, see  Figure \ref{T3}. 

\begin{figure}[htbp]
   \begin{center}
      \includegraphics[scale=.3]{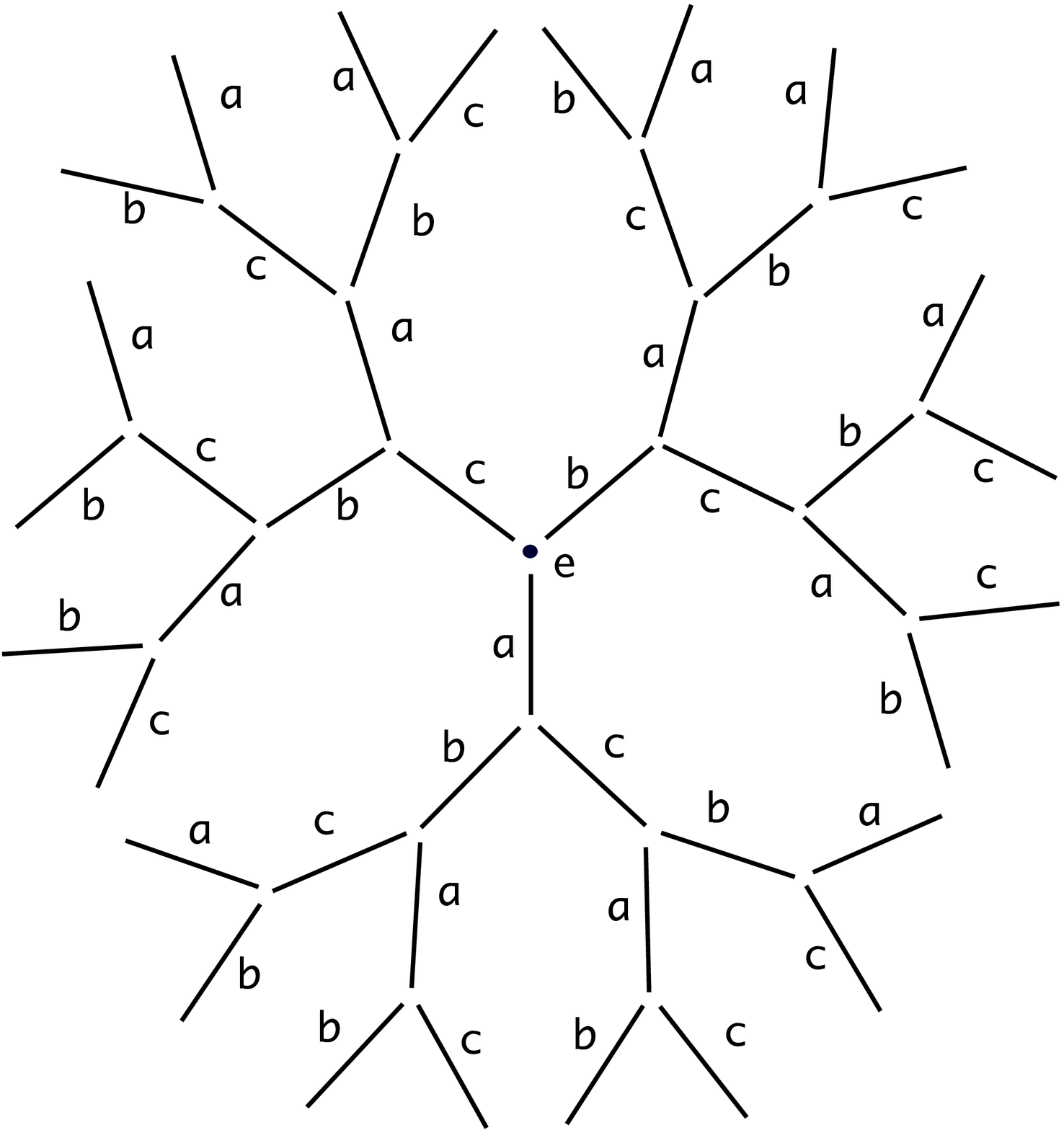}
   \end{center}
   \caption{\footnotesize Construction of $\cT_3$.}\label{T3}
\end{figure}

The Hilbert space $\cK_3$ of the QW on the homogeneous ternary tree $\cT_3$ consists in two parts. The configuration part of the Hilbert space of the QW  is defined by
\be
l^2(\cT_3)=\Big\{\psi=\sum_{x\in \cT_3}\psi_x |x\ket \ \mbox{s.t.} \  \psi_x\in \C,  \ \sum_{x\in \cT_3}|\psi_x|^2<\infty\Big\},
\ee
where $|x\ket$ denotes the element of the canonical basis of $l^2(\cT_3)$ sitting at vertex $x$. The internal degree of freedom of the quantum walker, aka spin or coin state, lives in $\C^3$, the coin Hilbert space, so that the total Hilbert space of the walker is 
\be\label{K3}
\cK_3=\cT_3\otimes \C^3.
\ee 
The canonical basis of the coin Hilbert space is labelled by the same symbols and is thus given by the ordered set $\{|a\ket, |b\ket,|c\ket\}$. Finally,  we denote the corresponding canonical basis of $\cK_3$ by
\be
\big\{x\otimes a\equiv |x\ket\otimes |a\ket ,\ \ x\in \cT_3, a\in A_3\big\}.
\ee

The dynamics of the quantum walker we consider, dubbed a coined QW, is given by the composition of a unitary update of the coin variables in $\C^3$ followed by a coin state dependent shift on the tree. 

\medskip

Let $C\in U(3)$ be a unitary matrix. The unitary update operator defined by $\I\otimes C$ which acts on the canonical basis of $\cK_3$ as
\be\label{reshuffle}
(\I\otimes C) x\otimes \tau=|x\ket\otimes |C\tau\ket=\sum_{\sigma \in A_3} C_{\sigma \tau}\, x\otimes \sigma,
\ee
where $\{C_{\sigma\tau}\}_{(\sigma,\tau)\in A_3^2}$ denotes the matrix 
$C$ in the ordered canonical basis of $\C^3$ above. 
In order to express the coin state -dependent shift $S$ on $\cK_3=\cT_3\otimes\C^3$,
three shifts on $l^2(\cT_3)$ are introduced. Let $x_e$, respectively $x_o$, denote vertices at even, respectively odd distance of the root. Such vertices will be called odd sites, respectiveley even sites in the sequel. For $a\neq b\in A_3$, we define $S_{ab}$ on $l^2(\cT_3)$ by
\be\label{sab}
{S_{ab}=\sum_{x_e \in \cT_q} |x_ea\ket\bra x_e |+ \sum_{x_o \in \cT_q} |x_o b\ket\bra x_o |. }
\ee
The operator $S_{ab}$ is unitary and such that $S_{ab}^*=S_{ab}^{-1}=S_{ba}$. The following immediate property justifies the name shift for $S_{ab}$. For each $x\in \cT_q$, consider $\cH^{ab}_x$ the $S_{ab}$-cyclic subspace generated by $|x\ket$, 
\be\label{hab}
\cH^{ab}_x=\mbox{span }\big\{S_{ab}^n |x\ket, \ n\in\Z\big\}\subset l^2(\cT_3).
\ee
\begin{lem}
The subspace $\cH^{ab}_x$ is isomorphic to $l^2(\Z)$ and $S_{ab}$ is unitarily equivalent to the shift on $l^2(\Z)$.
\end{lem}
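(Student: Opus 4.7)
The plan is to identify the orbit of $|x\ket$ under $S_{ab}$ with the canonical basis of $l^2(\Z)$. First, I read off from (\ref{sab}) that $S_{ab}$ permutes the canonical basis $\{|y\ket\}_{y\in\cT_3}$ of $l^2(\cT_3)$: writing $S_{ab}|y\ket=|\sigma_{ab}(y)\ket$, the map $\sigma_{ab}:\cT_3\to\cT_3$ sends each vertex $y$ to the nearest neighbor of $y$ labeled $a$ if $|y|$ is even, and to the one labeled $b$ if $|y|$ is odd. In particular, $\sigma_{ab}(y)$ always has distance to the root of parity opposite to that of $y$, and $\sigma_{ab}^{-1}=\sigma_{ba}$, because $S_{ab}^{-1}=S_{ba}$.

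The key step is to show that $\sigma_{ab}$ has no periodic point, so that $\{\sigma_{ab}^n(x)\}_{n\in\Z}$ is a sequence of pairwise distinct vertices of $\cT_3$. Suppose, for contradiction, that $\sigma_{ab}^n(x)=x$ with $n\geq 1$ minimal. Set $x_k:=\sigma_{ab}^k(x)$ for $k=0,\ldots,n$; this is a closed walk of positive length along nearest neighbors of $\cT_3$. Because $\cT_3$ is a tree, any closed walk of positive length contains an immediate backtrack, i.e.\ an index $i$ with $x_{i+1}=x_{i-1}$: a walk without backtracks is easily seen by induction on its length to be a geodesic between its endpoints, but the only geodesic from a vertex to itself has length zero. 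Writing $y:=x_i$, a backtrack at position $i$ means $\sigma_{ab}(y)=\sigma_{ab}^{-1}(y)=\sigma_{ba}(y)$. However, $\sigma_{ab}(y)$ and $\sigma_{ba}(y)$ select the nearest neighbors of $y$ via the two distinct letters $a$ and $b$ (the relevant letter being dictated by the parity of $y$ alone, which is the same in both formulas), hence are distinct vertices of $\cT_3$, a contradiction.

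It follows that $\{S_{ab}^n|x\ket\}_{n\in\Z}$ is an orthonormal family in $l^2(\cT_3)$ and, by the definition (\ref{hab}), an orthonormal basis of $\cH^{ab}_x$. Defining $U:l^2(\Z)\to\cH^{ab}_x$ by $U|n\ket:=S_{ab}^n|x\ket$ on the canonical basis of $l^2(\Z)$ and extending linearly yields a unitary isomorphism. Letting $\fS$ denote the standard right shift $|n\ket\mapsto|n+1\ket$ on $l^2(\Z)$, one checks $U\fS|n\ket=S_{ab}U|n\ket$ for every $n\in\Z$, so $S_{ab}|_{\cH^{ab}_x}=U\fS U^{-1}$, giving the desired unitary equivalence. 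The main obstacle is the periodic-orbit argument, which rests on both the tree structure of $\cT_3$ and on the fact that the two distinct letters $a,b$ always point to two distinct nearest neighbors.
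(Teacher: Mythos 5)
Your proof is correct: the paper states this lemma as an ``immediate property'' and offers no proof, and your argument supplies exactly the detail being left to the reader — since $S_{ab}$ permutes the canonical basis via $y\mapsto ya$ or $y\mapsto yb$ according to parity, a periodic orbit would give a closed walk of positive length in the tree $\cT_3$, hence a backtrack $\sigma_{ab}(y)=\sigma_{ba}(y)$, which is impossible because the two distinct letters $a\neq b$ point to two distinct nearest neighbors of $y$; thus the orbit vectors $\{S_{ab}^n|x\rangle\}_{n\in\Z}$ form an orthonormal basis of $\cH^{ab}_x$ and conjugate $S_{ab}|_{\cH^{ab}_x}$ to the bilateral shift on $l^2(\Z)$. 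No gaps; your implicit reading that ``unitarily equivalent to the shift'' refers to the restriction of $S_{ab}$ to the cyclic subspace $\cH^{ab}_x$ is the intended one, consistent with how the lemma is used later for the restrictions of $U_a(\un)$.
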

With $S_{bc}$ and $S_{ca}$ defined similarly on $l^2(\cT_3)$,  the coin state dependent shift on $\cK_3$ is defined as  
\be\label{circa}
S=S_{bc}\otimes |a\ket\bra a|+S_{ca}\otimes |b\ket\bra b|+S_{ab}\otimes |c\ket\bra c|=\sum_{\circa}S_{ab}\otimes |c\ket\bra c|,
\ee
where $\circa$ indicates that all circular permutations the ordered set $\{a,b,c\}$ appear.
The unitary operator $U(C)$ describing the coined QW on $\cK_3$ is defined as
$
U(C)=S(\un\otimes C),
$
where the coin matrix $C\in U(3)$ is viewed as a parameter of the QW. More explicitly,  the action of $U(C)$ reads 
with (\ref{sab}),
\be\label{cx}
U(C)=\sum_{\circa}\left(\sum_{x_e \in \cT_q} |x_ea\ket\bra x_e |\otimes |c\ket\bra c| C+ \sum_{x_o \in \cT_q} |x_o b\ket\bra x_o |\otimes |c\ket\bra c| C\right) .
\ee
\begin{rem} For coined QWs defined on general homogeneous trees $\cT_q$, with $q\geq 3$ and more properties, see \cite{HJ2}
\end{rem}

Let us note here one symmetry property of the model. Other symmetries are expressed in Proposition \ref{symcyc}. 
For $z\in \cT_3$, let  $T_z$ be the isometric simply transitive map $\cT_3\ra \cT_3$ defined by $T_z x=zx$. Using the same notation for the corresponding operator acting on $l^2(\cT_3)$, we have that $T_z^{-1}=T_{z^{-1}}=T_z^*$ on $l^2(\cT_3)$. For $a, b \in A_3$
\bea\label{transl}
T_z^* S_{ab} T_z= S_{ab} \ \ \mbox{if $|z|$ is even}, \ \ \ \
T_z^* S_{ab} T_z= S_{ba} \ \ \mbox{if  $|z|$ is odd}.
\eea
Similar results hold for the other shifts. In particular, extending $T_z$ to $\cK_3$, we have
\be\label{comtz}
[S, T_z\otimes \un]=0, \ \ \mbox{if $|z|$ is even}.
\ee

\begin{rem}
The notation (\ref{cx}) allows us to consider that each site $x$ of the tree carries a coin matrix $C(x)\in U(3)$, which, in this case, is identical on each site $C(x)=C$ for all $x\in \cT_3$. For later purposes, following \cite{JM}, \cite{J4}, \cite{HJ}, we consider $\cC=\{C(x)\in U(3), x\in \cT_3\}$ a collection of coin matrices and consider
\be\label{ucc}
U(\cC)=\sum_{\circa}\left(\sum_{x_e \in \cT_q} |x_ea\ket\bra x_e |\otimes |c\ket\bra c| C(x_e)+ \sum_{x_o \in \cT_q} |x_o b\ket\bra x_o |\otimes |c\ket\bra c| C(x_0)\right) .
\ee
This is a well defined unitary operator on $\cK_3$, for any collection $\cC$.
\end{rem}

\subsection{Rooted  Binary Trees}

Making use of definition (\ref{ucc}), boundary conditions which preserve unitarity and restrain the configuration space of the motion of the walker can be defined, see \cite{HJ2}. In particular, the motion of the walker can be confined to the rooted binary tree $\cT_B$, with associated Hilbert space $\cK_B$ we now describe.

\medskip

Denote by ${\mathfrak S}_3$ the set of all permutations of the labels of $A_3$ and let 
 $\pi=(acb)\in {\mathfrak S}_3$ be the anti-cylic permutation. Consider the corresponding permutation matrix in the ordered basis $\{|a\ket, |b\ket, |c\ket\}$
 \be
 C_{\pi}=\begin{pmatrix}
 0 & 1 & 0\cr 
 0 & 0 & 1\cr
 1 & 0 & 0
 \end{pmatrix}.
 \ee
 Let $C\in U(3)$ be given and let $e\in \cT_3$ be the root.  We  define a site-dependent collection of matrices $\cC_{e}=\{C(x)\in U({3})\}_{x\in\cT_3}$ by
\be\label{cepi}
C(x)=\left\{\begin{matrix} 
C_{\pi} & \mbox{if} \ |x|\leq 1 \cr
C\ \ & \mbox{otherwise.\ \ \  \ \ }
\end{matrix}\right.
\ee
and consider $U(\cC_{e})$ defined by (\ref{ucc}).
As observed in \cite{HJ2}, the subspace
\bea\label{he}
\cH_{e}=
{\mbox{span }}\{e\otimes a, a\otimes c, e\otimes b, b\otimes a, e\otimes c, c\otimes b\},
\eea
is invariant under $U(\cC_{e})$ and 
$
\sigma(U(\cC_{e})|_{\cH_{e}})=\{1, e^{i\pi/3}, \cdots, e^{i5\pi /3}\}.
$ Moreover, the three infinite dimensional subspaces $\cH^{a}$, $\cH^{b}$ and $\cH^{c}$ given by
\be\label{ha}
\cH^{a}=\mbox{span }\{a \otimes a, a \otimes b\}\cup\{ay\otimes a, ay\otimes b, ay\otimes c\}_{ |ay|>|y|\geq 1}
\ee
and by circular permutation of the indices for $\cH^{b}$ and $\cH^{c}$,  are all invariant under $U(\cC_{e})$. This is due to the fact that the QW couples nearest neighbors on $\cT_3$ only, and that the subspaces $\cH^\#$, $\#\in\{a,b,c\}$, are separated by $\cH_e$ which is invariant.
Actually, each of the subspaces $\cH^\#$ is a direct sum of two infinite dimensional subspaces invariant under $U(\cC_{e})$, as easily checked.
\begin{lem} The following decomposition holds
\be\label{haa}
\cH^{a}=\cH_{a\otimes a}\oplus \cH_{a\otimes b},
\ee
where the subspaces $\cH_{a\otimes a}$ and $\cH_{a\otimes b}$  invariant under $U(\cC_{e})$ and given
\bea
\cH_{a\otimes a}&=&\mbox{span }\{a \otimes a\}\cup\{aby\otimes a, aby\otimes b, aby\otimes c\}_{ |aby|\geq |y|+2\geq 2}\nonumber\\
\cH_{a\otimes b}&=&\mbox{span }\{a \otimes b\}\cup\{acy\otimes a, acy\otimes b, acy\otimes c\}_{ |acy|\geq|y|+2\geq 2}.
\eea
Permutation of indices yield similar invariant decompositions for $\cH^{b}$ and $\cH^{c}$.
\end{lem}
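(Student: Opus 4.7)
The plan is to verify directly, from the definitions \fer{cepi}--\fer{ucc} and the shift rules \fer{sab}--\fer{circa}, the three assertions: that $\cH_{a\otimes a}$ and $\cH_{a\otimes b}$ are mutually orthogonal, that their sum equals $\cH^{a}$, and that each is $U(\cC_{e})$-invariant. The first two claims are combinatorial: any word $y$ with $|y|\geq 1$ that does not start with $a$ starts either with $b$ or with $c$; reducedness of $y=by''$ (resp.\ $y=cy''$) forces $y''$ not to start with $b$ (resp.\ with $c$), so $ay=aby''$ lies in the vertex set of $\cH_{a\otimes a}$ (resp.\ $ay=acy''$ lies in that of $\cH_{a\otimes b}$). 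The two vertex sets are manifestly disjoint, and together with the natural splitting of the pair $\{a\otimes a,\,a\otimes b\}$ they partition the basis of $\cH^{a}$ listed in \fer{ha}.

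For the invariance, it suffices to compute $U(\cC_{e})$ on each basis vector. On $a\otimes a$, the coin $C(a)=C_{\pi}$ produces the intermediate state $a\otimes c$; since $a$ is an odd site, the shift sends the component $|c\ket$ by the letter $b$, giving $ab\otimes c\in\cH_{a\otimes a}$. The computation on $a\otimes b$ is the same with $b$ and $c$ interchanged and yields $ac\otimes a\in\cH_{a\otimes b}$. On a basis vector $aby\otimes \tau$ with $|aby|\geq |y|+2\geq 2$, the coin $C$ produces $\sum_{\sigma}C_{\sigma\tau}\,aby\otimes \sigma$, each summand of which is shifted to $aby\ell\otimes \sigma$ for a letter $\ell=\ell(\sigma,|aby|)\in A_{3}$ read off \fer{sab}--\fer{circa}.

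The remaining point is the observation that the three nearest neighbors of any such vertex $aby$ are its two children --- automatically of the form $aby''$ with $y''$ not starting with $b$ --- and its unique parent, which is $aby_{1}\cdots y_{n-1}$ for $|y|\geq 1$ (again of the required form) or $a$ itself for $y=e$. The only delicate case is thus $aby=ab$: this is an even site, and inspection of the shift rules shows that the particle is sent to $a$ precisely when the post-coin state is $|a\ket$; since the moving coin state is preserved by the shift, the image is $a\otimes a\in\cH_{a\otimes a}$. The analogous treatment of $\cH_{a\otimes b}$ is obtained by interchanging $b$ and $c$ throughout --- the shift from $ac$ returns to $a$ precisely when the coin is $|b\ket$, landing on $a\otimes b$ --- and the analogous decompositions for $\cH^{b}$ and $\cH^{c}$ follow from identical computations after cyclic relabeling of $A_{3}$.

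The main (and minor) obstacle is aligning the parity conventions in \fer{sab} with the specific permutation $C_{\pi}$ of \fer{cepi} so that the two pointer states $a\otimes a$ and $a\otimes b$ are routed into the sibling sub-trees rooted respectively at $ab$ and $ac$, and remain decoupled under the back-and-forth motion near these two vertices. Once this is checked by the four short computations above (two outgoing moves from $a$, and two incoming moves from $ab$ and $ac$), the invariance of both summands follows from the nearest-neighbor structure of the walk and the fact that the sub-tree rooted at $ab$ (as a child of $a$) is exactly $\{aby'':y''\text{ reduced, not starting with }b\}$.
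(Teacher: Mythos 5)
Your verification is correct: the combinatorial splitting of the vertex set and the four key computations (the outgoing moves $a\otimes a\mapsto ab\otimes c$, $a\otimes b\mapsto ac\otimes a$ and the returns $ab\otimes a$-component $\mapsto a\otimes a$, $ac\otimes b$-component $\mapsto a\otimes b$) are exactly what is needed, and they match the shift and coin conventions of the paper. The paper itself leaves this lemma "as easily checked", invoking only the nearest-neighbor structure of the walk, so your direct inspection is precisely the intended argument, spelled out.
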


Let us focus on the index $a$.  We denote by $U_a(C)$ the restriction $U(\cC_{e})|_{\cH^a}$ that we view as a QW on a binary tree $\cT_B^a$ with root $a$ going forward in the direction $a$, with coin space of dimension $3$ over each site of this rooted tree, except over the root $a$ where the coin space is of dimension $2$.
In other words
\be
 \cT^a_B=\{a\}\cup_{ |ay|> |y|\geq 1}\{ay\} 
 \ee
with corresponding Hilbert space $l^2(\cT^a_B)$ and  $\cK^a_B=\cH^a$ which depend on $a$, as subsets of $l^2(\cT_3)$ and $\cK_3$. See figure \ref{binarytree}.
\begin{figure}[htbp]
   \begin{center}
      \includegraphics[scale=.4, angle=-90]{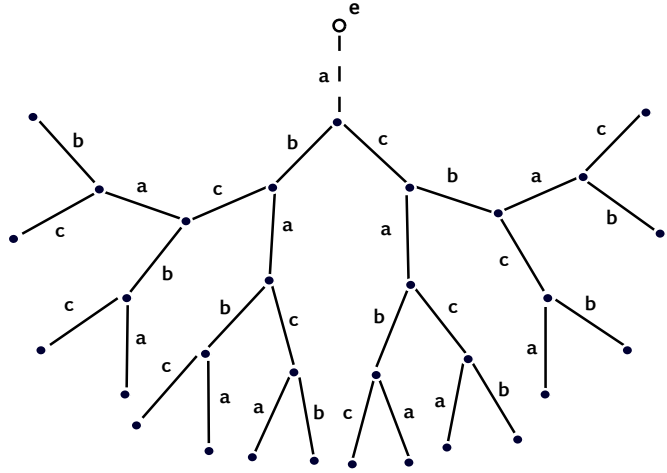}
   \end{center}
   \caption{\footnotesize The binary tree $\cT^a_B$, with sites indicated as black dots.}\label{binarytree}
\end{figure}
By definition, the coin matrix at each site of $\cT^a_B$ is given by $C$, except at the root which carries a two-dimensional coin state, where the boundary condition states that the update of the two basis coin states $|a\ket, |b\ket$ is carried out by means of $C_{\pi}$: 
\be
U_a(C) \ a\otimes a= ab\otimes c \ \ \ \mbox{and} \ \ \
U_a(C) \ a\otimes b= ac\otimes a, 
\ee
where
\be
\cK^a_B=\C_{a}^2\oplus_{y\in \cT^a_B\setminus \{a\} }\C_{y}^3.
\ee
Taking into account the finer decomposition (\ref{haa}), we define
\be
 \cT^{ab}_B=\{a\}\cup_{ |aby|\geq |y|+2 \geq 2}\{aby\}
 \ee
 the tree rooted at $a$ such that $a$ has coordination number one to $ab$, and all other sites have coordination number $3$,  see Figure \ref{binaries}.
 \begin{figure}[htbp]
   \begin{center}
      \includegraphics[scale=.4, angle=-90]{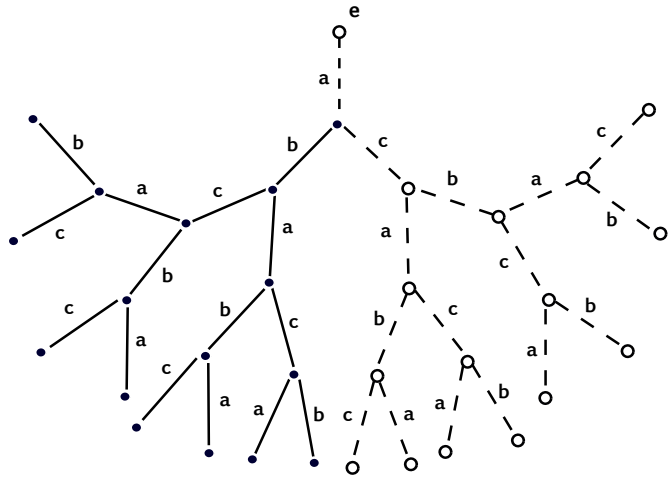}
   \end{center}
   \caption{\footnotesize The rooted tree $\cT^{ab}_B$, with sites indicated as black dots.}\label{binaries}
\end{figure}
 The corresponding  configuration and total
 Hilbert spaces are $l^2(\cT^{ab}_B)$ and $ \cK^{ab}_B=\cH_{a\otimes a}$ respectively, where
 \be
 \cK^{ab}_B=\C_a\oplus_{y\in \cT^{ab}_B\setminus \{a\}}\C_{y}^3,
 \ee 
 which are subsets of $l^2(\cT_3)$ and $\cK_3$. The tree $ \cT^{ac}_B$ and Hilbert spaces $l^2(\cT^{ac}_B)$ and  $\cK^{ac}_B=\cH_{a\otimes b}$ are defined similarly. We view $U_{ab}(C)=U(\cC_{e})|_{\cH_{a\otimes a}}$ and $U_{ac}(C)=U(\cC_{e})|_{\cH_{a\otimes b}}$ as QWs on $\cK_B^{ab}$ and $\cK_B^{ac}$, such that
 \be\label{deco}
 U_a(C)=U_{ab}(C)\oplus U_{ac}(C) \ \ \mbox{on} \ \ \cK_B^{a}=\cK_B^{ab}\oplus\cK_B^{ac}.
 \ee
 The boundary condition at the root $a$ of $\cT^{ab}_B$ and $\cT^{ac}_B$ then read 
 \be
 U_{ab}(C) a\otimes a= ab\otimes c, \ \ \ U_{ac}(C) a\otimes b= ac\otimes a.
 \ee
 This yields the 
\begin{definition}\label{bc1}
A coined QW on the  rooted tree $\cT^{ab}_B$ is defined by $U_{ab}(C)$ on $\cK_B^{ab}$, whereas a QW on the binary tree $\cT^a_B$ is defined by $U_a(C)$ on $\cK^a_B$.
 \end{definition}
\begin{rems}{\ \\}
i) The QW defined by $U_a(C)$ is a direct sum of independent QWs on $\cK_B^{ab}$ and $\cK_B^{ac}$, according to (\ref{deco}). We discuss  ways to couple them at the root in section \ref{fbc}.\\
ii) Similar interpretations hold for the restrictions $U_b(C)=U(\cC_{b})|_{\cH^b}$ and $U_c(C)=U(\cC_{c})|_{\cH^c}$.\\
iii) While we will not consider such generalizations, it is possible to decorate the entries of the matrix $C_\pi$ which define the boundary conditions by independent phases.  
\end{rems}

The QWs $U_\#(C)$ defined on $\cK_B^\#$ for $\#\in\{a,b,c\}$ are related to one another as the following proposition shows.
\begin{prop}\label{symcyc} Let $\sigma=(abc) \in \fS_3$ and $C_\sigma \in U(3)$ the corresponding permutation matrix. 
There exists a unitary operator $V$ on $\cK_3$ such that $V^3=\un$ and
\bea
V(\cK_B^\#)&=&\cK_B^{\sigma(\#)}, \ \ \ \mbox{for all $\#\in\{a,b,c\}$,}\nonumber \\
U_\#(C)&=&V^{-1}U_{\sigma(\#)}(C_\sigma CC^{-1}_\sigma)V.
\eea
In particular, we have on $\cK_3$,
\be
U( C )=V^{-1}U(C_\sigma CC^{-1}_\sigma)V.
\ee 
Also, with $V_a=T_a^{-1}VT_a$, one has $V_a(\cK_B^{ab})=\cK_B^{ac}$ and
\be
U_{ab}( C )= V_a^{-1}U_{ac}(C_\sigma CC^{-1}_\sigma)V_a.
\ee
\end{prop}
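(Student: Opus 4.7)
The plan is to build $V$ as the natural unitary induced by the three-cycle $\sigma=(abc)$ acting simultaneously on the tree and on the coin space. Concretely, let $\tilde V$ be the unitary on $l^2(\cT_3)$ defined on the canonical basis by relabeling each letter, $\tilde V|x_1 x_2\cdots x_n\rangle = |\sigma(x_1)\sigma(x_2)\cdots\sigma(x_n)\rangle$. This is well defined because $\sigma$ is a bijection of $A_3$, preserves length, fixes the root, and sends reduced words to reduced words. Set
\[
V \;=\; \tilde V\otimes C_\sigma.
\]
Since $\sigma^3=e$ and hence $\tilde V^3=\un$ and $C_\sigma^3=\un$, we get $V^3=\un$ at once.

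The main computation is that $V$ commutes with the shift $S$. Term by term, using (\ref{transl}) at the level of letters one checks $\tilde V^{-1}S_{ab}\tilde V = S_{\sigma^{-1}(a)\sigma^{-1}(b)}$, while $C_\sigma^{-1}|c\rangle\langle c|C_\sigma = |\sigma^{-1}(c)\rangle\langle\sigma^{-1}(c)|$. So for instance $V^{-1}(S_{ab}\otimes|c\rangle\langle c|)V = S_{ca}\otimes|b\rangle\langle b|$, and the three cyclic terms in (\ref{circa}) are merely permuted among themselves, giving $V^{-1}SV=S$. For the coin-update factor $\sum_x|x\rangle\langle x|\otimes C(x)$ of $U(\cC_e)$, conjugation by $V$ yields $\sum_y|y\rangle\langle y|\otimes C_\sigma^{-1}C(\tilde\sigma(y))C_\sigma$. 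Because $|\tilde\sigma(y)|=|y|$ and $C_\pi=C_\sigma^{-1}$ commutes with $C_\sigma$, the boundary block $C_\pi$ at $|y|\le 1$ is preserved, while at interior sites the coin $C$ is replaced by $C_\sigma^{-1}CC_\sigma$. Taking $\tilde C=C_\sigma C C_\sigma^{-1}$, this identifies $V^{-1}U(\tilde\cC_e)V=U(\cC_e)$, which is the $\cK_3$ statement $U(C)=V^{-1}U(\tilde C)V$ in the homogeneous case and directly gives the global intertwining for the site-dependent case as well.

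The identities $V\cK_B^\#=\cK_B^{\sigma(\#)}$ are immediate from the action on the generating vectors: $V(a\otimes a)=b\otimes b$, $V(a\otimes b)=b\otimes c$, and $V(ay\otimes\tau)=b\tilde\sigma(y)\otimes\sigma(\tau)$, each of which lies in $\cK_B^b$ by the explicit description (\ref{ha}). Unitarity of $V$ and $V^3=\un$ promote these inclusions to equalities. Restricting the global intertwining to these invariant subspaces delivers $U_\#(C)=V^{-1}U_{\sigma(\#)}(\tilde C)V$.

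For the final claim about $V_a=T_a^{-1}VT_a$, the property $V_a^3=\un$ is automatic from $V^3=\un$. The image $V_a\cK_B^{ab}=\cK_B^{ac}$ is verified by chasing the three operators: $T_a$ sends the root $a$ to $e$ and $aby$ to $by$; $V$ relabels $b\mapsto c$ on the tree and $\tau\mapsto\sigma(\tau)$ on the coin; $T_a^{-1}$ reinstates the prefix $a$. This yields $V_a(a\otimes a)=a\otimes b$ and $V_a(aby\otimes\tau)=ac\tilde\sigma(y)\otimes\sigma(\tau)$, which match the generating vectors of $\cK_B^{ac}$ described in Lemma~\ref{haa}. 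The intertwining $U_{ab}(C)=V_a^{-1}U_{ac}(\tilde C)V_a$ is then checked directly on this basis: the boundary relations $U_{ab}(C)\,a\otimes a=ab\otimes c$ and $U_{ac}(\tilde C)\,a\otimes b=ac\otimes a$ correspond perfectly under $V_a$, while on an interior vector $aby\otimes\tau$ the coin-conjugation $\tilde C=C_\sigma CC_\sigma^{-1}$ compensates exactly the coin rotation in $V_a$, and the shift rules match because the letter appended by $S$ depends on $\mu$ by either $\sigma$ or $\sigma^{-1}$ (according to the parity of $|y|$), both of which commute with conjugation by $\sigma$. The main bookkeeping burden is only tracking how the parity of the distance to the root is preserved by both $V$ and $T_a$-conjugation; since $|\tilde\sigma(y)|=|y|$ and translations by $a$ shift all distances by $\pm1$ in a consistent way, this poses no real obstacle.
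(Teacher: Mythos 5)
Your proposal is correct and follows essentially the same route as the paper: you take $V=\Sigma\otimes C_\sigma$ with $\Sigma$ the letterwise relabeling of words, verify the intertwining by conjugating the shift and the coin-update factor (using $C_\sigma^{-1}C_\pi C_\sigma=C_\pi$ at the boundary), and map the invariant subspaces accordingly. The only difference is cosmetic — you show $V^{-1}SV=S$ and conjugate the coin factor separately, and you spell out the $V_a$ computation that the paper dismisses as "quite similar" — and your parity/commutation argument there is sound.
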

\begin{cor} For any $C\in U(3)$
\bea
&&\sigma(U( C ))= \sigma(U(C_\sigma CC^{-1}_\sigma)), \ \ \sigma(U_\#( C ))= \sigma(U_{\sigma(\#)}(C_\sigma CC^{-1}_\sigma)), \nonumber\\
&& \sigma(U_{ab}( C ))= \sigma(U_{ac}(C_\sigma CC^{-1}_\sigma)).
\eea
\end{cor}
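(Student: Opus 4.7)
The plan is to read the corollary as a routine consequence of Proposition \ref{symcyc}: each of the three spectral equalities translates one of the unitary equivalences stated there. Recall that if $A$ and $B$ are bounded operators on Hilbert spaces $\cH_A$ and $\cH_B$ respectively, and $W:\cH_A\to\cH_B$ is a unitary bijection with $A=W^{-1}BW$, then $\sigma(A)=\sigma(B)$, since $A-\lambda=W^{-1}(B-\lambda)W$ is invertible iff $B-\lambda$ is. So the whole proof is to point out where each of the three identities comes from.

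First, for the equality $\sigma(U(C))=\sigma(U(C_\sigma C C_\sigma^{-1}))$ on $\cK_3$, invoke the global relation $U(C)=V^{-1}U(C_\sigma C C_\sigma^{-1})V$ from Proposition \ref{symcyc} with $V$ unitary on $\cK_3$, and apply the invariance of the spectrum under unitary conjugation.

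Second, for $\sigma(U_\#(C))=\sigma(U_{\sigma(\#)}(C_\sigma C C_\sigma^{-1}))$, the key observation is that although $V$ is defined globally on $\cK_3$, the proposition asserts $V(\cK_B^\#)=\cK_B^{\sigma(\#)}$, so the restriction $V|_{\cK_B^\#}:\cK_B^\#\to\cK_B^{\sigma(\#)}$ is a unitary bijection between the two invariant subspaces. The identity $U_\#(C)=V^{-1}U_{\sigma(\#)}(C_\sigma C C_\sigma^{-1})V$, read on $\cK_B^\#$, then realizes $U_\#(C)$ as the unitary conjugate of $U_{\sigma(\#)}(C_\sigma C C_\sigma^{-1})$ by this restricted $V$, and again the spectra coincide.

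Third, for $\sigma(U_{ab}(C))=\sigma(U_{ac}(C_\sigma C C_\sigma^{-1}))$, exactly the same argument applies to the operator $V_a=T_a^{-1}VT_a$. Proposition \ref{symcyc} asserts $V_a(\cK_B^{ab})=\cK_B^{ac}$, so $V_a|_{\cK_B^{ab}}:\cK_B^{ab}\to\cK_B^{ac}$ is a unitary between the finer invariant subspaces arising in the decomposition (\ref{deco}), and the identity $U_{ab}(C)=V_a^{-1}U_{ac}(C_\sigma C C_\sigma^{-1})V_a$ yields the claimed spectral equality. There is no real obstacle here; the only thing to check is that the $V$ and $V_a$ supplied by Proposition \ref{symcyc} are honest unitary bijections of the relevant invariant subspaces, which is precisely what is asserted there, so the corollary is immediate. \qed
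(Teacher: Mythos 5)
Your proposal is correct and is exactly the intended argument: the paper states the corollary as an immediate consequence of Proposition \ref{symcyc}, since the unitary equivalences $U(C)=V^{-1}U(C_\sigma CC_\sigma^{-1})V$, $U_\#(C)=V^{-1}U_{\sigma(\#)}(C_\sigma CC_\sigma^{-1})V$ with $V(\cK_B^\#)=\cK_B^{\sigma(\#)}$, and $U_{ab}(C)=V_a^{-1}U_{ac}(C_\sigma CC_\sigma^{-1})V_a$ with $V_a(\cK_B^{ab})=\cK_B^{ac}$ directly give equality of spectra under unitary conjugation. Nothing further is needed.
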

\begin{rems}{\ \\}
i) All statements remains true if $\sigma$ is replaced by $\pi$, due to the relation $\sigma^2=\pi$.\\ 
ii) This symmetry shows that on the homogeneous tree, coin matrices that are unitarily equivalent by means of $C_\sigma$ or $C_\pi$  give rise to QWs with identical spectrum. On the binary tree, this property remains true provided the QW takes place on a different binary tree.  
\end{rems}
\begin{proof}
Let $\Sigma: \cT_3\mapsto \cT_3$ be defined by $\Sigma (x_1 x_2 \cdots x_n)=\sigma(x_1)\sigma(x_2)\cdots \sigma(x_n)$, for any reduced word $x_1 x_2 \cdots x_n\in \cT_3$, where $x_i\in A_3$. The inverse of $\Sigma$ is $\Pi$ s.t.  $\Pi (x_1 x_2 \cdots x_n)=\pi(x_1)\pi(x_2)\cdots \pi(x_n)$. We keep the same notation for the corresponding unitary operator on $l^2(\cT_3)$ defined by
\be
\Sigma |x_1 x_2 \cdots x_n\ket=|\sigma(x_1)\sigma(x_2)\cdots \sigma(x_n)\ket, 
\ee
for any basis vector $|x_1 x_2 \cdots x_n\ket$. The equivalent definition holds for $\Pi: l^2(\cT_3)\ra l^2(\cT_3)$. Then, by construction, $\Sigma \cT_B^\#=\cT_B^{\sigma(\#)}$, for all $\#\in\{a,b,c\}$. 
Consider $S_{ab}$. By looking at the action of the basis vectors $|x\ket\in l^2(\cT_3)$, one gets
$
S_{ab} \Sigma = \Sigma S_{\pi(a)\pi(b)} 
$
and similarly for circular permutations of indices. Define now the unitary operator on $\cK_3$
\be\label{defv}
V=\Sigma\otimes C_\sigma.
\ee
Restricting attention to vectors  in $\fB^+_2=\mbox{span }\{x\otimes \tau \ | \  x\in\cT_3, |x|\geq 2, \tau\in A_3\}$ where all coin matrices $U_\#(C)$ are equal to $C$, we have  
\bea
(S_{ab}\otimes |c\ket\bra c| C) V|_{\fB^+_2} &=&S_{ab}\Sigma \otimes  |c\ket\bra c| C C_\sigma|_{\fB^+_2}=\Sigma S_{\pi(a)\pi(b)} \otimes |\sigma(\pi(c))\ket\bra \sigma(\pi(c))| CC_\sigma|_{\fB^+_2}\nonumber\\
&=& (\Sigma \otimes C_\sigma ) S_{\pi(a)\pi(b)} \otimes |\pi(c)\ket\bra \pi(c)| C^{-1}_\sigma CC_\sigma|_{\fB^+_2}.
\eea
Summing over all permutations of the labels $\{a,b,c\}$, we obtain
\be\label{b2}
S (\un\otimes C)  (\Sigma\otimes C_\sigma)|_{\fB^+_2}=(\Sigma \otimes C_\sigma ) S (\un\otimes C^{-1}_\sigma C C_\sigma)|_{\fB^+_2}.
\ee
This argument actually shows that for $U(C)$ on $\cK_3$ , we have 
\be
U( C )=V^{-1}U(C_\sigma CC^{-1}_\sigma)V,
\ee and that 
\be
U_\#( C )|_{B^+_2\cap \cK_B^\#}=V^{-1}U_{\sigma(\#)}(C_\sigma CC^{-1}_\sigma)V|_{B^+_2\cap \cK_B^\#}.
\ee 
Consider now the action of $U_\#(C)$ on  the root of $\cK_B^\#$, {\em i.e.} on the vector  $\#\otimes \tau$, with $\#\in\{a,b,c\}$ and coin state $\tau\in\{\#,\sigma(\#)\}$. 
Let us compute $U_{\sigma(\#)}(C)V (\#\otimes \tau)$.  Since the coin matrix on the root is $C_\pi$, we have 
\bea
(S_{ab}\otimes |c\ket\bra c| C_\pi) V (\#\otimes \tau)&=&(S_{ab}\otimes |c\ket\bra c| C_\pi) (\sigma(\#)\otimes \sigma(\tau))=\sigma(\#)b \otimes |c\ket\bra c| C_\pi \sigma(\tau)\ket 
\nonumber\\
&=&   (\Sigma S_{\pi(a)\pi(b)} \#)\otimes (|\sigma(\pi(c))\ket\bra\sigma(\pi(c))|C_\pi\sigma(\tau)\ket\nonumber\\
&=& V (S_{\pi(a)\pi(b)}\otimes |\pi(c)\ket\bra \pi(c)| C^{-1}_\sigma C_\pi C_\sigma ) (\#\otimes \tau).
\eea
Summing over the permutations of $\{a,b,c\}$, and noting that $C^{-1}_\sigma C_\pi C_\sigma =C_\pi$ we have
\be
U_{\sigma(\#)}(C') V(\#\otimes \tau)=VU_\#(C'') (\#\otimes \tau).
\ee
Hence, for any $\#\in\{a,b,c\}$, and any $C\in U(3)$,
\be
U_\#(C)=V^{-1}U_{\sigma(\#)}(C_\sigma CC^{-1}_\sigma)V
\ee
on $\cK_B^{\#}$, with $V(\cK_B^{\#})=\cK_B^{\sigma\#}$. The proof of the statement about $U_{ab}( C )$ is quite similar.
\ep
\end{proof}

\subsection{Boundary Conditions}\label{fbc}

For illustration purposes, we introduce here a one-parameter families of boundary conditions at the root of the binary tree $\cT_B^a$ showing how to couple the two invariant subtrees $\cK_B^{ab}$ and $\cK_B^{ac}$. There are of course other possibilities.

\medskip

Consider 
\be
C_a^\theta=\begin{pmatrix}
\sin(\theta) & \cos(\theta) &  0 \cr
0 & 0& 1\cr
 \cos(\theta) & -\sin(\theta) & 0
\end{pmatrix}, 
C_b^\theta=\begin{pmatrix}
0& \cos(\theta) &  -\sin(\theta)  \cr
0 & \sin(\theta) &\cos(\theta) \cr
1 & 0& 0\cr
\end{pmatrix}, 
C_c^\theta=\begin{pmatrix}
0 & 1& 0\cr
 -\sin(\theta)&0& \cos(\theta)  \cr
\cos(\theta)  & 0 & \sin(\theta) 
\end{pmatrix}
\ee
and let $\cC^\theta={C^\theta(x)}_{x\in \cT_3}$ given by
\be
C^\theta(x)=\left\{\begin{matrix} 
C_\pi & \mbox{if} \ x=e \cr
C^\theta_{x} & \mbox{if} \ |x|= 1 \cr
C\ \ & \mbox{otherwise. \ \ \  \ \ }
\end{matrix}\right.
\ee
Direct computations establish the following.
\begin{lem}
The subspaces
$
\cH_{e}
$ (\ref{he})
and $\cH^{a}=\cK_B^a$ (\ref{ha})
are invariant under $U(\cC^\theta)$, for all $\theta\in \T$. Moreover, with 
\be
U^\theta_a(C)=U(\cC^\theta)|_{\cK_B^a},
\ee 
the boundary conditions at the root $a\in \cK_B^a$ read
\bea
U^\theta_a(C) \ a\otimes a&=& \cos(\theta) ab\otimes c + \sin(\theta)ac\otimes a \nonumber \\ 
U^\theta_a(C) \ a\otimes b&=& -\sin(\theta)ab\otimes c + \cos(\theta)ac\otimes a.
\eea
Finally,
\be
C_\sigma C^\theta_\# C^{-1}_{\sigma}=C^\theta_{\sigma(\#)}, \ \#\in\{a,b,c\},
\ee
so that for any $C\in U(3)$,  we have  $V(\cK_B^{\#})=\cK_B^{\sigma(\#)}$
\be
U^\theta_\#(C)=V^{-1}U^\theta_{\sigma(\#)}(C_\sigma CC^{-1}_\sigma)V \ \mbox{on $\cK_B^{\#}$.
}
\ee
 \end{lem}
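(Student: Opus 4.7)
The plan is to treat the four assertions as direct computations, organized in three stages. The collection $\cC^\theta$ differs from $\cC_{e}$ of (\ref{cepi}) only at the distance-one sites $\#\in\{a,b,c\}$, so invariance of the two subspaces under $U(\cC^\theta)$ needs to be checked only on basis vectors based at $e$ and at distance one, plus the distance-two vectors whose shift image can return to distance one; deeper, $U(\cC^\theta)$ coincides with $U(\cC_{e})$, whose invariance of $\cH_{e}$ and $\cH^{a}$ has already been recorded.

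For $\cH_{e}$, the three vectors $e\otimes a$, $e\otimes b$, $e\otimes c$ are handled as before since the coin at $e$ is still $C_\pi$. For the remaining vectors $\#\otimes \pi(\#)$ with $\#\in\{a,b,c\}$, a direct reading of $C^\theta_\#$ yields $C^\theta_\# |\pi(\#)\ket = |\sigma(\#)\ket$; the coin state $\sigma(\#)$ then triggers the shift sending the odd site $\#$ back to $e$, producing $e\otimes \sigma(\#)\in \cH_{e}$. For $\cK_B^{a}$, I would compute the action on $a\otimes a$ and $a\otimes b$ as in the next step, and check that the distance-two vectors $ab\otimes \tau$ and $ac\otimes \tau$ return to $a$ only with coin states $a$ or $b$: indeed, the shifts taking $ab$ and $ac$ back to $a$ are $S_{bc}$ and $S_{ca}$, triggered respectively by coin states $a$ and $b$, so the returning component lies in $\mbox{span}\{a\otimes a, a\otimes b\}\subset \cK_B^a$.

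For the boundary conditions at $a$, I read off $C^\theta_a|a\ket = \sin(\theta)|a\ket+\cos(\theta)|c\ket$ and $C^\theta_a|b\ket = \cos(\theta)|a\ket-\sin(\theta)|c\ket$; applying the shift at the odd site $a$, so that $S_{bc}(a)=ac$ on the $|a\ket$ component and $S_{ab}(a)=ab$ on the $|c\ket$ component, yields the two stated formulas. The conjugation identity $C_\sigma C^\theta_\# C^{-1}_\sigma = C^\theta_{\sigma(\#)}$ is the $3\times 3$ matrix statement that conjugation by the permutation matrix $C_\sigma$ relabels entries by $(\tau_1,\tau_2)\mapsto (\sigma^{-1}(\tau_1),\sigma^{-1}(\tau_2))$, and a direct entry-by-entry comparison of the explicit forms of $C^\theta_a, C^\theta_b, C^\theta_c$ confirms it.

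Granted this identity, the intertwining $U^\theta_\#(C)=V^{-1}U^\theta_{\sigma(\#)}(C_\sigma C C^{-1}_\sigma) V$ follows by running the proof of Proposition \ref{symcyc} verbatim: the computation on $\fB^+_2\cap \cK_B^\#$ is unchanged, and at the distance-one sites the role of the identity $C^{-1}_\sigma C_\pi C_\sigma = C_\pi$ used there is now played by $C^{-1}_\sigma C^\theta_{\sigma(\#)} C_\sigma = C^\theta_\#$, producing the same cancellation. I do not expect any serious obstacle; the matrices $C^\theta_\#$ have been engineered precisely so that these compatibility conditions hold. The one delicate point is bookkeeping the parity of the source site when evaluating the shifts $S_{ab}, S_{bc}, S_{ca}$, since this controls whether a given coin component moves deeper into the tree or returns toward the root, and this is what makes the invariance of $\cK_B^a$ go through.
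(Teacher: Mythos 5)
Your proposal is correct and follows the same route as the paper, which simply asserts that "direct computations establish" the lemma: you carry out exactly those computations (action of the coins $C_\pi$, $C^\theta_\#$ followed by the parity-dependent shifts for the invariance and boundary conditions, entry-wise verification of $C_\sigma C^\theta_\# C^{-1}_\sigma=C^\theta_{\sigma(\#)}$, and rerunning Proposition \ref{symcyc} with $C^{-1}_\sigma C^\theta_{\sigma(\#)}C_\sigma=C^\theta_\#$ replacing $C^{-1}_\sigma C_\pi C_\sigma=C_\pi$ at the root). I verified the individual claims (e.g.\ $C^\theta_\#|\pi(\#)\rangle=|\sigma(\#)\rangle$, the returning components $a\otimes a$, $a\otimes b$ from $ab$, $ac$) and they are accurate.
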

\begin{rem}\label{uatheta}
By construction, $U^\theta_a(C)$ is a rank two perturbation of $U_a(C)$.
\end{rem}

Let us finally mention that $U^\theta_\#(C)$ 
is continuous in $C\in U(3)$ in the following sense. For all $C, C'\in U(3)$
\be
\|U^\theta_\#(C)-U^\theta_\#(C')\|\leq \|C-C'\|_{\C^3}.
\ee

In order to study the spectral properties of $U_a(C)=U_{ab}(C)\oplus U_{ac}(C)$ on the binary tree $\cK_B^{a}=\cK_B^{ab}\oplus\cK_B^{ac}$, we can restrict attention to the spectral measure of cyclic vectors generating $\cK_B^{ab}$ and $\cK_B^{ac}$. Such a vector exists when all matrix elements of $C$ are all different from zero. This  is true in particular for the circulant matrices we study below.
\begin{lem} If $C\in U(3)$ is such that $C_{\tau,\sigma}\neq 0$ for all $\sigma, \tau$,  the vector $a\otimes a$ at the root of $\cK_B^{a}$ is cyclic for $U_{ab}(C)$
and $a\otimes b$ is cyclic for $U_{ac}(C)$.
\end{lem}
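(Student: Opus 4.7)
\emph{Proof proposal.} The plan is to show, by induction on the depth $d$ from the root $a$ in $\cT^{ab}_B$, that every basis vector $aby\otimes \tau$ of $\cK^{ab}_B$ belongs to the closed cyclic subspace
\[
\cC \;:=\; \overline{\operatorname{span}}\bigl\{U_{ab}(C)^n(a\otimes a) : n\in\Z\bigr\}\subset \cK^{ab}_B.
\]
Since such basis vectors are total in $\cK^{ab}_B$, this forces $\cC=\cK^{ab}_B$ and yields cyclicity of $a\otimes a$ for $U_{ab}(C)$. The companion statement that $a\otimes b$ is cyclic for $U_{ac}(C)$ follows either by the parallel argument based on the root identity $U_{ac}(C)(a\otimes b)=ac\otimes a$, or by transport through the unitary intertwiner $V_a$ of Proposition~\ref{symcyc}, which maps $\cK_B^{ab}$ onto $\cK_B^{ac}$ and conjugates $U_{ab}(C)$ into $U_{ac}(C_\sigma CC_\sigma^{-1})$.

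The base case $d=0$ is trivial. For $d=1$, the boundary identity gives $U_{ab}(C)(a\otimes a)=ab\otimes c\in\cC$, and the factorization $U_{ab}(C)^{-1}=(I\otimes C^{-1})S^{-1}$ combined with $S^{-1}(a\otimes a)=ab\otimes a$ yields
\[
U_{ab}(C)^{-1}(a\otimes a)\;=\;\sum_{\tau\in A_3}\overline{C_{a\tau}}\,ab\otimes \tau\;\in\;\cC,
\]
a second independent vector in the level-$1$ coin fiber. To obtain the third independent direction, I would extract from the level-$2$ components of $U_{ab}(C)^{\pm 2}(a\otimes a)$ (which sit on the children of $ab$) at least one pure basis vector $abc\otimes b$ or $aba\otimes c$, and then pull it back to level $1$ via $U_{ab}(C)^{-1}$: one computes $U_{ab}(C)^{-1}(abc\otimes b)=\sum_\tau\overline{C_{b\tau}}\,ab\otimes\tau$ and $U_{ab}(C)^{-1}(aba\otimes c)=\sum_\tau\overline{C_{c\tau}}\,ab\otimes\tau$. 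The three level-$1$ vectors produced in this way form the rows of $C^*$, whose unitarity guarantees full rank, so $ab\otimes a$, $ab\otimes b$, $ab\otimes c$ all lie in $\cC$.

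The inductive step $d\mapsto d+1$ follows the same pattern. Given all basis vectors at depth $\leq d$ in $\cC$, one applies $U_{ab}(C)$ to each depth-$d$ vector $x\otimes \sigma$ (which, since $|x|\geq 2$, uses the site coin matrix $C$); subtracting the depth-$(d-1)$ contribution (in $\cC$ by the inductive hypothesis) isolates, for each $\sigma$, a combination supported on the two children of $x$. Running this over $\sigma\in A_3$ and over all depth-$d$ vertices, and then applying $U_{ab}(C)^{-1}$ to the newly produced elements of $\cC$ at level $d+2$, yields a linear system for the depth-$(d+1)$ basis vectors whose coefficient matrices are $3\times 3$ submatrices of $C$ and $C^*$. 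The hypothesis $C_{\sigma\tau}\neq 0$ for all $\sigma,\tau$, together with the unitarity of $C$, makes these matrices non-singular, so each depth-$(d+1)$ basis vector can be solved for in $\cC$.

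\emph{Principal difficulty.} The delicate point is the extraction step at depth $1$: one must verify that combining the level-$2$ components of $U_{ab}(C)^{\pm 2}(a\otimes a)$ with the elements of $\cC$ already constructed really produces individual basis vectors $abc\otimes b$ and $aba\otimes c$, rather than merely trapping $\cC$ in a linear combination of them, and consequently that the matrix of the three candidate level-$1$ vectors is non-singular. The same style of linear-independence check recurs at every depth in the induction, but it is governed uniformly by unitarity of $C$ plus the non-vanishing of all entries $C_{\sigma\tau}$; these two properties should imply invertibility of the relevant $3\times 3$ submatrices and thereby close the induction.
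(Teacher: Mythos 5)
Your overall strategy --- an induction on depth showing that every basis vector of $\cK_B^{ab}$ lies in the cyclic subspace $\cC$ generated by $a\otimes a$ --- is legitimate and would, if completed, prove the lemma. But the argument has a genuine gap exactly at its load-bearing step, and the gap is not cosmetic. From $U_{ab}(C)^{2}(a\otimes a)=C_{ac}\,a\otimes a+C_{bc}\,abc\otimes b+C_{cc}\,aba\otimes c$ you obtain, modulo vectors already in $\cC$, only the single combination $C_{bc}\,abc\otimes b+C_{cc}\,aba\otimes c$, not the individual basis vectors $abc\otimes b$ and $aba\otimes c$; the backward iterate gives another combination, and whether these, pushed back to level $1$ by $U_{ab}(C)^{-1}$, produce three linearly independent level-$1$ vectors is precisely the determinant computation you defer. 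The blanket assertion that ``unitarity of $C$ together with $C_{\sigma\tau}\neq 0$ for all $\sigma,\tau$'' forces the relevant $3\times 3$ matrices to be non-singular is not a theorem: a matrix assembled from rows of $C$ and $C^{*}$ with all entries nonzero can perfectly well be singular, so each such matrix must be written down and its invertibility verified. Since the same unverified non-singularity claim recurs at every depth, the induction does not close.

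For comparison, the paper sidesteps this linear algebra entirely with a first-arrival argument: because every entry of $C$ is nonzero, the iterate $U_{ab}(C)^{|x|-1}(a\otimes a)$ reaches the site $x$ for the first time, along the unique path from the root, with amplitude on the coin state determined by the incoming edge equal to a product of nonzero entries of $C$ --- no cancellation is possible since there is only one path --- and two further iterations reach $x$ along its other two edges, giving nonzero overlaps with the remaining two coin states. The point of ``first arrival'' is that at step $|x|-1$ no earlier iterate has any component at $x$, which is what allows one to conclude without inverting any system. If you wish to keep your membership-in-$\cC$ formulation, the efficient repair is to order the basis vectors by the step at which the orbit first reaches them and exploit the resulting triangular structure, rather than attempting to invert $3\times 3$ blocks at every level.
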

\begin{proof} Consider $U_{ab}(C)$. It is enough to show that for all $x\otimes\tau\in \cK_B^{ab}$, there exists $j\in \Z$ s.t. $\bra x\otimes\tau |U_{ab}^j(C)a\otimes a\ket\neq 0$. Since all matrix elements of $U_{ab}(C)$ are non-zero, $|x|-1$ iterations of $U_{ab}(C)$ on $a\times a$  allow to reach the site $x\in \cT_B^{ab}$ along a specific edge which determines the coin state above this site. Two more iterations allow to reach $x$ again along the other two edges connected to $x$, which yield non zero components along the other two coin states above the same site $x$. 
\ep
\end{proof}

\section{Circulant Unitary and Orthogonal Matrices}

We shall restrict attention to the set of circulant coin matrices on $\C^3$, $\mbox{Circ}(3)$,  which allows for some simplifications in the analysis of the resolvent of $U_\#(C)$. Circulant coin matrices are such that the three QWs $U_\#(C)$ defined on $\cK_B^\#$ are unitarily equivalent and admit a convenient parametrization.
\medskip

We denote the set of $3\times 3$ unitary, respectively orthogonal, matrices by $U(3)$, respectively $O(3)$. 
Also, $\mbox{Circ}(3)$ denotes the set of $3\times 3$ circulant matrices in $M_3(\C)$,
\be
\mbox{Circ}(3)=\left\{\begin{pmatrix}
c_0 & c_2 & c_1 \cr c_1 & c_0 & c_2 \cr c_2 & c_1 & c_0
\end{pmatrix}=\mbox{circ}(c_0,c_1,c_2), \ c_j\in\C, j=0,1,2.\right\}.
\ee

Let us recall some properties of circulant unitary or orthogonal matrices to be used later.

\begin{lem} We have
\be
\{C\in M_3(\C) \ | \ C=C_\sigma C C^{-1}_{\sigma}\}=\mbox{\em Circ}(3)
\ee
For all $C\in \mbox{\em Circ}(3)\cap U(3)$ with $\sigma(C)=\{e^{i\theta_j}\}_{j=0,1,2}$, it holds with $\eps=e^{i2\pi/3}$,
\be \label{ccircu}
C=\frac13\mbox{\em circ}(e^{i\theta_0}+e^{i\theta_1}+e^{i\theta_2}, e^{i\theta_0}+\eps^2e^{i\theta_1}+\eps e^{i\theta_2}, e^{i\theta_0}+\eps e^{i\theta_1}+\eps^2e^{i\theta_2}).
\ee
Moreover, if $c_j=0$ for some $j=0,1,2$, then $C\in e^{i\alpha}\{\un, C_\sigma, C_\pi\}$, for some $\alpha\in \R$.
\bea\label{param12}
&&\mbox{\em Circ}(3)\cap O(3):=\mbox{\em CO}_+(3)\cup \mbox{\em CO}_-(3)=\nonumber\\
&&\left\{\mbox{\em circ}(c_0,c_1,c_2), \ \mbox{s.t.}\ \begin{pmatrix} c_0(t) \cr c_1(t) \cr c_2(t)  \end{pmatrix} =
\frac13\begin{pmatrix} 1+\sin(t)+\sqrt3\cos(t) \cr 1+\sin(t)-\sqrt3\cos(t) \cr 1-2\sin(t)   \end{pmatrix} t\in [0,2\pi)\right\}\cup
\nonumber\\
&&\left\{\mbox{\em circ}(c_0,c_1,c_2),  \ \mbox{s.t.}\ \begin{pmatrix} c_0(t) \cr c_1(t) \cr c_2(t)  \end{pmatrix} =
\frac13\begin{pmatrix} -1+\sin(t)+\sqrt3\cos(t) \cr -1+\sin(t)-\sqrt3\cos(t) \cr -1-2\sin(t)   \end{pmatrix} t\in [0,2\pi)\right\}.
\eea
\end{lem}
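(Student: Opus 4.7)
The plan is to treat the four assertions separately. For the centralizer characterization, I would write a generic $C = (c_{ij}) \in M_3(\C)$ and expand $C_\sigma C = C C_\sigma$: left multiplication by the cyclic permutation $C_\sigma$ permutes the rows of $C$ cyclically, right multiplication permutes its columns cyclically, so the commutation forces $c_{ij}$ to depend only on $(j-i) \bmod 3$, which is exactly the circulant form. The reverse inclusion is immediate since every circulant matrix is a polynomial in $C_\sigma$ and therefore commutes with $C_\sigma$.

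For the formula (\ref{ccircu}) I would use that the Fourier basis $v_k = (1, \epsilon^{-k}, \epsilon^{-2k})^{\top}/\sqrt{3}$, $k=0,1,2$, simultaneously diagonalizes every element of $\mbox{Circ}(3)$, with eigenvalues depending linearly on $(c_0, c_1, c_2)$ through a discrete Fourier transform. Unitarity of $C$ forces these eigenvalues to be of the form $e^{i\theta_k}$, and inverting the transform via the orthogonality of characters of $\Z/3\Z$ recovers the announced closed formula for $(c_0, c_1, c_2)$ in terms of $(e^{i\theta_0}, e^{i\theta_1}, e^{i\theta_2})$.

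For the vanishing claim, each $c_j$ in (\ref{ccircu}) is $\tfrac{1}{3}$ times a sum of three unit complex numbers, namely the $e^{i\theta_\ell}$ weighted by suitable powers of $\epsilon$. Three unit complex numbers sum to zero only if they are the vertices of an equilateral triangle centered at the origin, since three coincident summands summing to zero would force each to be $0$. So $c_j = 0$ means the associated triple of summands equals $e^{i\alpha}\{1, \epsilon, \epsilon^2\}$ for some $\alpha \in \R$. Translated back to the spectrum of $C$, this either forces the three eigenvalues of $C$ to coincide --- which gives $C = e^{i\alpha} \I$ and can arise only from $c_1 = 0$ or $c_2 = 0$, not from $c_0 = 0$ --- or it forces the eigenvalues to be the non-degenerate triple $e^{i\alpha}\{1, \epsilon, \epsilon^2\}$, in which case plugging the six possible orderings of the eigenvalues into (\ref{ccircu}) pins $C$ down to $e^{i\alpha'} C_\sigma$ or $e^{i\alpha'} C_\pi$ depending on the cyclic order in which the eigenvalues are attached to the Fourier eigenvectors. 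Exhausting these cases yields the claim.

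For the parametrization (\ref{param12}) I would bypass the spectrum entirely and work directly from $CC^{\top} = \I$ on a real circulant $C = \mbox{circ}(c_0, c_1, c_2)$. The circulant structure reduces orthogonality to two scalar relations, $c_0^2 + c_1^2 + c_2^2 = 1$ (diagonal entries of $CC^\top$) and $c_0 c_1 + c_1 c_2 + c_2 c_0 = 0$ (off-diagonal entries). A direct $3 \times 3$ expansion of the determinant gives $\det C = c_0^3 + c_1^3 + c_2^3 - 3 c_0 c_1 c_2 = (c_0 + c_1 + c_2)(c_0^2 + c_1^2 + c_2^2 - c_0 c_1 - c_1 c_2 - c_2 c_0)$, which under those two conditions collapses to $c_0 + c_1 + c_2$. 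Hence orthogonality of $C$ amounts to $c_0^2 + c_1^2 + c_2^2 = 1$ together with $c_0 + c_1 + c_2 = \pm 1$: the intersection of the unit sphere with an affine plane $x+y+z = \pm 1$, a circle of radius $\sqrt{2/3}$ centered at $(\pm 1/3, \pm 1/3, \pm 1/3)$. Parametrizing it by $t \in [0, 2\pi)$ against the orthonormal basis $\tfrac{1}{\sqrt{2}}(1,-1,0)$, $\tfrac{1}{\sqrt{6}}(1,1,-2)$ of the plane $x+y+z=0$ produces exactly the two families (\ref{param12}). The main bookkeeping throughout is keeping the Fourier orderings in parts 2 and 3 consistent with the paper's specific layout of $\mbox{circ}(c_0, c_1, c_2)$ and with the choice $\epsilon = e^{2\pi i/3}$; no step presents a real obstacle.
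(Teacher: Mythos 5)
Your proof is correct and follows essentially the same route as the paper: the centralizer identity by direct computation, formula (\ref{ccircu}) by simultaneous diagonalization of circulants in the discrete Fourier basis, and the orthogonal case by reducing $CC^{\top}=\I$ for real $\vec c=(c_0,c_1,c_2)$ to the geometric conditions $\|\vec c\|=1$ and $c_0+c_1+c_2=\pm1$, i.e.\ the intersection of the unit sphere with the two planes orthogonal to $(1,1,1)$ passing through $\pm\frac13(1,1,1)$. The only differences are ones of detail, in your favor: you spell out the equilateral-triangle case analysis behind the vanishing claim and derive the explicit circle parametrization, both of which the paper's proof leaves as ``straightforward''.
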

\begin{rem}\label{copcom}  The two disjoint pieces of $\mbox{\em Circ}(3)\cap O(3)$ are related by the identities $c_j(t+\pi)=-c_j(t)$, $j=1,2,3$, so that one case can be deduced from the other.
\end{rem}
\begin{proof} The first statement is a computation. The second follows from the well known fact that all circulant matrices can be diagonalized by the same unitary change of basis. Explicitly here, with $W=\frac{1}{\sqrt3}\begin{pmatrix} 1&1&1 \cr 1&\eps&\eps^2\cr 1&\eps^2&\eps\end{pmatrix}$ and $\eps=e^{i2\pi/3}$,
\be
W^{-1}\mbox{ circ}(c_0,c_1,c_2)W=\mbox{diag}(c_0+c_1+c_2, c_0 + \eps c_1 + \eps ^2 c_2, c_0 + \eps^2 c_1 + \eps  c_2).
\ee
Hence these matrices are parameterized by their eigenvalues. Imposing three eigenvalues on the unit circle yields the result. The following property is straightforward whereas the last statement can be obtained by expressing the orthogonality condition on $\mbox{ circ}(c_0,c_1,c_2)$ into geometric conditions on the real vector $\vec c=(c_0,c_1,c_2)^T$: $\|\vec c\|=1$ and $\vec c \cdot R_{\vec n}(4\pi/3) \vec c=0$, where $\vec n=\frac{1}{\sqrt3}(1,1,1)^T$, and $R_{\vec n}(\theta)$ is the rotation of angle $\theta$ of axis $\vec n$. Hence   $\vec c$ belongs to the intersection of the unit sphere with two planes orthogonal to $\vec n$, and passing through the points $\pm\frac{1}{3}(1,1,1)$.\ep
\end{proof}

\subsection{Special Cases}
 
If $C\in \{C_\omega, | \omega \in \fS_3\}\cap \mbox{Circ}(3)=e^{i\delta}\{\un, C_\sigma, C_\pi\}$, $\delta\in \R$, we have full understanding of the spectrum of $U^\theta_a(C)$ 
on the binary tree $\cK_B^a$:  
\begin{prop} For all $\theta \in \T$,
\bea
\sigma(U^\theta_a(e^{i\delta}\un))&=&\sigma_{ac}(U^\theta_a(e^{i\delta}\un))=\Ss \nonumber\\
\sigma(U^\theta_a(e^{i\delta}C_\pi))&=&e^{i\delta}\{e^{ik2\pi/6}\}_{k=0,1,\dots, 5}\cup\sigma_d(U^\theta_a(e^{i\delta}C_\pi)) \nonumber\\
\sigma(U^\theta_a(e^{i\delta}C_\sigma))&=&e^{i\delta}\{e^{ik2\pi/6}\}_{k=0,1,\dots, 5}\cup\sigma_d(U^\theta_a(C_\sigma))\nonumber\\
e^{i\delta}\{e^{ik2\pi/6}\}_{k=0,1,\dots, 5}&=&\sigma_{ess}(U^\theta_a(e^{i\delta}C_\pi))=\sigma_{ess}(U^\theta_a(e^{i\delta}C_\sigma))\nonumber
\eea
If $\theta =\delta=0$, 
\be
\sigma_d(U^0_a(\pm C_\#))=\emptyset, \ \ \#\in \{\sigma, \pi\}.
\ee
\end{prop}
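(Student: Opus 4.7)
I would organize the proof by the three cases $C\in e^{i\delta}\{\un,C_\pi,C_\sigma\}$, exploiting the fact that for each of these coin matrices the coin update $(\un\otimes C)$ acts deterministically on the basis (as identity or cyclic permutation of $\{|a\ket,|b\ket,|c\ket\}$), so that $U_a^\theta(C)$ becomes a deterministic dynamics on basis vectors of $\cK_B^a$. The plan is to decompose $\cK_B^a$ into explicit invariant subspaces on which the restriction of $U_a^\theta(C)$ is immediate to analyze, and then handle $\theta\neq 0$ as a rank-two perturbation (Remark~\ref{uatheta}) via Weyl's theorem and Birman--Krein.

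For $C=e^{i\delta}\un$, I would first treat $\theta=0$, where $U_a^0=U_{ab}\oplus U_{ac}$ by (\ref{deco}). Since the coin state is preserved at interior sites, iterating $U_{ab}(e^{i\delta}\un)$ on any basis vector $x\otimes\tau$ traces an infinite path in $\cT_B^{ab}$ (and a distinct infinite path under $U_{ab}^{-1}$), so each cyclic subspace is isomorphic to $l^2(\Z)$. After a unitary change of basis that absorbs the phase $e^{i\delta}$, $U_{ab}(e^{i\delta}\un)$ restricted to each cyclic subspace becomes the bilateral shift, whose spectrum is $\Ss$ and purely absolutely continuous. Summing over the decomposition yields $\sigma(U_a(e^{i\delta}\un))=\sigma_{ac}(U_a(e^{i\delta}\un))=\Ss$. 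For $\theta\neq 0$, $U_a^\theta$ is a rank-two perturbation of $U_a$, so $\sigma_{ess}=\Ss$ by Weyl's theorem and $\sigma_{ac}=\Ss$ by Birman--Krein; to rule out the singular part I would analyze the finite-rank coupling at the root as a two-channel scattering problem on $l^2(\Z)\oplus l^2(\Z)$ and construct bounded generalized eigenfunctions for every $\lambda\in\Ss$, thereby excluding both $\ell^2$ eigenvectors and singular continuous spectrum.

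For $C=e^{i\delta}C_\pi$ with $\theta=0$, $C_\pi$ permutes $(a,b,c)$ cyclically, so $U_{ab}(e^{i\delta}C_\pi)$ acts as a deterministic signed shift on basis vectors. I would verify directly that every orbit of a basis vector supported at a vertex of distance $\geq 2$ from $e$ is a $6$-cycle visiting a central vertex $x$ with all three coin values plus each of the three neighbors of $x$ with one coin value; each of the six steps is interior and contributes a factor $e^{i\delta}$, giving $U^6=e^{i6\delta}\I$ on the cycle and eigenvalues $e^{i\delta}\{e^{ik\pi/3}\}_{k=0,\ldots,5}$. Infinitely many such cycles exist, so these six eigenvalues are infinitely degenerate and form $\sigma_{ess}$. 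The orbit through $a\otimes a$ is also a $6$-cycle (visiting $a$ once, $ab$ three times, $abc$ and $aba$ once each) but uses one boundary step, so at $\delta=0$ its accumulated phase agrees with that of the generic cycles and contributes no new eigenvalues, yielding $\sigma_d=\emptyset$. The case $C=e^{i\delta}C_\sigma$ is treated analogously, the root orbit now being the $2$-cycle $\{a\otimes a,\,ab\otimes c\}$ whose eigenvalues at $\delta=0$ are $\pm 1$, already inside the essential spectrum. For $\theta\neq 0$, Weyl preserves the essential spectrum, and any additional isolated eigenvalues coming from the modified boundary (together with the displaced root-cycle eigenvalues) compose $\sigma_d$, giving the stated decomposition $\sigma=\sigma_{ess}\cup\sigma_d$.

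The principal obstacle will be the absence of singular spectrum for $U_a^\theta(e^{i\delta}\un)$ at general $\theta$: Birman--Krein controls only $\sigma_{ac}$, so I would need to work explicitly on the $l^2(\Z)\oplus l^2(\Z)$ subspace carrying the rank-two coupling, either constructing polynomially bounded generalized eigenfunctions for every $\lambda\in\Ss$ or computing the resolvent there and verifying that its boundary values on the unit circle have no singularities, in order to exclude embedded eigenvalues and singular continuous spectrum. A secondary combinatorial point is the completeness of the $6$-cycle decomposition for the $C_\pi$ case: one checks by a counting argument that each interior vertex $x$ contributes three basis vectors, distributed one per cycle to the three $6$-cycles centered at the three neighbors of $x$, while the single basis vector at the root is absorbed into the root cycle centered at $ab$.
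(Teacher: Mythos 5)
Your treatment of the permutation-coin cases follows essentially the paper's route: the interior of the tree decomposes into six-dimensional invariant subspaces on which $U^6$ is a multiple of the identity (these supply the infinitely degenerate eigenvalues forming $\sigma_{ess}$), while the root contributes the twelve-dimensional invariant subspace (\ref{haay}) whose restriction accounts for $\sigma_d$; your explicit phase count for $\delta\neq 0$ (six interior steps each carrying $e^{i\delta}$ versus five interior steps plus one phase-free boundary step in the root cycle) replaces the paper's observation (\ref{phaserankone}) that $e^{-i\delta}U_a(e^{i\delta}C)$ is a rank-two perturbation of $U^\theta_a(C)$, and both are adequate. Your identification of the $C_\sigma$ root orbit as the two-cycle $\{a\otimes a,\ ab\otimes c\}$ with eigenvalues $\pm1$ also matches the paper's remark. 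For $C=e^{i\delta}\un$ at $\theta=0$ your decomposition into bilateral shifts is exactly the paper's argument via the cyclic subspaces (\ref{hab}) and the two explicit root chains.

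The genuine gap is the one you yourself flag: the absence of singular spectrum for $U^\theta_a(e^{i\delta}\un)$ when $\theta\neq 0$. Weyl and Birman--Krein give you $\sigma_{ess}=\sigma_{ac}=\Ss$ but say nothing about an embedded singular part, and your proposed remedy --- ``construct bounded generalized eigenfunctions for every $\lambda\in\Ss$'' --- does not by itself exclude singular continuous spectrum; boundedness of generalized eigenfunctions rules out $\ell^2$ eigenvectors but needs a subordinacy-type theorem or explicit resolvent control before it says anything about $d\mu_{sc}$. The paper closes this in two concrete steps: (i) each of the two shifts coupled at the root is unitarily equivalent to multiplication by $e^{ix}$ on $L^2(\T)$, whose symbol continues analytically to a complex neighborhood of $\T$, so the argument of Theorem 6.2 of \cite{BHJ} applies to the rank-two perturbation and yields $\sigma_{sc}=\emptyset$; (ii) writing the coupled system explicitly on $\overline{\mbox{span}}\{|n\rangle,\ n\in\Z\}$ with $U^{\theta}|2j\rangle=|2j+2\rangle$, $U^{\theta}|2j+1\rangle=|2j+3\rangle$ for $j\neq 0$ and the $\theta$-rotation at $j=0$, one sees that $U^\theta\psi=\lambda\psi$ with $|\lambda|=1$ forces $|\psi_n|$ to be eventually constant along each chain, so there is no nontrivial $\ell^2$ solution. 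Your second suggested strategy (computing the resolvent on the coupled $l^2(\Z)\oplus l^2(\Z)$ block and checking its boundary values) would also work, but as written it remains a plan rather than a proof, and it is precisely the step on which the first identity of the proposition rests.
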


\begin{rems}{\ \\} 
0) This proves the last statements of Theorem \ref{mainthm} below.\\
i)
If $C=C_\sigma$, on top of the six-dimensional invariant subspaces by 
$\cH_{T_za}$, $|z|$ even and s.t. $za\in \cT_B^a$, the following subspaces are invariant under $U_a(C_\sigma)$: $\mbox{span} \{a\otimes a, ab\otimes c\}$ and $\mbox{span} \{a\otimes b, ac\otimes a\}$.\\
ii) The discrete spectrum $\sigma_d(U^\theta_a(e^{i\delta}C_\#))$, $\#\in\{\pi, \sigma\}$ consist in twelve distinct eigenvalues at most and depends on $\theta$ in general.
\end{rems}
\begin{proof} \\ We start with $\delta=0$.
In case $C=\un$, $U_a(\un)$ acts as $S$ on sites  $x\otimes \tau \in \cK_B^a$ with $|x|\geq 2$. Therefore, all cyclic subspaces $\cH_y^{\#\sigma(\#)}$ of the form (\ref{hab}), with $\#\in\{a,b,c\}$, $y\in\cT_B^a$ are invariant under $U_a(\un)$, provided $a\otimes \tau\not\in \cH_y^{\#\sigma(\#)}$.  The restrictions of $U_a(\un)$ to these subspaces are all unitarily equivalent to a shift, so that $\sigma(U_a(\un))=\Ss$. Let $\cH_{a\otimes \tau}$ be the cyclic subspace generated by $a\otimes \tau$, $\tau\in\{a,b\}$. One has
\bea
\cH_{a\otimes a}&=&\overline{\mbox{span}}\{\cdots abcb\otimes a, abc\otimes a, ab\otimes a, a\otimes a, ab\otimes c, aba\otimes c, abab \otimes c, \cdots\}\nonumber \\
\cH_{a\otimes b}&=&\overline{\mbox{span}}\{\cdots acac\otimes b, aca\otimes b, ac\otimes b, a\otimes b, ac\otimes a, acb\otimes a, acbc \otimes a, \cdots\},
\eea
where the vectors are listed according to their image by $U_a(\un)$. Hence, the corresponding restrictions also give rise to shifts. Therefore, all restrictions to cyclic subspaces are absolutely continuous as well, which yields the result. \\
Consider now $U^{\theta}_a(\un)$. This operator differs from $U_a(\un)$ by a rank two perturbation which couples the two shifts induced by $U_a(\un)$ in $\cH_{a\otimes a}$ and $\cH_{a\otimes b}$. Each of these shifts is unitarily equivalent to a multiplication by $e^{ix}$ on $L^2(\T)$, in Fourier space, where $e^{ix}$ admits an analytic continuation  in a complex neighborhood of $\T$.  We can thus apply the argument of the proof of Theorem 6.2 in \cite{BHJ} to deduce that for any $\theta\in\T$, $\sigma_{sc}(U^{\theta}_a(\un))=\emptyset$.\\
Finally, a direct argument proves the absence of eigenvalues. Namely, mapping $\cH_{a\otimes a}$, respectively $\cH_{a\otimes b}$, to $\overline{\mbox{span}}\{|2j\ket, j\in \Z\}$, respectively $\overline{\mbox{span}}\{|2j+1\ket, j\in \Z\}$, with $|0\ket=a\otimes a$ and $|1\ket=a\otimes b$, the restriction $U^{\theta}=U^{\theta}_a(\un)|_{\cH_{a\otimes a}\oplus\cH_{a\otimes b}}$  reads
\bea
&&U^{\theta}|2j\ket=|2j+2\ket, \ \ U^{\theta}|2j+1\ket=|2j+3\ket, \forall j\neq 0\nonumber\\
&&U^{\theta}|0\ket= \cos(\theta)|2\ket +\sin(\theta)|3\ket, \ \ U^{\theta}|1\ket= -\sin(\theta)|2\ket+\cos(\theta)|3\ket. 
\eea
The eigenvalue equation $U^\theta \psi=\lambda\psi$ with $|\lambda|=1$ has no non trivial  $l^2$ solution, whereas the restrictions of $U^{\theta}_a(\un)$ to the cyclic subspaces $\cH_y^{\#\sigma(\#)}$ yields absolutely continuous shifts.\\
We address now $C\in\{C_\pi, C_\sigma\}$.  The essential spectrum of $U_a^\theta(C_\pi), U_a^\theta(C_\sigma)$ is dealt with as in \cite{HJ2}. The discrete spectrum comes from the restrictions to the 12 dimensional invariant subspaces generated by vectors at the root. We consider $U_a^\theta(C_\pi)$ only, the other case being similar, and simply check that
\bea\label{haay}
&&\cH_{a\otimes a}=
\mbox{span}\{a\otimes a, ab\otimes c, abc\otimes b, ab\otimes a, aba\otimes c, ab\otimes b, \\ \nonumber
&& \phantom{yyyyyyyyyyyyyyyyyyyyyyy}
ac\otimes a, aca\otimes c, ac\otimes b, acb\otimes a, ac\otimes c, a\otimes b\}\nonumber
\eea
is invariant. 
When $\delta\neq 0$, we note that 
\bea\label{phaserankone}
e^{-i\delta}U_{ab}(e^{i\delta}C)&=&U_{ab}(C)+(e^{-i\delta}-1)U_{ab}(C)|a\otimes a\ket\bra a\otimes a |\nonumber\\
e^{-i\delta}U_{ac}(e^{i\delta}C)&=&U_{ac}(C)+(e^{-i\delta}-1)U_{ac}(C)|a\otimes b\ket\bra a\otimes b| 
\eea
which together with (\ref{deco}), Remark \ref{uatheta} shows that 
$e^{-i\delta}U_a(e^{i\delta}C)$ is a rank two perturbation of both $U^\theta_a(C)$ 
for any value of $\theta$ and $\alpha$, such that the range of the perturbation is spanned by $\{ab\otimes c, ac\otimes a\}$. This is enough to get the result.
\ep
\end{proof}

\section{Spectral Analysis}

\subsection{Spectral Measure}

The spectral properties of a unitary operator $U$ on a Hilbert space $\cH$ admitting a normalized cyclic vector $\ffi$ can be read off the spectral measure of this vector,  $d\mu(\cdot)$, on the circle $\T$. We recall the properties of such probability measures we shall use below. For proofs, see {\em e.g.} \cite{Si}.

\medskip

Consider the decomposition of the measure into its absolutely continous and singular part with respect to the Lebesgue measure on $\T$ 
\be 
d\mu(\theta)=\frac{w(\theta)}{2\pi}d\theta+d\mu_s(\theta).
\ee
The Carath\'eodory function of $d\mu$ is defined for all $\D=\{z\ |\ |z|<1\}\subset \C$ by
\be\label{car}
F(z)=\int_\T\frac{e^{i\theta}+z}{e^{i\theta}-z}d\mu(\theta)
\ee
and satisfies $F(0)=1, \mbox{Re} F(z)>0$. The boundary values of $F$ allow us to recover the measure according to 
\bea
&&\lim_{r\ra 1^-}F(re^{i\theta})=F(\theta) \ \ \mbox{exists $\frac{d\theta}{2\pi}$ a.e.}\\
&&w(\theta)=\mbox{Re} F(\theta)\\
&&d\mu_s  \  \mbox{is supported on }  \{ e^{i\theta} | \lim_{r\ra 1^-}\mbox{Re} F(re^{i\theta})=\infty\}\\ \label{atom}
&&\mu(\{\theta_0\})=\lim_{r\ra 1^-}\frac{1-r}{2}F(re^{i\theta_0}).
\eea

The Carath\'eodory function is related to the resolvent of $U$, $G(z)=(U-z)^{-1}$, and to $H(z)=U(U-z)^{-1}$ for $z\in \D$ by
\be\label{rescara}
F(z)=1+2z\bra \ffi |G(z) \ffi\ket=2\bra \ffi |H(z) \ffi\ket-1.
\ee
\begin{rem}\label{bddres}
If  $\lim_{r\ra1^-}\bra\ffi | H(re^{i\theta_0})\ffi\ket$ is bounded for $\theta_0\in \T$, then $\theta_0\not\in \mbox{supp } d\mu_s$.
\end{rem}

The main technical result of the paper leading to Theorem \ref{mainthm} below, reads as follows:
\begin{thm}\label{maintech}
Let $C\in \mbox{Circ}(3)\cap U(3)$ and write the Carath\'eodory function of the spectral measure $d\mu_{a\otimes a}$ as $F(z)=2g(z)-1$, for $z\in \D$. Then, there exists a
polynomial in $(g,x)$ of the form 
\be\label{implicit}
\Phi(g,x)=c_5(x)g^5+c_4(x)g^4+\cdots+c_0(x)
\ee
with $c_j$ of degree $\leq 3$ for $j=1, \dots, 5$, and $c_0$ of degree $\leq 2$, 
and 
\be\label{mabs}
M(g) = m_2g^2+ m_1g +1
\ee
with constant coefficients $m_2, m_1$
such that $g(z)$ satisfies 
\be\label{polphi}
\Phi(g(z),z^2)\equiv 0, \ \ \mbox{for all $z\in \D$ s.t.} \ \ M(g(z))\neq 0.
\ee
With the parametrization 
\be\label{para+1}
C = \mbox{\em circ}(\alpha, \gamma, \beta+1),
\ee
$M$ and $c_5$ read
\bea
M(g)&=&(\beta^2-\alpha\gamma)g^2+2\beta g+1,\\
\label{c5}
c_5(x)&=& -x^3\{(\beta^2-\alpha \gamma)^2 \nonumber\\
         &&+x^2\gamma(\beta^2-\alpha \gamma)(2 (\alpha^3+\beta^3+\gamma^3-3\alpha \beta\gamma)+3(\beta^2-\alpha\gamma)) \nonumber\\
&&+x(\alpha\beta+\alpha-\gamma^2)(\alpha^3+\beta^3+\gamma^3-3\alpha \beta\gamma+\beta^2-\alpha\gamma)\times\nonumber\\
&&\hspace{5cm}\times
(\alpha^3+\beta^3+\gamma^3-3\alpha \beta\gamma+3(\beta^2-\alpha\gamma)) \nonumber\\
&&+(\alpha+\beta+\gamma+1)(\alpha^2+\beta^2+\gamma^2-\beta\gamma-\alpha\gamma-\alpha\beta-\alpha-\gamma+2\beta+1)\times\nonumber\\
&&\hspace{5cm}\times(\alpha^3+\beta^3+\gamma^3-3\alpha \beta\gamma+\beta^2-\alpha\gamma)^2 \}.
\eea
\end{thm}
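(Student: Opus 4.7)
The plan is to exploit the self-similarity of the binary tree together with the circulant invariance of $C$ to reduce the computation of $g(z)$ to an algebraic problem. Set $\psi=(U_{ab}(C)-z)^{-1}(a\otimes a)$, so that $g(z)=1+z\psi_{a,a}$ by (\ref{rescara}). The equation $(U_{ab}(C)-z)\psi=a\otimes a$ reads, at the root $a$ and at $(ab,c)$ (using the boundary condition $U_{ab}(C)(a\otimes a)=ab\otimes c$),
\[
\alpha\psi_{ab,a}+(\beta+1)\psi_{ab,b}+\gamma\psi_{ab,c}-z\psi_{a,a}=1,\qquad \psi_{a,a}=z\psi_{ab,c},
\]
and at each interior vertex $x$ with $|x|\geq 2$ it reads $z\psi_{x,\tau}=\sum_\sigma C_{\tau\sigma}\psi_{y_\tau(x),\sigma}$, where $y_\tau(x)$ is the unique shift-predecessor of $(x,\tau)$; two of the three predecessors are children of $x$ and the third is its parent.

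Next, by Proposition \ref{symcyc} combined with the circulant identity $C_\sigma CC_\sigma^{-1}=C$, the Green function on every subtree hanging off a descendant of $ab$ is, up to a relabelling of coin states, independent of the chosen subtree and depends only on the parity of the depth. This permits the introduction of a scalar ``tail function'' $\Theta=\Theta(z)$ (or, more precisely, a short finite vector of such scalars) encoding the amplitude ratio between consecutive vertices down the tree, obtained by Schur reduction on the two infinite subtrees rooted at the children of $ab$. Because the shift swaps parity at every step, the self-consistency for $\Theta$ closes only after two applications of $U$: this forces the resulting equation—and hence the equation for $g$—to depend on $z$ only through $z^2=x$, explaining the appearance of $x$ in $\Phi$. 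The self-consistency is a matrix-quadratic (Riccati) equation whose elimination introduces a single denominator, which one can check is exactly $M(g)=(\beta^2-\alpha\gamma)g^2+2\beta g+1$.

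Substituting $\Theta$ back into the two root equations above and clearing denominators yields the polynomial relation $\Phi(g(z),z^2)\equiv 0$ of the stated form, valid wherever $M(g(z))\neq 0$. Degree counting gives total degree $5$ in $g$: two from the Riccati step for $\Theta$, two from the linear level-$2$ root equations, and one from the boundary identity $\psi_{a,a}=z\psi_{ab,c}$; the bound $\deg c_j\leq 3$ in $x$ follows by tracking how often $x$ enters during elimination. The main obstacle is the symbolic bookkeeping: one must propagate the $3\times 3$ circulant action through the recursion while preserving the $C_\sigma$-symmetry, then expand the resulting rational function of $(g,x)$ to verify the degree bounds. The explicit leading coefficient $c_5(x)$ emerges at the end organized by the natural circulant invariants, notably $\alpha^3+\beta^3+\gamma^3-3\alpha\beta\gamma=\det\mbox{circ}(\alpha,\beta,\gamma)$, the linear trace-type quantity $\alpha+\beta+\gamma+1$, and the off-diagonal cofactor $\beta^2-\alpha\gamma$; their repeated occurrence is dictated by the $C_\sigma$-invariance of every intermediate quantity.
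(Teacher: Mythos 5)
Your overall strategy (exploit the self-similar structure of the tree plus the circulant symmetry to close an algebraic equation for the root Green function) is the same one the paper follows, and your root equations and the identity $g(z)=1+z\psi_{a,a}$ are correct. But the heart of the argument is missing. The step you describe as introducing a ``tail function'' $\Theta$ by Schur reduction on the subtrees hanging below $ab$, with a self-consistency whose elimination ``one can check'' produces exactly the denominator $M(g)$, is precisely the part that needs proof, and as stated it does not go through naively: the subtree of $U_{ab}(C)$ below $ab$ carries the coin $C$ at its top vertex, whereas the operator $U_{ab}(C)$ itself carries the boundary coin $C_\pi$ at its root, so that subtree is \emph{not} an exact copy of $U_{ab}(C)$ and its Green function cannot simply be identified with $g$ (nor asserted to be a single scalar) by invoking Proposition \ref{symcyc} and circulant invariance alone. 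The paper resolves exactly this mismatch by introducing the auxiliary operator $U_a(\cC^\pi)$ with $C_\pi$ placed on all sites $|x|\leq 3$ (see (\ref{cepipi})): the permutation coins decouple a $6$-dimensional root block from deeper subtrees which \emph{are} exact translated copies (translation by the even word $ab$, plus Proposition \ref{symcyc}) of the original rooted walks, and Proposition \ref{a=ag} then identifies $P_2^+G_\pi^zP_3^-$ with $g(z)\,\un$. Feeding this into the resolvent identity between $U_a(C)$ and $U_a(\cC^\pi)$ gives the closed matrix equation of Corollary \ref{matimp}; the condition $M(g)\neq 0$ arises there as the invertibility condition of $\un+P_3^-S_\pi P_2^+G_\pi^zP_3^-$, not as an unexplained ``single denominator''. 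Without this decoupling device, or an equally careful substitute, your self-consistency equation is not established.

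Second, the theorem's content is largely quantitative: the degree bound $5$ in $g$, the degree bounds in $x$, the explicit $M(g)=(\beta^2-\alpha\gamma)g^2+2\beta g+1$, and the explicit leading coefficient $c_5$. Your degree count (``two from the Riccati step, two from the level-two equations, one from the boundary identity'') is heuristic and is not tied to any written elimination; in the paper these facts come out of an explicit $6\times 6$ computation: the matrix $P_lG_\pi^zP_l$, the matrix $A(P_lG^zP_l)$, the relations $zG^z_{1,1}=G^z_{1,2}-1$, $G^z_{1,4}(1+D)=zG^z_{1,3}$, $G^z_{1,6}(1+D)=zG^z_{1,5}$ together with the linear system (\ref{mateq}), followed by elimination of $G^z_{1,3},G^z_{1,5}$, which is also where the dependence on $z$ only through $z^2$ is actually verified (your parity argument is a plausible heuristic, not a proof). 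As it stands the proposal asserts rather than derives the formulas that constitute the statement, so the key computational content of Theorem \ref{maintech} is left unproved.
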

\begin{rems}{\ \\}\label{rembc} 
i) This is actually a result on $U_{ab}(C)$ on $\cK_B^{ab}$ generated by $a\otimes a$. Proposition \ref{symcyc} shows that it is enough to consider this case to study $U_a(C)$. We stick to the simpler notation $U_a(C)$. \\
ii) The Carath\'eodory function $F(z)$  depends on $z^2$, which implies a symmetry of the spectrum stated in Theorem \ref{mainthm}.
\\
iii)  The set of points in $\D$ such that $M(g(z))=0$ can only accumulate on $\mathbb S$.
\\
iv) The artificial $+1$ in the parametrization (\ref{para+1}) is introduced so that the matrix $C-C_\pi$ which appears in the proof takes a simpler form.\\
v) The explicit form of the other polynomials $c_j$'s is provided in Propositions \ref{opc} and \ref{cjcop}.\\
vi) The equation satisfied by the Carath\'eodory function comes from an implicit equation for the restriction of the resolvent to a finite dimensional subspace expressed as Corollary \ref{matimp}.
\end{rems}
This result yields a criterion for absolutely continuous spectrum.
\begin{cor}\label{critacucirc} For any $C\in \mbox{Circ}(3)\cap U(3)$, 
if the leading term polynomial coefficient in (\ref{implicit}) has no root on the unit circle, then $U_a(C)$ has purely absolutely continuous spectrum.
\end{cor}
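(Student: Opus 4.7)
The plan is to turn the implicit polynomial equation of Theorem \ref{maintech} into a uniform bound on the Carath\'eodory function $F$ as $z$ approaches $\mathbb S$, and then read off pure absolute continuity from the characterization of $d\mu_s$ recalled in Section 4.1.

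First I would reduce the problem to the restriction $U_{ab}(C)$. Since $C_\sigma C C_\sigma^{-1}=C$ for any $C\in\mbox{Circ}(3)$, Proposition \ref{symcyc} provides the unitary equivalence $U_{ab}(C)\simeq U_{ac}(C)$; combined with (\ref{deco}) this shows that it is enough to establish pure absolute continuity of $U_{ab}(C)$. The permutation-matrix-up-to-phase cases, which by the first lemma of Section 3 are the only circulant unitaries with a vanishing entry, are already treated in the preceding proposition and fall outside the scope of the hypothesis: their pure point spectra would conflict with the bound on $g$ derived below were $c_5$ zero-free on $\mathbb S$. In the remaining cases all entries of $C$ are non-zero, so the cyclic lemma at the end of Section 2.2 makes $a\otimes a$ cyclic for $U_{ab}(C)$, and the spectral measure $d\mu_{a\otimes a}$ determines the full spectrum of $U_{ab}(C)$.

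Next I would transform the identity $\Phi(g(z),z^2)=0$ into a pointwise bound on $g$. Since $c_0,\dots,c_5$ are polynomials they are bounded on $\overline{\mathbb D}$, and the hypothesis together with continuity and compactness of $\mathbb S$ supplies $\delta>0$ and an annular neighborhood $A\subset\overline{\mathbb D}$ of $\mathbb S$ on which $|c_5(z^2)|\geq \delta$. On $\mathbb D$ minus the discrete set $\{M(g(z))=0\}$ (Remark iii), Theorem \ref{maintech} yields $\Phi(g(z),z^2)=0$, and holomorphicity of $g$ together with continuity of $\Phi$ extends this identity to all of $\mathbb D$. Applying Lagrange's bound on the roots of $\Phi(\,\cdot\,,z^2)$, whose leading coefficient is bounded below by $\delta$ on $A$, then produces a constant $K$ with $|g(z)|\leq K$ throughout $A\cap\mathbb D$.

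Finally, the uniform bound on $F(z)=2g(z)-1$ in $A\cap\mathbb D$ forces $\sup_{r<1}\mbox{Re}\,F(re^{i\theta})<\infty$ for every $\theta\in\mathbb T$, so the support of the singular part of $d\mu_{a\otimes a}$ --- the set on which this radial supremum is $+\infty$ --- is empty. Combined with cyclicity of $a\otimes a$, this yields pure absolute continuity of the spectrum of $U_{ab}(C)$ and completes the proof. The step I expect to demand the most care is the extension by continuity of the polynomial identity, and the resulting Lagrange bound, across the discrete locus $\{M(g(z))=0\}$; the fact from Remark iii that this locus does not accumulate inside $\mathbb D$ is precisely what makes the extension legitimate.
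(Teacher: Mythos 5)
Your core argument is exactly the paper's: on an annulus where $|c_5(z^2)|\geq\delta$ the roots of $\Phi(\cdot,z^2)$ are uniformly bounded, hence $\Re F(re^{i\theta})$ stays bounded as $r\to 1^-$ and $\mbox{supp}\,d\mu_s=\emptyset$; your extension of the identity $\Phi(g(z),z^2)=0$ across the discrete locus $\{M(g(z))=0\}$ via analyticity of $g$ is a correct, and in fact more explicit, rendering of the paper's one-line remark, and the reduction to $U_{ab}(C)$ through Proposition \ref{symcyc} is precisely Remark i) following Theorem \ref{maintech}.

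The one step that does not stand as written is the way you dispose of the circulant unitaries with a vanishing entry, $C\in e^{i\alpha}\{\un,C_\sigma,C_\pi\}$, by asserting that they all have pure point spectrum and therefore cannot satisfy the hypothesis on $c_5$. That assertion is false for the identity family: by the proposition of Subsection 3.1, $\sigma(U^\theta_a(e^{i\delta}\un))=\sigma_{ac}(U^\theta_a(e^{i\delta}\un))=\mathbb S$, so these walks are purely absolutely continuous and no conflict with your bound on $g$ can arise; your contrapositive argument excludes nothing for them. This matters on your route, because these are precisely matrices for which $a\otimes a$ fails to be cyclic for $U_{ab}(C)$ (its cyclic subspace under $U_a(\un)$ is the thin shift subspace written out in the proof of that proposition), so if such a $C$ did satisfy the hypothesis, your bound would only control $d\mu_{a\otimes a}$ and not the full spectrum. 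The repair is cheap: either verify directly that the hypothesis fails for them --- e.g.\ $\un\in\mbox{CO}_+(3)$ with $(\alpha,\gamma,\beta)=(1,0,-1)$ in the parametrization (\ref{para+1}), and Proposition \ref{cjcop} gives $c_5(x)=-\beta^2(x-1)(x^2+x+1)=1-x^3$, which vanishes at the cube roots of unity (similarly $c_5(x)=-(x-1)^3$ for $C_\sigma$) --- or simply note that for the identity family the conclusion of the corollary holds anyway by that proposition, while your pure-point contradiction argument legitimately excludes the genuinely pure point cases $e^{i\alpha}C_\sigma$, $e^{i\alpha}C_\pi$ from the hypothesis. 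With this case split corrected, your proof is complete and coincides with the paper's, your treatment of cyclicity and of the $M(g(z))=0$ set being, if anything, more careful than the published one.
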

\begin{proof} By continuity of the roots of polynomials in the coefficients, all roots of $\Phi(g,z^2)$, in $g$, are bounded for $z$ in a neighborhood of $\Ss$. Hence, for any $e^{i\theta}$, $\Re g(re^{i\theta})$ cannot tend to $+\infty$ as $r\ra 1^-$, even if $M(re^{i\theta})=0$ for infinitely many $r$'s. Therefore, $\mbox{supp }d\mu_s=\emptyset$.
\ep
\end{proof}
\medskip

The strategy of the proof of Theorem \ref{maintech}, which can be found in Appendix, consists in making use of the properties of the binary tree and of the symmetries provided by the choice $C\in \mbox{Circ}(3)\cap U(3)$, to obtain an implicit equation for the matrix element $\bra a\otimes a|(U_a(C)-z)^{-1} a\otimes a\ket$ of the resolvent. Due to the complexity introduced by the coin space ensuring unitarity of the QW, the computations are more involved than in the self adjoint case, where the same strategy yields a matrix element of the resolvent of the Laplacian in a few lines. 
\medskip

The spectral consequences of Theorem \ref{maintech} of the QWs we consider are the following:
\begin{thm}\label{mainthm}
For any $C\in \mbox{\em Circ(3)}\cap U(3)$, 
\be\label{mdr}
 \sigma_{sc}(U_a(C))=\emptyset, \ \ \mbox{and}\ \ \sigma(U_a(C))=-\sigma(U_a(C)).
\ee
If $C\in\mbox{\em Circ(3)}\cap O(3)\setminus\{\pm C_\sigma, \pm C_\pi\}$,
\be
\sigma(U_a(C))=\sigma_{ac}(U_a(C)), \ \ \mbox{and}\ \ \sigma(U_a(C))=-\sigma(U_a(C))=\overline{\sigma(U_a(C))},
\ee
whereas \vspace{-.5cm}
\bea
&&\sigma(U_a(\pm \un))=\sigma_{ac}(U_a(\pm \un))=\Ss, \nonumber\\
&&\sigma(U_a(\pm C_\#))=\sigma_{ess}(U_a(\pm C_\#))=\{e^{ik2\pi/6}\}_{k=0,1,\dots, 5}, \ \ \#\in\{\sigma, \pi\}.
\eea
\end{thm}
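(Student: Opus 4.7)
The coin matrices $C\in\{\pm\un,\pm C_\sigma,\pm C_\pi\}$ are handled by the preceding Proposition, so I concentrate on the complementary $C\in\mbox{Circ}(3)\cap U(3)$. By Proposition \ref{symcyc} together with Remarks \ref{rembc}\,(i), it is enough to analyze $U_{ab}(C)$ on $\cK_B^{ab}$ with cyclic vector $a\otimes a$; write $d\mu$ for its spectral measure and $F(z)=2g(z)-1$ for the associated Carath\'eodory function. The symmetry $\sigma(U_a(C))=-\sigma(U_a(C))$ is immediate from Remarks \ref{rembc}\,(ii): since $F$ depends on $z$ only through $z^2$, one has $F(-z)=F(z)$ and, via (\ref{car}), $d\mu(\theta+\pi)=d\mu(\theta)$, so the support of $d\mu$ is invariant under $e^{i\theta}\mapsto -e^{i\theta}$. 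When moreover $C\in O(3)$, the operator $U_{ab}(C)$ has real matrix entries in the canonical basis of $\cK_B^{ab}$ and $a\otimes a$ is a real vector, so $d\mu$ is invariant under $\theta\mapsto -\theta$, yielding $\overline{\sigma(U_a(C))}=\sigma(U_a(C))$.

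For the absence of singular continuous spectrum, Theorem \ref{maintech} exhibits $g$ as an algebraic function satisfying $\Phi(g(z),z^2)\equiv 0$. Let $c_k$ be the highest-index coefficient of $\Phi$ in (\ref{implicit}) that is not identically zero; then $c_k(z^2)$ is a nontrivial polynomial in $z$ and thus has only finitely many zeros in $\overline\D$. Away from those, the monic polynomial $\Phi(\cdot,z^2)/c_k(z^2)$ has locally bounded coefficients, so each of its roots --- in particular $g(z)$ --- remains locally bounded as $|z|\to 1$. Consequently $\operatorname{Re} F(re^{i\theta})$ stays bounded as $r\to 1^-$ except at finitely many angles $\theta$, and the singular part $d\mu_s$, supported on the set where $\operatorname{Re} F\to+\infty$, is therefore purely atomic. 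This yields $\sigma_{sc}(U_a(C))=\emptyset$.

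To upgrade the no-singular-continuity statement to pure absolute continuity on $\mbox{Circ}(3)\cap O(3)\setminus\{\pm C_\sigma,\pm C_\pi\}$, with $\pm\un$ already covered by the Proposition, I invoke Corollary \ref{critacucirc}: the remaining task is to verify that for every such $C$ the leading coefficient $c_5(x)$ of (\ref{c5}) has no zero on $|x|=1$ (and in particular that $c_5\not\equiv 0$). By Remark \ref{copcom} it suffices to work on $\mbox{CO}_+(3)$ parametrized by $t\in[0,2\pi)$; I would insert this parametrization and $x=e^{i\phi}$ into (\ref{c5}), separate real and imaginary parts, and exploit the orthogonality constraints $\alpha^2+\gamma^2+(\beta+1)^2=1$ and $\alpha\gamma+\gamma(\beta+1)+(\beta+1)\alpha=0$ to argue that simultaneous vanishing of both parts forces $(\alpha,\gamma,\beta+1)$ to be $\pm$ a coordinate basis vector, i.e., $C\in\{\pm\un,\pm C_\sigma,\pm C_\pi\}$, which is excluded.

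This last algebraic verification is the main obstacle. The cubic $c_5(x)$ has coefficients that factor through the symmetric form $\alpha^3+\beta^3+\gamma^3-3\alpha\beta\gamma=(\alpha+\beta+\gamma)(\alpha^2+\beta^2+\gamma^2-\alpha\beta-\beta\gamma-\gamma\alpha)$, so one must carefully track several trigonometric identities and identify all extremal parameter values where multiple factors vanish simultaneously. Once this is done, Corollary \ref{critacucirc} immediately yields $\sigma(U_a(C))=\sigma_{ac}(U_a(C))$ for all remaining orthogonal circulant coin matrices, completing the proof.
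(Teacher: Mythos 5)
Your treatment of the first claim follows the paper's route: evenness of $\Phi(g,z^2)$ in $z$ gives $d\mu(\theta)=d\mu(\theta+\pi)$, and boundedness of the roots of the algebraic equation away from the finitely many unit-circle zeros of the top coefficient confines $d\mu_s$ to a finite set of atoms, whence $\sigma_{sc}(U_a(C))=\emptyset$. Your derivation of $\overline{\sigma(U_a(C))}=\sigma(U_a(C))$ from the realness of the matrix elements of $U_{ab}(C)$ and of the cyclic vector is a legitimate, slightly more elementary alternative to the paper's argument, which instead uses that the $c_j$ have real coefficients to get $g(z)=\overline{g(\bar z)}$; both yield $d\mu(\theta)=d\mu(-\theta)$.

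The genuine gap is in the orthogonal case. Your plan is to verify the hypothesis of Corollary \ref{critacucirc}, namely that $c_5$ has no zero on the unit circle, for all $C\in \mbox{CO}_{\pm}(3)\setminus\{\pm C_\sigma,\pm C_\pi\}$. That verification cannot succeed: by Proposition \ref{cjcop}, for $C\in\mbox{CO}_+(3)$ one has $c_5(x)=-\beta^2(x-1)(x^2+(1-3\gamma)x+1)$, so $x=1$ is always a root, and for $-1/3<\gamma<1$ the quadratic factor contributes a further conjugate pair of unit-modulus roots (product of roots equal to $1$, negative discriminant). Hence $c_5(z^2)$ always vanishes somewhere on $\Ss$ in the orthogonal case, your claim that simultaneous vanishing of real and imaginary parts forces $C\in\{\pm\un,\pm C_\sigma,\pm C_\pi\}$ is false, and Corollary \ref{critacucirc} is simply not applicable; it only gives, as in the first part, that $d\mu_s$ is a finite sum of atoms sitting at the zeros of $c_5(z^2)$ on $\Ss$. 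What is missing is the exclusion of these atoms. The paper does this in Proposition \ref{eop} by a local analysis of the implicit equation at each candidate $e^{i\theta_0}$ with $c_5(e^{2i\theta_0})=0$: setting $\mu_{\theta_0}(z)=e^{-i\theta_0}g(z)(e^{i\theta_0}-z)$, whose radial limit is the putative mass $\mu(\{\theta_0\})$ by (\ref{atom}), and passing to the limit in $\Phi(g(z),z^2)=0$, one finds that a non-zero mass would have to equal $c_4(e^{2i\theta_0})/\bigl(e^{2i\theta_0}c_5'(e^{2i\theta_0})\bigr)$ (or the analogous expression with second derivatives at the double zeros occurring for $\gamma=-1/3$), and these values turn out to be strictly negative, contradicting positivity of the measure. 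Without an argument of this kind, ruling out eigenvalues at those finitely many points, your proof of $\sigma(U_a(C))=\sigma_{ac}(U_a(C))$ in the orthogonal case is incomplete.
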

\begin{rems}{\ \\}
i) For matrices $C\in U(3)$ such that $\|C-\un\|_{\C^3}\leq \eps$, with $\eps>0$, $\sigma(U(C))=\sigma_{ac}(U(C))$, as shown in \cite{HJ2}. The argument carries over to QWs $U_a(C)$. \\
ii) The proofs of the statements about $C\in U(3)\cap \mbox{Circ}(3)$ are given in the present section, whereas those concerning $C\in O(3)\cap \mbox{Circ}(3)$ are given in Appendix A.
\end{rems}

\noindent{\bf Proof of Theorem \ref{mainthm}:}
Consider $C\in \mbox{Circ}(3)\cap U(3)\setminus \{\pm \un, \pm C_\sigma, \pm C_\pi\}$. By Theorem \ref{maintech}, the Carath\'eodory function is determined for $z\in \D$ by one of the roots of the polynomial $\Phi(g,z^2)$ of degree 5  in $g$, with polynomial coefficients $c_j(z^2)$, $j=0,\cdots, 5$. This implies that $g(z)=g(-z)$ so that $\Re g(re^{i\theta})=\Re g(re^{i(\theta+\pi)})$. In the limit $r\ra 1^-$, we get $d\mu(\theta)=d\mu(\theta+\pi)$, which proves the symmetry of the spectrum expressed in (\ref{mdr}). Since the ${c_j}$'s are polynomials in $z$, the roots of $\Phi(g,z^2)$ are continuous in $z$ and remain bounded as long as $z$ belongs to $\C\setminus Z_5$, $Z_5\subset \C$ being the set of roots of $c_5(z^2)$.
By the argument used in the proof of Corollary \ref{critacucirc}, $d\mu_s$ is supported on the finite set $Z_5\cap \Ss$, so that it can consist of atoms only. This shows that $\sigma_{sc}(U_a(C))=\emptyset$ and the same result holds for $U_b(C)$ and $U_c(C)$.  The proof of the statements regarding $C\in O(3)\cap \mbox{Circ}(3)$ requires more detailed informations about $\Phi(g,z^2)$ and is given as Proposition \ref{eop} in Appendix A. 
\ep
\begin{rem} By construction, 
$U(C)=U(\cC_e)+F$, where $F$ is finite rank, hence trace class, and $U(\cC_e)$ is the direct sum of the $U_\#(C)$ and $U(C_\pi)|_{\cH_e}$. 
Therefore, by Birman-Krein theorem, for any $C$,
\be \sigma_{ac}(U(C))=\sigma_{ac}(U(\cC_e)) \ \ \mbox{and} \ \ U(C)|_{ac}\simeq U(\cC_e)|_{ac},
\ee 
where $U|_{ac}$ is the restriction of the unitary operator $U$ to its absolutely continuous subspace and $\simeq$   denotes unitary equivalence.
\end{rem}

\section{Appendix}
\subsection{Proof of Theorem \ref{maintech}}

Our goal is to find a closed equation for the restriction of the resolvent $U_a(C)$ to a finite dimensional subspace, making use of the symmetries of the operator and of the self similar structure of the tree. The strategy is to relate $U_a(C)$ and $U_a(\cC^\pi)$ defined by $U(\cC^\pi)|_{\cK_B^a}$ where $\cC^\pi$ is given by
\be\label{cepipi}
C(x)=\left\{\begin{matrix} 
C_{\pi} & \mbox{if} \ |x|\leq 3 \cr
C\ \ & \mbox{otherwise,\ \ \  \ \ }
\end{matrix}\right.
\ee
to be compared with (\ref{cepi}) defining  $U_a(C)$.
\medskip

For any element $x\otimes {\tau}$ of $\mathcal{K}_B^a$, we denote the projection on the site  $x \otimes {\tau}$ by $P_x^{\tau}$. We also write $P_x^{a,b} = P_x^a + P_x^b$ (and $\circa$) and $P_x =  P_x^a + P_x^b + P_x^c$.\\
We set:
\be \nonumber
P_0^+ = P_a^a 
,\;\quad\;
P_1^- = P_{ab}^c 
,\; \quad \;
P_1^+ = P_{ab}^{b,a}
,\;\quad\;  
P_1  = P_1^- +P_1^+,
\ee
\be\nonumber
P^-_2 =  P_{aba}^c+ P_{abc}^b ,
\quad 
P_2^+ =  P_{aba}^{a , b}  + P_{abc}^{c, a} ,
\quad 
P_3^- = P_{abab}^c +P_{abac}^a + P_{abca}^b +P_{abcb}^c 
\ee
\be\label{defproj}
P_l = P_0^+ + P_1 +P_2^-, \quad
P_d = P_2^+ + P_3^-.
\ee
The index gives the distance of the site to the root $a \otimes a$ and $P_l$ is the projection on a 6-dimensional invariant subspace of $U_a(\cC^\pi)$, which is orthogonal to $P_d$.

\begin{lem}\label{l1}
 $U_a(C)-U_a(\cC^\pi)=S(\un \otimes(C-C_\pi))(P_1+P_2)$ satisfies
\be
S(\un \otimes(C-C_\pi))P_k=(P_{k-1}^+ + P_{k+1}^- )S(\un \otimes(C-C_\pi))P_k, \ k=1,2.
\ee
\end{lem}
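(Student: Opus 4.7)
First I would verify the equality $U_a(C)-U_a(\cC^\pi)=S(\un\otimes(C-C_\pi))(P_1+P_2)$. Recalling (\ref{ucc}), both operators share the same coin-dependent shift $S$ and differ only via the site-dependent coin matrix. By the definitions (\ref{cepi}) of $\cC_e$ and (\ref{cepipi}) of $\cC^\pi$, the two coin collections agree at every vertex except those satisfying $2\le |x|\le 3$, where $\cC_e$ carries $C$ and $\cC^\pi$ carries $C_\pi$. On the invariant subspace $\cK_B^{ab}=\cH_{a\otimes a}$, the only such vertices supporting basis vectors are $ab$ (distance $1$ from the root $a$), and $aba$, $abc$ (distance $2$), with the full coin space $\C^3$ attached. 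The projection onto these vertices is exactly $P_1+P_2$, so the difference $U_a(C)-U_a(\cC^\pi)$ reduces to $S(\un\otimes(C-C_\pi))(P_1+P_2)$.

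Next I would establish the two range-restriction identities by direct calculation, exploiting the fact that $\un\otimes(C-C_\pi)$ acts only on the coin and preserves the position. For $k=1$, writing a general element in the range of $(\un\otimes(C-C_\pi))P_1$ as $ab\otimes\psi$ with $\psi\in\C^3$, it suffices to apply $S$ to each basis vector at $ab$ using (\ref{circa}), (\ref{sab}) and the parity of $|ab|=2$. A short computation gives
\begin{equation}
S(ab\otimes a)=a\otimes a,\quad S(ab\otimes b)=abc\otimes b,\quad S(ab\otimes c)=aba\otimes c,\nonumber
\end{equation}
using $abb=a$ in the reduced-word convention. These three vectors lie respectively in the ranges of $P_0^+$, $P_{abc}^b\subset P_2^-$ and $P_{aba}^c\subset P_2^-$, so $S(\un\otimes(C-C_\pi))P_1$ takes values in the range of $P_0^++P_2^-=P_{k-1}^++P_{k+1}^-$.

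For $k=2$ the same recipe applies to the sites $aba$ and $abc$, both at odd distance $3$. I would compute the six images
\begin{equation}
\begin{array}{lll}
S(aba\otimes a)=abac\otimes a, & S(aba\otimes b)=ab\otimes b, & S(aba\otimes c)=abab\otimes c,\\[2pt]
S(abc\otimes a)=ab\otimes a, & S(abc\otimes b)=abca\otimes b, & S(abc\otimes c)=abcb\otimes c,
\end{array}\nonumber
\end{equation}
relying on $abaa=ab$ and $abcc=ab$. Inspection of the definitions (\ref{defproj}) shows that the first vectors in each column lie in the range of $P_3^-$ (namely $P_{abac}^a$, $P_{abab}^c$, $P_{abca}^b$, $P_{abcb}^c$) and the remaining two in the range of $P_1^+=P_{ab}^{b,a}$. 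Hence $S(\un\otimes(C-C_\pi))P_2$ takes values in the range of $P_1^++P_3^-=P_{k-1}^++P_{k+1}^-$.

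The proof is essentially bookkeeping: the projections $P_k^\pm$ of (\ref{defproj}) are tailored so that the shift $S$ sends the range of $P_k$ exactly into that of $P_{k-1}^++P_{k+1}^-$. There is no genuine obstacle beyond the careful use of the parity-dependent shift rule and the free-group reductions $a^2=b^2=c^2=e$; the only point demanding attention is to make sure that each coin state is routed to the correct ``$+$" or ``$-$" component of the next layer, which is precisely what the above case-by-case computation confirms.
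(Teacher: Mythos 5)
Your proposal is correct and follows essentially the same route as the paper: identify that the two coin collections differ only at the sites $ab$, $aba$, $abc$ of the $ab$-branch (giving the stated form of the difference), then track the images of the basis vectors at those sites under the parity-dependent shift $S$ and check they land in the ranges of $P_{k-1}^{+}+P_{k+1}^{-}$. The only difference is that you write out explicitly the $k=2$ computation, which the paper dismisses as "dealt with similarly"; your computed images and their assignment to the projectors of (\ref{defproj}) are all correct.
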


\proof
We compute
$
S(\un \otimes(C-C_\pi)) ab \otimes \tau = \lambda_1 a \otimes a + \lambda_2 abc \otimes b + \lambda_3 aba \otimes c
$,
where the $\lambda_j$ are constant depending on $\tau$ and $C$. Each vector on the right hand side belongs to $P_0^+$ or $P_2^-$. The same computation on $ac \otimes \tau$ yields the same conclusion and the case $k=2$ is dealt with similarly. \ep

We use the shorthands $U=U_a(C)$, $U_\pi=U_a(\cC^\pi)$, $G^z=(U-z)^{-1}$, $G_\pi^z=(U_\pi-z)^{-1}$ and $S_\pi=S(\un \otimes(C-C_\pi))(P_1+P_2)$ and consider the resolvent equation for $z\in \D$,
\be\label{reseq}
G^z-G_\pi^z = - G^z S_\pi G_\pi^z .
\ee
Making use of Lemma \ref{l1} and of the invariance of $P_l  \mathcal{K}_B^a$ under $G^z_\pi$, we get 
\begin{equation}\label{equa-res1}
P_l G^z P_l = P_l G^z_\pi P_l - P_l G^z P_l S_\pi P_l G^z_\pi P_l - P_l G^z  P_3^- S_\pi P_2^- G^z_\pi P_l .
\end{equation} 
We want to compute $P_l G^z P_l $, so the term $P_l G^z P_3^- $ needs to be transformed.  Assuming for now that the restriction $(\un+ P_3^- S_\pi P^+_2 G^z_\pi P_3^-)^{-1}|_{P_3^-  \mathcal{K}_B^a}$ exists, (\ref{reseq})  further yields
\be
P_l G^z P_3^- = -  P_l G^z P_1^+S_\pi P^+_2G^z_\pi P_3^- (\un + P_3^- S_\pi P^+_2 G^z_\pi P_3^-)^{-1}.
\ee
We eventually get
\be\label{matriximplict}
P_lG^zP_l =  P_lG^zP_l \times A(G^z_\pi) + P_lG^z_\pi P_l 
\ee
where 
\bea\label{defag}
A(G^z_\pi) &=& P_1^+S_\pi P^+_2G^z_\pi P_3^- (\un + P_3^- S_\pi P^+_2 G^z_\pi P_3^-)^{-1} P_3^- S_\pi P_2^- G^z_\pi P_l
\nonumber\\
&&
-
P_0^+ S_\pi P_1 G^z_\pi P_l 
-
P_1^+ S_\pi P_2^- G^z_\pi P_l
-
 P_2^- S_\pi P_1 G^z_\pi P_l.
\eea
The term $P_lG^z_\pi P_l$ which appears in $A(G^z_\pi)$ can be made explicit:
\begin{lem} In the basis   $\{a\otimes a, ab\otimes c, abc\otimes b, ab\otimes a, aba\otimes c, ab\otimes b, 
 \}$  
we have
$$
P_lG_\pi^z P_l = P_l(U_\pi - z)^{-1}P_l =  \frac{1}{1-z^6}
\begin{pmatrix}
z^5 & 1   & z   & z^2 & z^3 & z^4     \\
z^4 & z^5 & 1   & z   & z^2 & z^3   \\
z^3 & z^4 & z^5 & 1   & z   & z^2   \\
z^2 & z^3 & z^4 & z^5 & 1   & z      \\
z   & z^2 & z^3 & z^4 & z^5 & 1      \\
1   & z   & z^2 & z^3 & z^4 & z^5   
\end{pmatrix} 
$$
\end{lem}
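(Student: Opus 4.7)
The strategy I would adopt is two-fold: first show that the six-dimensional subspace $P_l \mathcal{K}_B^a$ is invariant under $U_\pi$ and that the restriction acts as a cyclic shift of order $6$ on the given ordered basis; then compute the resolvent of a cyclic shift on a six-dimensional space by elementary algebra. Note that every site appearing in the supports of the listed basis vectors (namely $a$, $ab$, $aba$, $abc$) has distance $\le 3$ from the root $e$ of $\mathcal{T}_3$, so the prescription \fer{cepipi} ensures that at every step of the computation the coin matrix applied is exactly $C_\pi$.

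The verification of the cyclic action reduces to six one-line computations using \fer{reshuffle}, the definition of $C_\pi$ (which sends $|a\ket\mapsto|c\ket$, $|b\ket\mapsto|a\ket$, $|c\ket\mapsto|b\ket$), and the definitions \fer{sab}, \fer{circa} of the coin-state dependent shift. For instance, $U_\pi\, a\otimes a = S(a\otimes c) = S_{ab}|a\ket\otimes|c\ket = ab\otimes c$ since $|a|=1$ is odd; then $U_\pi\, ab\otimes c = S_{ca}|ab\ket\otimes|b\ket = abc\otimes b$ since $|ab|=2$ is even; and, crucially, at the following step the group relations $a^2=b^2=c^2=e$ fold the walker back toward the root, for example $U_\pi\, abc\otimes b = S_{bc}|abc\ket\otimes|a\ket = abcc\otimes a = ab\otimes a$. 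Continuing in the same way through $ab\otimes a \mapsto aba\otimes c \mapsto ab\otimes b \mapsto a\otimes a$, one completes the cycle of length six. Denoting by $T$ the restriction $U_\pi|_{P_l\mathcal{K}_B^a}$ with respect to the given basis, the matrix of $T$ is the standard $6\times 6$ cyclic permutation matrix.

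For the resolvent, the relation $T^6=\un$ yields the identity
\be
(T-z\un)\sum_{k=0}^{5} z^{5-k}\, T^{k} = T^6 - z^6\un = (1-z^6)\un,
\ee
so that $(T-z)^{-1}=\frac{1}{1-z^6}\sum_{k=0}^{5} z^{5-k}T^k$ for $z\in\D$. Since $T^k$ in the given basis is the matrix with a $1$ in position $(r,c)$ whenever $r-c\equiv k\pmod 6$, the coefficient of the $(r,c)$ entry in the resolvent is $z^{5-k}$ with $k=(r-c)\bmod 6$, producing precisely the circulant matrix displayed in the statement.

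The only real subtlety is bookkeeping: one must keep track of the parities $|x|$ with respect to the root $e$ of the underlying homogeneous tree (not with respect to the root $a$ of the binary tree) in order to apply the correct branch of each $S_{\#\sigma(\#)}$ via \fer{sab}, and one must systematically exploit $a^2=b^2=c^2=e$ to reduce words and recognize when the walker returns to a previously visited site. Once these reductions are carried out correctly, the cyclic-shift structure of $T$ is manifest and the computation of the resolvent is immediate.
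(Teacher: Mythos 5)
Your proof is correct and is essentially the computation the paper leaves implicit: one checks directly that the six listed vectors are cyclically permuted by $U_\pi$ (all relevant sites carry $C_\pi$ since $|x|\le 3$), so $P_l$ projects onto a reducing subspace on which $U_\pi$ acts as the order-$6$ cyclic shift, and the stated matrix is just $(T-z)^{-1}=\frac{1}{1-z^6}\sum_{k=0}^5 z^{5-k}T^k$. Your bookkeeping of the parities with respect to the root $e$ and of the reductions $a^2=b^2=c^2=e$ matches the definitions \fer{sab}--\fer{circa}, so nothing is missing.
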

In order to obtain an implicit equation for $P_l G^z P_l$, we show that the factor $P^+_2G^z_\pi P_3^-$ which cannot be computed explicitly, can be expressed in terms of the restricted resolvent $P_l G^z P_l$:
\begin{prop}\label{a=ag} We have 
\bea
&&P_2^+ G^z_\pi P_3^- = \langle a \otimes a | G^z   ab \otimes c \rangle \Big(|aba \otimes a\ket\bra abab \otimes c |+|aba \otimes b\ket\bra abac \otimes a| \\
\nonumber 
&& \hspace{6cm}+|abc \otimes c\ket\bra abca \otimes b|+|abc \otimes a\ket\bra abcb \otimes c|\Big)
\eea

\end{prop}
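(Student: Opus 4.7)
Proof plan:

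The proposition asserts that the $4\times 4$ matrix $P_2^+G_\pi^z P_3^-$, which couples the four ``exit'' basis vectors in $P_2^+$ to the $P_3^-$ states into which $U_\pi$ sends them, is equal to $\bra a\otimes a|G^z|ab\otimes c\ket$ times the permutation matrix pairing each $v\in P_2^+$ with $U_\pi v\in P_3^-$. I will deduce this from a self-similarity argument: each of the four $U_\pi$-invariant subspaces $\cH_v$ associated with a given $P_2^+$ vector (specified below) is isomorphic to $\cK_B^{ab}$ via a unitary conjugating $U_\pi$ to $U_{ab}(C)$, and the four subspaces are pairwise orthogonal.

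Focus first on $v=aba\otimes a$. Define
\[
\cH_v \;=\; \C\,(aba\otimes a)\;\oplus\;\mbox{span}\bigl\{(ab\cdot y)\otimes\tau:\,y\in\cT_B^{ab}\setminus\{a\},\,\tau\in A_3\bigr\},
\]
the span of $v$ together with the full $3$-dimensional coin states over the sub-tree of $\cT_B^{ab}$ rooted at $abab$. A one-step case analysis at $aba$ (where the coin is $C_\pi$, so that $C_\pi|a\ket=|c\ket$) and at any $ab\cdot y$ with $y\neq a$ (where the coin is $C$, since $|ab\cdot y|\geq 4$) shows that $\cH_v$ is invariant under both $U_\pi$ and $U_\pi^*$: only the coin state $|a\ket$ at $aba$ couples into this sub-tree, whereas $|b\ket$ and $|c\ket$ at $aba$ couple to the disjoint sites $abac$ and $ab$ respectively. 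Define the unitary $\Phi_v:\cK_B^{ab}\ra\cH_v$ by $\Phi_v(a\otimes a)=aba\otimes a$ and $\Phi_v(y\otimes\tau)=(ab\cdot y)\otimes\tau$ for all other basis vectors. Since $|ab|=2$ is even, \fer{comtz} gives $[S,T_{ab}\otimes\un]=0$, and the site-dependent coin matrices match under $\Phi_v$: the root coin $C_\pi$ at $a$ in $U_{ab}(C)$ corresponds to $C_\pi$ at $aba$ in $U_\pi$ (both sites satisfy the $|x|\leq 3$ condition in \fer{cepipi}), while $C$ at any $y\in\cT_B^{ab}$ with $|y|\geq 2$ corresponds to $C$ at $ab\cdot y$, for which $|ab\cdot y|\geq 4$. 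Together with the matching boundary conditions $U_{ab}(C)(a\otimes a)=ab\otimes c$ and $U_\pi(aba\otimes a)=abab\otimes c=\Phi_v(ab\otimes c)$, this yields the intertwining $\Phi_v U_{ab}(C)=U_\pi\Phi_v$, hence $\Phi_v G^z=G_\pi^z\Phi_v$, and reading off the $(a\otimes a,ab\otimes c)$ matrix entry gives
\[
\bra aba\otimes a|G_\pi^z|abab\otimes c\ket \;=\; \bra a\otimes a|G^z|ab\otimes c\ket.
\]

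The three remaining pairs $(aba\otimes b,abac\otimes a)$, $(abc\otimes c,abca\otimes b)$ and $(abc\otimes a,abcb\otimes c)$ are handled by analogous constructions of unitary intertwiners $\Phi_{v'}$ built from the relevant even-length translations, possibly composed with the isomorphism $U_{ab}(C)\cong U_{ac}(C)$ of Proposition \ref{symcyc} and a coin relabeling. The assumption $C\in\mbox{Circ}(3)$, which is equivalent to $C_\sigma CC_\sigma^{-1}=C$ by the first lemma of Section 3, ensures that the conjugated coin matrix is again $C$, so that each of the four diagonal matrix elements equals the same scalar $\bra a\otimes a|G^z|ab\otimes c\ket$. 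Pairwise orthogonality of the four subspaces $\cH_v$ follows from their site supports: each consists of one of the sites $aba$ or $abc$ (on which the four root vectors carry pairwise orthogonal coin states) together with one of the four disjoint sub-trees of $\cT_B^{ab}$ rooted at $abab$, $abac$, $abca$, $abcb$. Since each $\cH_v$ is $U_\pi$-invariant, $G_\pi^z$ acts block-diagonally on their direct sum, and all off-diagonal entries of $P_2^+G_\pi^z P_3^-$ vanish.

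The main technical hurdle is the coin-state bookkeeping at the distinguished sites where $C_\pi$ acts ($a$ in $U_{ab}$ versus $aba$ and $abc$ in $U_\pi$): the placement of $C_\pi$ out to distance $3$ in the definition \fer{cepipi} of $\cC^\pi$ is precisely what allows the root of $U_{ab}(C)$ to correspond to $aba$ (or $abc$) under the even-length translation $T_{ab}$, while the circulant hypothesis on $C$ is needed to identify, in the three secondary cases, the relabeled coin matrix with the original $C$.
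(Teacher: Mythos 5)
Your proposal is correct and takes essentially the same route as the paper's proof: the vanishing of off-diagonal entries comes from the decoupling induced by the $C_\pi$ coins near the root, the surviving entries are identified with resolvent entries of $U_{ab}(C)$, $U_{ac}(C)$ (and the $c$-rooted analogues) via the even translation by $ab$ and the commutation \fer{comtz}, and Proposition \ref{symcyc} together with the circulant hypothesis $C_\sigma C C_\sigma^{-1}=C$ equates all four diagonal entries with $\bra a\otimes a|G^z\, ab\otimes c\ket$. Your packaging as four mutually orthogonal $U_\pi$-invariant subspaces with explicit intertwiners is only a mild reorganization of the paper's identification of the two $2\times 2$ blocks of $P_2^+G^z_\pi P_3^-$.
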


{\bf Proof:} We view $P^+_2G^z_\pi P_3^-$ as a $4 \times 4$  matrix from $P_3^-\cK_B^a$ to $P_2^+\cK_B^a$ in the bases defined according to the order given in the definition of the projectors (\ref{defproj}). By definition, 
\bea
&&P_2^+ G^z_\pi P_3^- = \\ \nonumber
&&
\hspace{-1cm} \begin{pmatrix}
\langle aba \otimes a |G^z_\pi abab \otimes c \rangle & 
\langle aba \otimes a |G^z_\pi abac \otimes a \rangle&
\langle aba \otimes a |G^z_\pi abca \otimes b \rangle&
\langle aba \otimes a |G^z_\pi abcb \otimes c \rangle \\
\langle aba \otimes b |G^z_\pi abab \otimes c \rangle & 
\langle aba \otimes b |G^z_\pi abac \otimes a \rangle&
\langle aba \otimes b |G^z_\pi abca \otimes b \rangle&
\langle aba \otimes b |G^z_\pi abcb \otimes c \rangle \\
\langle abc \otimes c |G^z_\pi abab \otimes c \rangle & 
\langle abc \otimes c |G^z_\pi abac \otimes a \rangle&
\langle abc \otimes c |G^z_\pi abca \otimes b \rangle&
\langle abc \otimes c |G^z_\pi abcb \otimes c \rangle \\
\langle abc \otimes a |G^z_\pi abab \otimes c \rangle & 
\langle abc \otimes a |G^z_\pi abac \otimes a \rangle&
\langle abc \otimes a |G^z_\pi abca \otimes b \rangle&
\langle abc \otimes a |G^z_\pi abcb \otimes c \rangle 
\end{pmatrix}
\eea
The presence of $C_\pi$ on the site $ab$ decouples paths starting from this point on and thus entries with both $aba$ and $abc$ are equal to zero.
Thus, 
\bea
&&P_2^+ G^z_\pi P_3^- = \\ \nonumber
&&\hspace{-1cm}  \begin{pmatrix}
\langle aba \otimes a |G^z_\pi abab \otimes c \rangle & 
\langle aba \otimes a |G^z_\pi abac \otimes a \rangle&
0&
0 \\
\langle aba \otimes b |G^z_\pi abab \otimes c \rangle & 
\langle aba \otimes b |G^z_\pi abac \otimes a \rangle&
0&
0 \\
0 & 
0&
\langle abc \otimes c |G^z_\pi abca \otimes b \rangle&
\langle abc \otimes c |G^z_\pi abcb \otimes c \rangle \\
0 & 
0&
\langle abc \otimes a |G^z_\pi abca \otimes b \rangle&
\langle abc \otimes a |G^z_\pi abcb \otimes c \rangle 
\end{pmatrix}
\eea 
Now, using the general property for $|z|$ even 
$
T_z^{-1}U(\cC)T_z=U(\cC_z)
$
where the configuration $\cC_z$ is defined from $\cC=\{C(x)\}_{x\in \cT_3}$
by $\cC_z=\{C(zx)\}_{x\in \cT_3}$, which follows from (\ref{comtz}), we can identify
the upper left block with
\bea
P_0^+ (U_a - z)^{-1} P_1^- &=&
\begin{pmatrix}
\langle a\otimes a | (U_a - z)^{-1}  ab \otimes c \rangle & \langle a\otimes a | (U_a - z)^{-1}  ac \otimes a \rangle\\
\langle a\otimes b | (U_a - z)^{-1}  ab \otimes c \rangle& \langle a\otimes b | (U_a - z)^{-1}  ac \otimes a \rangle\\
\end{pmatrix}
\eea
by the translation by $ab$.  The lower right block is also identified by this translation to the matrix
$$
\begin{pmatrix}
\langle c \otimes c | (U_c-z)^{-1} ca \otimes b \rangle&
\langle c \otimes c | (U_c-z)^{-1} cb \otimes c \rangle \\
\langle c \otimes a | (U_c-z)^{-1} ca \otimes b \rangle&
\langle c \otimes a | (U_c-z)^{-1} cb \otimes c \rangle 
\end{pmatrix}
$$ 
which, in turn, is identified to $P_0^+ (U_a - z)^{-1} P_1^-$ thanks to Proposition \ref{symcyc}. The non-diagonal term in the block $P_0^+ (U_a - z)^{-1} P_1^-$ are decoupled by the coin matrix $C_\pi $ on the vertex $a$. Finally, $\langle a \otimes b | (U_a - z)^{-1} ac\otimes a\rangle$ is equal to $\langle a \otimes a | G^z  ab \otimes c \rangle$, thanks to the last part of Propostion \ref{symcyc}. 
\ep

Hence, (\ref{matriximplict}) yields the sought for implicit equation for the restricted resolvent $P_lG^z P_l$:
\begin{cor}\label{matimp} For all $|z|<1$ such that $(\un+ P_3^- S_\pi P^+_2 G^z_\pi P_3^-)|_{P_3^-  \mathcal{K}_B^a}$ is invertible,
$P_lG^z P_l$ satisfies 
\be
P_lG^zP_l =  P_lG^zP_l \times A(P_l G^z P_l) + P_lG^z_\pi P_l ,
\ee
slightly abusing notations. 
\end{cor}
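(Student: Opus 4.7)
The proof of Corollary \ref{matimp} is to assemble the pieces already built in Lemma \ref{l1}, Proposition \ref{a=ag}, and equations (\ref{equa-res1})--(\ref{defag}) into a single self-referential equation for the block $P_l G^z P_l$. My plan has three stages.

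First, I would begin with the second resolvent identity (\ref{reseq}), multiply by $P_l$ on both sides, and use two facts in tandem. On the right, $P_l \mathcal{K}_B^a$ is invariant under $U_\pi$, so $P_l G^z_\pi = P_l G^z_\pi P_l$ on the relevant subspace; on the left, Lemma \ref{l1} pins down that the perturbation $S_\pi = S(\un\otimes(C-C_\pi))(P_1+P_2)$ sends $P_k$ only into $P_{k-1}^+\oplus P_{k+1}^-$ for $k=1,2$, so the only "leakage" out of $P_l$ produced by $S_\pi$ lands in $P_3^-$. This immediately yields (\ref{equa-res1}) with the single off-block term $P_l G^z P_3^- \cdot S_\pi P_2^- G^z_\pi P_l$.

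Second, I would close the equation by expressing $P_l G^z P_3^-$ in terms of $P_l G^z P_l$. Applying the resolvent identity once more to this block, and again invoking Lemma \ref{l1} to restrict the range of $S_\pi$, one obtains a linear relation of the form
\be
P_l G^z P_3^-\bigl(\un + P_3^- S_\pi P^+_2 G^z_\pi P_3^-\bigr) = -\, P_l G^z P_1^+ S_\pi P^+_2 G^z_\pi P_3^-.
\ee
Under the stated invertibility hypothesis on $\un + P_3^- S_\pi P^+_2 G^z_\pi P_3^-$ acting on the four-dimensional space $P_3^-\mathcal{K}_B^a$, this inverts to give $P_l G^z P_3^-$ as a left-multiple of $P_l G^z P_l$, and substitution into (\ref{equa-res1}) produces (\ref{matriximplict})--(\ref{defag}) exactly, with the whole right factor $A(G^z_\pi)$ depending only on $P_l G^z_\pi P_l$ and on the cross-block $P_2^+ G^z_\pi P_3^-$.

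Third — and this is the step that converts (\ref{matriximplict}) from a formula for $G^z_\pi$ into a genuinely \emph{implicit} equation for $P_l G^z P_l$ — I would invoke Proposition \ref{a=ag}. It identifies every nonzero entry of the block $P_2^+ G^z_\pi P_3^-$ with the single scalar $\langle a\otimes a| G^z\, ab\otimes c\rangle$, which is itself one of the entries of $P_l G^z P_l$. Substituting this identification into (\ref{defag}) turns $A(G^z_\pi)$ into a function $A(P_l G^z P_l)$ — this is exactly the "slight abuse of notation" referred to in the statement — and yields the implicit equation. The only subtle point, which I would address with a brief remark, is that the denominator $\det(\un + P_3^- S_\pi P^+_2 G^z_\pi P_3^-)$ produced by the $4\times 4$ inversion is a polynomial in that same scalar matrix element, and the vanishing of this denominator is exactly the condition $M(g(z))=0$ alluded to in Theorem \ref{maintech}; for the purposes of the corollary, it is enough to exclude this zero set. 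The main technical obstacle is not conceptual but bookkeeping: tracking which matrix elements of $G^z$ outside $P_l \mathcal{K}_B^a$ appear when one expands $G^z S_\pi G^z_\pi$, and verifying via Lemma \ref{l1} that all of them either vanish or reduce back to entries of $P_l G^z P_l$ via Proposition \ref{a=ag}.
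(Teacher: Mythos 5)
Your proposal is correct and follows essentially the same route as the paper: the resolvent identity (\ref{reseq}) together with Lemma \ref{l1} and the invariance of $P_l\mathcal{K}_B^a$ under $U_\pi$ gives (\ref{equa-res1}), the invertibility hypothesis closes the block $P_lG^zP_3^-$ exactly as in (\ref{matriximplict})--(\ref{defag}), and Proposition \ref{a=ag} is what turns $A(G^z_\pi)$ into $A(P_lG^zP_l)$, which is precisely the paper's ``abuse of notation''. Your closing remark that the vanishing of the determinant of $\un+P_3^-S_\pi P_2^+G^z_\pi P_3^-$ corresponds to the condition $M(g(z))=0$ is likewise consistent with the lemma following the corollary in the paper.
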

We now compute the operator $A(P_l G^z P_l)$ explictely. This will  allow us to derive an implicit equation for the Carath\'eodory  function associated with the spectral measure  $d\mu_{a\otimes a}$. 

Let us denote by $G_{1,2}^z = \langle a \otimes a | G^z | ab \otimes c \rangle$  the entry (1,2) in the $6 \times 6$ matrix $P_l G^z P_l$ we want to compute.
We use the parametrization (\ref{para+1}) for $C$ and start with the invertibility condition in Corollary \ref{matimp}. All computations below assume $z\in\D$ and all  matrices are expressed in the  bases  ordered according to definition (\ref{defproj}). According to this convention,  Proposition \ref{a=ag} reads
\be
P_2^+ G^z_\pi P_3^- = \langle a \otimes a | G^z   ab \otimes c \rangle \un.
\ee

Explicit computations yield the following expressions for the first part of (\ref{a=ag}):
\begin{lem} The $4\times 4$ matrix $(\un + P_3^- S_\pi P^+_2 G^z_\pi P_3^-)$ is inversible whenever 
\be
M(G_{1,2}^z) := (\beta^2 -\alpha \gamma)(G^z_{1,2})^2+2\beta G^z_{1,2} +1 \neq 0
\ee
and is bloc diagonal with identical $2\times 2$ blocs given by
\be
\frac{1}{M(G_{1,2}^z)} \begin{pmatrix} \beta G_{1,2}^z +1  & -\gamma G_{1,2}^z \\
-\alpha G_{1,2}^z & \beta G_{1,2}^z +1 \end{pmatrix}.\vspace{-.5
cm}
\ee 
Consequently, 
\be
P_1^+S_\pi P^+_2G_\pi P_3^- (\un + P_3^- S_\pi P^+_2 G_\pi P_3^-)^{-1} P_3^- S_\pi P_2^- = \begin{pmatrix}
K(G_{1,2}^z) & 0\\
0 &K(G_{1,2}^z) 
\end{pmatrix}
\ee
with $K(G_{1,2}^z) = 2\alpha \gamma G^z_{1,2} + (2\alpha \beta\gamma -\gamma^3 -\alpha^3) (G^z_{1,2})^2$.
\end{lem}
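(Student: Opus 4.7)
The plan is a direct computation in carefully chosen bases, using Proposition \ref{a=ag} as the key input. That proposition says that, with the basis pairings between $P_2^+$ and $P_3^-$ it singles out, $P_2^+ G^z_\pi P_3^-=G^z_{1,2}\,\un_{4\times 4}$. So the whole task reduces to evaluating the three rectangular blocks $P_3^- S_\pi P_2^+$, $P_3^- S_\pi P_2^-$ and $P_1^+ S_\pi P_2^+$, using Lemma \ref{l1} and the explicit form $C-C_\pi=\mbox{circ}(\alpha,\gamma,\beta)$ produced by the parametrization \eqref{para+1}.

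For the first statement, I order the basis of $P_3^-$ as $\{abab\otimes c, abac\otimes a, abca\otimes b, abcb\otimes c\}$ and that of $P_2^+$ to match the pairings of Proposition \ref{a=ag}. Computing $S(\un\otimes (C-C_\pi))$ on each vector $aba\otimes \tau$ and $abc\otimes \tau$ and projecting onto $P_3^-$ (using that $aba$ and $abc$ are odd sites of length $3$, so that $S_{ab}$ appends $b$, $S_{bc}$ appends $c$, $S_{ca}$ appends $a$) shows that $P_3^- S_\pi P_2^+$ is block diagonal, with each of the two identical $2\times 2$ blocks having diagonal entries $\beta$ and off-diagonal entries $\gamma$ and $\alpha$, in the circulant pattern of $C-C_\pi$. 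Right-multiplying by $G^z_{1,2}\,\un$ and adding $\un$ yields the block-diagonal form of $\un+P_3^- S_\pi P_2^+ G^z_\pi P_3^-$; its common $2\times 2$ block has determinant exactly $M(G^z_{1,2})$, and the standard $2\times 2$ inverse formula returns the stated block inverse.

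For the second statement, the same local computation gives $P_3^- S_\pi P_2^-$ and $P_1^+ S_\pi P_2^+$ as rectangular blocks whose only nonzero entries are $\alpha$ and $\gamma$, arranged by the $aba$/$abc$ subtree decomposition (the entries $\alpha,\gamma$ encode the two "going back" paths from $P_2^-$ into $P_3^-$ and from $P_2^+$ into $P_1^+$ coming respectively from the $c$ and the $a,b$ coin states on the length-$3$ sites). After ordering the bases of $P_1^+$ and $P_2^-$ compatibly with this decomposition, the three-factor product $P_1^+ S_\pi P_2^+\,G^z_\pi P_3^-\,(\un+P_3^- S_\pi P_2^+G^z_\pi P_3^-)^{-1}\,P_3^- S_\pi P_2^-$ becomes block diagonal, and a short $2\times 2$ matrix multiplication identifies the common diagonal entry as $2\alpha\gamma G^z_{1,2}+(2\alpha\beta\gamma-\alpha^3-\gamma^3)(G^z_{1,2})^2=K(G^z_{1,2})$.

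The main obstacle is bookkeeping. The projectors $P_k^\pm$ each live in small but distinct subspaces, and the block-diagonal structure only appears cleanly when (a) Proposition \ref{a=ag}'s four paired vectors sit in matching positions and (b) the $aba$ and $abc$ subtrees appear in the same order across all the projectors. No individual step is harder than a $2\times 2$ matrix operation; the content of the proof is precisely that the $C_\pi$ coin matrices on $|x|\leq 3$ decouple the two subtrees rooted at $aba$ and $abc$, making every intermediate matrix block diagonal with respect to this splitting.
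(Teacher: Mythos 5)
Your proposal takes the same route as the paper, which offers no argument beyond the phrase ``explicit computations yield'': you are simply supplying those computations, and the structural points all check out. With $C-C_\pi=\mbox{circ}(\alpha,\gamma,\beta)$ and the shift rules on odd sites ($S_{ab}$ appends $b$, etc.), the matrix $P_3^-S_\pi P_2^+$ is indeed block diagonal with identical blocks $\bigl(\begin{smallmatrix}\beta&\gamma\\ \alpha&\beta\end{smallmatrix}\bigr)$ in the bases paired by Proposition \ref{a=ag}, so $\un+P_3^-S_\pi P_2^+G^z_\pi P_3^-$ has blocks $\bigl(\begin{smallmatrix}1+\beta g&\gamma g\\ \alpha g&1+\beta g\end{smallmatrix}\bigr)$ with determinant $M(g)$, and the stated inverse follows; likewise $P_3^-S_\pi P_2^-$ and $P_1^+S_\pi P_2^+$ have the $(\alpha,\gamma)$ entries you describe.

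One concrete point you should not gloss over: carrying out the last multiplication honestly, the factor $1/M(G^z_{1,2})$ coming from the inverse does \emph{not} cancel. The common diagonal entry of the product is
\be
\frac{g}{M(g)}\,(\gamma\ \ \alpha)\begin{pmatrix}1+\beta g&-\gamma g\\-\alpha g&1+\beta g\end{pmatrix}\begin{pmatrix}\alpha\\\gamma\end{pmatrix}=\frac{2\alpha\gamma g+(2\alpha\beta\gamma-\alpha^3-\gamma^3)g^2}{M(g)}=\frac{K(g)}{M(g)},
\ee
not $K(g)$. This is in fact what the paper itself uses downstream: the subsequent definition $D(G^z_{1,2})=\beta-K(G^z_{1,2})/M(G^z_{1,2})$ is exactly $P_1^+S_\pi P_2^-$ (which equals $\beta\,\un$) minus this product, so the lemma's final display is missing the prefactor $1/M(G^z_{1,2})$. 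Your write-up asserts the product equals $\mathrm{diag}(K,K)$ on the nose, which your own computation, if completed, would contradict; you should carry the $1/M$ factor and flag the typo rather than reproduce it.
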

In turn, one checks that this implies that the matrix $A(P_lG^zP_l)$ in (\ref{matimp}) reads
\begin{prop} Assume $z\in \D$ and  $M(G_{1,2}^z)\neq 0$ and set 
$
D(G_{1,2}^z) = \beta - \frac{K(G^z)}{M(G_{1,2}^z)}. 
$
Then, 
\bea
&&A (P_lG^zP_l)= -\frac{1}{1-z^6} \times \\ \nonumber
&&\hspace{-1cm} \begin{pmatrix}
\beta+\alpha z^2+\gamma z^4 & \beta z+\alpha z^3+\gamma z^5 &\beta z^2+\alpha z^4+\gamma &\beta z^3+\alpha z^5+\gamma z&\beta z^4+\alpha +\gamma z^2&\beta z^5+\alpha z+\gamma z^3 \\
0 & 0 & 0 & 0 & 0 & 0 \\
\alpha+\gamma z^2+\beta z^4 & \alpha z+\gamma z^3+\beta z^5 &\alpha z^2+\gamma z^4+\beta &\alpha z^3+\gamma z^5+\beta z&\alpha z^4+\gamma +\beta z^2&\alpha z^5+\gamma z+\beta z^3 \\
D(G_{1,2}^z) z^3 & D(G_{1,2}^z) z^4 & D(G_{1,2}^z) z^5 & D(G_{1,2}^z) & D(G_{1,2}^z) z & D(G_{1,2}^z) z^2 \\
\gamma+\beta z^2+\alpha z^4 & \gamma z+\beta z^3+\alpha z^5 &\gamma z^2+\beta z^4+\alpha &\gamma z^3+\beta z^5+\alpha z&\gamma z^4+\beta +\alpha z^2&\gamma z^5+\beta z+\alpha z^3 \\
D(G_{1,2}^z) z & D(G_{1,2}^z) z^2 &D(G_{1,2}^z) z^3 & D(G_{1,2}^z) z^4 & D(G_{1,2}^z) z^5 & D(G_{1,2}^z) 
 \end{pmatrix}.
\eea
\end{prop}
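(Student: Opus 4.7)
The statement is a direct block-matrix calculation from the definition~(\ref{defag}) of $A(G^z_\pi)$ together with the two preceding lemmas; the self-similarity that drives the whole argument has already been absorbed into Proposition~\ref{a=ag} (used to collapse $P_2^+ G^z_\pi P_3^-$ to $G^z_{1,2}\,\un$) and into the explicit form of $P_l G^z_\pi P_l$. What remains is bookkeeping.

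\smallskip

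The plan is to exploit the parametrization $C = \mathrm{circ}(\alpha,\gamma,\beta+1)$, so that $C - C_\pi = \mathrm{circ}(\alpha,\gamma,\beta)$ becomes the circulant matrix whose three entries are literally the free parameters. Using the definition~(\ref{circa}) of the shift $S$, I would first tabulate the action of $S_\pi = S(\un\otimes(C-C_\pi))(P_1+P_2)$ in the ordered bases fixed by~(\ref{defproj}), producing the small rectangular matrices representing $P_0^+ S_\pi P_1$, $P_1^+ S_\pi P_2^-$ and $P_2^- S_\pi P_1$; the remaining loop-back factor $P_1^+ S_\pi P_2^+ G^z_\pi P_3^- (\un + P_3^- S_\pi P_2^+ G^z_\pi P_3^-)^{-1} P_3^- S_\pi P_2^-$ is $\mathrm{diag}(K(G^z_{1,2}),K(G^z_{1,2}))$ by the preceding lemma.

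\smallskip

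Next I would multiply each of these small matrices on the right by the explicit $6\times 6$ expression for $P_l G^z_\pi P_l$ given earlier in this subsection. Because that matrix has the cyclic pattern $(1,z,z^2,z^3,z^4,z^5)/(1-z^6)$, each resulting row of $A(P_l G^z P_l)$ inherits the same cyclic pattern, suitably shifted. The row in which a contribution lands is determined by its left-projector: the $P_0^+$ term fills row~$1$; the two $P_2^-$ terms fill rows~$3$ and~$5$; row~$2$, indexed by $P_1^-$, is identically zero because no term in~(\ref{defag}) carries $P_1^-$ on the left. The two contributions landing in $P_1^+$ (rows~$4$ and~$6$), namely the direct piece $-P_1^+ S_\pi P_2^- G^z_\pi P_l$ producing a coefficient $\beta$, and the loop-back piece producing the coefficient $-K(G^z_{1,2})/M(G^z_{1,2})$, combine to yield $D(G^z_{1,2}) = \beta - K(G^z_{1,2})/M(G^z_{1,2})$, as claimed.

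\smallskip

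The only genuine obstacle is organizational: one must keep the ordered six-element basis $\{a\otimes a, ab\otimes c, abc\otimes b, ab\otimes a, aba\otimes c, ab\otimes b\}$ aligned consistently across the four terms, and verify that the cyclic shift of the powers of $z$ matches column by column. Because the circulant structure of $C - C_\pi$ makes the boundary operators algebraically simple, the stated matrix then drops out by direct multiplication.
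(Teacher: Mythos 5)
Your plan is exactly the computation the paper itself leaves implicit: expand the definition (\ref{defag}) term by term, use Proposition \ref{a=ag} and the preceding lemma for the loop-back block, and multiply against the explicit cyclic matrix $P_lG^z_\pi P_l$; your row-by-row accounting (row 1 from $P_0^+$, rows 3 and 5 from $P_2^-$, rows 4 and 6 from $P_1^+$ combining the direct coefficient $\beta$ with the loop-back to give $D(G^z_{1,2})$, row 2 identically zero) is correct and matches the paper's approach. Only keep the bookkeeping of the factor $1/M(G^z_{1,2})$ consistent: if the loop-back product is taken literally as $\mathrm{diag}\big(K(G^z_{1,2}),K(G^z_{1,2})\big)$, the combination would read $\beta-K$, so the $1/M$ coming from the inverse $(\un+P_3^-S_\pi P_2^+G^z_\pi P_3^-)^{-1}$ must be carried explicitly for $D=\beta-K/M$ to emerge as stated.
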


From this expression, we relate entries of $P_lG^zP_l$ to $G_{1,2}^z$ to derive an equation for $G_{1,2}^z$.

\begin{lem} For $0<|z|<1$ such that $M(G_{1,2}^z)\neq 0$,
\be
G^z_{1,1} = \frac1z ( G^z_{1,2} -1 ), \ \ G^z_{l,1} = \frac1z G^z_{l,2},\ \  l = 2,\dots,6. 
\ee
\end{lem}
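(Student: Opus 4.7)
The plan is to exploit the boundary condition at the root stated in Definition \ref{bc1}, namely $U_a(C)\,(a\otimes a) = ab\otimes c$. In the ordered basis of $P_l\cK_B^a$ used to build the $6\times 6$ matrix $P_lG^zP_l$, write $e_1 = a\otimes a$ and $e_2 = ab\otimes c$. Then this boundary condition reads simply $U e_1 = e_2$, so that
\[
(U-z)\,e_1 = e_2 - z\,e_1.
\]
Applying $G^z = (U-z)^{-1}$ to both sides gives the identity $e_1 = G^z e_2 - z\, G^z e_1$ in $\cK_B^a$.

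Next, I would take the inner product of this identity against each basis vector $e_l$, $l=1,\dots,6$, of $P_l\cK_B^a$. This yields
\[
\delta_{l,1} \;=\; G^z_{l,2} - z\,G^z_{l,1}, \qquad l=1,\dots,6,
\]
since $\langle e_l|e_1\rangle=\delta_{l,1}$. Separating $l=1$ from $l\geq 2$ and dividing by $z$ (permitted since $z\neq 0$) produces exactly the two stated formulas.

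There is really no obstacle: the claim encodes one single column of the resolvent equation $(U-z)G^z=\un$, specialized to a vector on which $U$ acts exceptionally simply because of the one-dimensional image of the root boundary condition $U e_1=e_2$. The hypothesis $z\neq 0$ is used only to invert $z$, and the hypothesis $M(G^z_{1,2})\neq 0$ does not enter this particular algebraic step; it is carried along from the framework of Corollary \ref{matimp} in view of the subsequent implicit equation for $G^z_{1,2}$.
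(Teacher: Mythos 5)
Your proposal is correct, and it takes a genuinely different, more elementary route than the paper. You read the identities directly off the resolvent identity at the root: since the coin at the root is $C_\pi$, the boundary condition (stated just before Definition \ref{bc1}) gives exactly $U\,(a\otimes a)=ab\otimes c$, i.e.\ $U e_1=e_2$ in the ordered basis of $P_l\cK_B^a$, hence $(U-z)e_1=e_2-ze_1$, so $e_1=G^z e_2-zG^z e_1$, and pairing with the orthonormal vectors $e_l$ gives $\delta_{l,1}=G^z_{l,2}-zG^z_{l,1}$, which is the claim after dividing by $z\neq0$. The paper instead extracts the same relations from the implicit matrix equation of Corollary \ref{matimp}: writing $a(z)=-(1-z^6)A(P_lG^zP_l)$, it uses the fact that the first column of $a(z)$ is $\frac1z$ times the second, obtains $G^z_{l,1}=\frac1z G^z_{l,2}-G^z_{\pi,l,2}/z+G^z_{\pi,l,1}$, and then evaluates the explicit entries of $P_lG^z_\pi P_l$. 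Your argument buys simplicity and a slightly stronger statement: it never invokes $A(\cdot)$, so, as you correctly note, the hypothesis $M(G^z_{1,2})\neq0$ plays no role in these identities and is only needed downstream when they are combined with the implicit equation; the paper's route has the mild advantage of staying entirely within the single matrix identity already established, at the cost of carrying that invertibility condition. One minor point of bookkeeping: by Remark \ref{rembc} the operator at stake is really $U_{ab}(C)$ on $\cK_B^{ab}$, but since $U_a(C)$ and $U_{ab}(C)$ act identically on $a\otimes a$ and all six basis vectors of $P_l\cK_B^a$ lie in $\cK_B^{ab}$, your computation applies verbatim.
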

{\bf Proof:} Setting $a(z)=-(1-z^6)A (P_lG^zP_l)$, we get
\begin{align*}
-(1-z^6) G^z_{l,1} &= \sum_k G^z_{l,k} a(z)_{k,1} -(1-z^6) G^z_{\pi,l,1} 
                 = \frac{1}{z} \sum_k G^z_{l,k} a(z)_{k,2} -(1-z^6) G^z_{\pi,l,1} \\
                 &= \frac{1}{z} \sum_k (G^z_{l,k} a(z)_{k,2} -(1-z^6) G^z_{\pi,l,2}) +(1-z^6) G^z_{\pi,l,2}/z -(1-z^6) G^z_{\pi,l,1} \\
                 &= -\frac{1}{z}(1-z^6) G^z_{l,2} +(1-z^6) G^z_{\pi,l,2}/z -(1-z^6) G^z_{\pi,l,1}.
\end{align*}
Thus,
$
G^z_{l,1} = \frac{1}{z} G^z_{l,2} - G^z_{\pi,l,2}/z + G^z_{\pi,l,1}
$
and replacing $G^z_\pi$ by its values proves the result. \ep\\
\begin{rem}
The link between $G^z_{1,1}$ and $G^z_{1,2}$ together with (\ref{rescara}) yields 
\be
F(z)=2G^z_{1,2}-1\equiv 2g(z)-1
\ee
where $F$ is the Carath\'eodory function of $d\mu_{a\otimes a}$, and 
\be
g(z)=\bra a\otimes a | U_a(C)(U_a(C)-z)^{-1} a\otimes a\ket
\ee
appears in Theorem \ref{maintech}. 
\end{rem}

Similar considerations on the other matrix elements immediately yield
\begin{lem} For all $z\in \D$  such that $M(G^z_{1,2})\neq 0$,  we have
\bea
&&G^z_{1,1} z=  G^z_{1,2} -1, \ \ 
G^z_{1,4}(1+D(G^z_{1,2})) =z G^z_{1,3}, \ \ G^z_{1,6}(1+D(G^z_{1,2})) = z G^z_{1,5},\\
 &&\begin{pmatrix} \alpha & \beta +1 & \gamma \\
\gamma & \alpha & \beta+1  \\ \beta+1   & \gamma & \alpha \end{pmatrix}
\begin{pmatrix}G^z_{1,1}\\ G^z_{1,5} \\ G^z_{1,3}\end{pmatrix}=\begin{pmatrix}zG^z_{1,4}\\ zG^z_{1,2} \\ zG^z_{1,6} \end{pmatrix}.\label{mateq}
\eea
\end{lem}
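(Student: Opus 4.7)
The plan is to extract the six scalar identities of the lemma from the matrix equation $(P_l G^z P_l)(\un - A(P_l G^z P_l)) = P_l G^z_\pi P_l$ of Corollary \ref{matimp}. Reading off the first row entry by entry (column $m = 1,\ldots, 6$) and multiplying through by $1-z^6$ gives six scalar equations $E_m$ of the form $(1-z^6) G^z_{1,m} + \sum_{j=1}^6 G^z_{1,j} B_{j,m} = (1-z^6) G^z_{\pi,1,m}$, where $B = -(1-z^6) A(P_l G^z P_l)$ is the $6 \times 6$ matrix displayed just before the previous lemma and $(1-z^6) G^z_{\pi,1,m} \in \{z^5, 1, z, z^2, z^3, z^4\}$. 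The six required identities will then follow by forming the combinations $z E_{m-1} - E_m$ (indices modulo $6$) and exploiting a near shift-invariance of $B$ under $B_{j,m-1} \mapsto z B_{j,m-1}$, together with the analogous shift property on the right-hand side.

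The key observation is that the six columns of $B$ split into two classes. For $m \in \{2, 4, 6\}$, a direct inspection shows that $z B_{j,m-1} = B_{j,m}$ for every $j \in \{1, 2, 3, 5\}$, and also for $j \in \{4, 6\}$ except for one distinguished entry per column (namely $j = 4$ when $m = 4$ and $j = 6$ when $m = 6$) where the discrepancy equals $-D(G^z_{1,2})(1-z^6)$. On the right-hand side, $z \cdot (1-z^6) G^z_{\pi, 1, m-1} = (1-z^6) G^z_{\pi, 1, m}$ for $m \in \{4, 6\}$, and equals $(1-z^6) G^z_{\pi, 1, m} - (1-z^6)$ for $m = 2$. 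The combinations $z E_1 - E_2$, $z E_3 - E_4$, and $z E_5 - E_6$ then collapse, after division by $1-z^6$, to $zG^z_{1,1} = G^z_{1,2} - 1$, $(1+D)G^z_{1,4} = zG^z_{1,3}$, and $(1+D) G^z_{1,6} = zG^z_{1,5}$, respectively, which are the three simple relations of the lemma.

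For $m \in \{1, 3, 5\}$ (interpreting $m - 1 = 0$ as $6$), the shift $z B_{j, m-1} - B_{j, m}$ instead vanishes for $j \in \{2, 4, 6\}$ and is of the form $c_{j,m}(1-z^6)$ for $j \in \{1, 3, 5\}$, where $c_{j,m} \in \{\alpha, \gamma, \beta+1\}$ according to the cyclic arrangement of these entries in rows $1, 3, 5$ of $B$. Consequently, $z E_2 - E_3$, $z E_4 - E_5$, and $z E_6 - E_1$ reproduce the three rows of (\ref{mateq}), with the circulant coefficient matrix $C = \mbox{circ}(\alpha, \gamma, \beta+1)$ emerging automatically from this cyclic arrangement. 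Since $1 - z^6 \neq 0$ on $\D$, all divisions are legitimate. The main obstacle is the bookkeeping required to verify the precise pattern by which $\alpha, \gamma, \beta+1$ permute across the three odd-column combinations and to check that the two $D$-residues are correctly located; both are mechanical verifications directly on the explicit form of $B$ and the vector $((1-z^6) G^z_{\pi,1,m})_{m=1}^6$.
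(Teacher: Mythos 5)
Your proposal is correct and follows exactly the route the paper intends: the paper gives no explicit proof beyond ``similar considerations on the other matrix elements immediately yield,'' referring to the column-shift manipulation of the implicit equation of Corollary \ref{matimp} used for the preceding lemma, which is precisely what your combinations $zE_{m-1}-E_m$ carry out (and which I verified do produce all six identities, including the circulant matrix in (\ref{mateq}), with the $+1$'s coming from the diagonal terms $(1-z^6)G^z_{1,m}$). The only nitpick is descriptive: the odd-column discrepancies are $-\alpha(1-z^6)$, $-\beta(1-z^6)$, $-\gamma(1-z^6)$ rather than literally involving $\beta+1$, but since you flag this bookkeeping explicitly and it works out, the argument stands.
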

\begin{rem}
Since the resolvent $G^z$ is analytic in $\D$, there are only finitely many $z$ such that $M(G_{1,2}^z)=0$ or $D(G_{1,2}^z)+1=0$ in any compact set of $\D$.
\end{rem}
Expressing $G^z_{1,6}, G^z_{1,4}, G^z_{1,1}$ in terms of $G^z_{1,5}, G^z_{1,3}, G^z_{1,2}$ in (\ref{mateq}), one deduces an equation for $g(z)=G^z_{1,2}$ by elimination of $G^z_{1,5}, G^z_{1,3}$ via appropriate linear combinations of the resulting equations. The result, which  implies Theorem \ref{maintech} directly, reads as follows.
Let 
\bea
P(g)&=& (\alpha^3+\beta^3+\gamma^3-3\alpha \beta\gamma+\beta^2-\alpha\gamma)g^2+2(\beta(\beta+1)-\alpha\gamma)g+(\beta+1)
\eea
so that $D(g)+1={P(g)/M(g)}$.
\begin{prop} Let $z\in \D$ be such that $M(g(z))\neq 0$. Then  $g(z)$ satisfies $\Phi(g(z),z)=0$, where
\bea 
\Phi(g,z)&=&P(g)\Big\{\alpha [\alpha z^2+(g-1)((\beta+1)^2-\alpha\gamma)][((\beta+1)^2-\alpha\gamma)P(g)+z^2\alpha M(g)]\\ \nonumber
&&\hspace{1.5cm}+(\beta+1)[(\beta+1)z^2+(g-1)(\alpha^2-(\beta+1)\gamma)]\times \\ \nonumber
&&\hspace{5.8cm}\times[(\alpha^2-(\beta+1)\gamma)P(g)+z^2(\beta+1)M(g)]\Big\}
\\  \nonumber
&&-((z^2-\gamma)g+\gamma)[(\alpha^2-(\beta+1)\gamma)P(g)+z^2(\beta+1)M(g)]\times \\
&&\hspace{6cm}\times [((\beta+1)^2-\alpha\gamma)P(g)+z^2\alpha M(g)]. \nonumber
\eea
\end{prop}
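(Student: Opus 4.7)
The plan is to combine the three relations of the previous Lemma to eliminate the unknowns $G^z_{1,3}$ and $G^z_{1,5}$, obtaining a single polynomial identity relating $g(z) = G^z_{1,2}$ to $z$.

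First I would use $G^z_{1,1} = (g-1)/z$ together with $G^z_{1,4} = z G^z_{1,3} M(g)/P(g)$ and $G^z_{1,6} = z G^z_{1,5} M(g)/P(g)$, which follow from $1 + D(g) = P(g)/M(g)$, to eliminate $G^z_{1,1}, G^z_{1,4}, G^z_{1,6}$ from the matrix equation (\ref{mateq}). Multiplying through by $z P(g)$ to clear denominators, the result is an overdetermined linear system of three equations in the two unknowns $u := z G^z_{1,3}$ and $v := z G^z_{1,5}$, with coefficients polynomial in $g, z$ and in $\tilde A := \gamma P(g) - z^2 M(g)$.

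Next I would exploit the key algebraic identities
\[
\alpha \tilde A - (\beta+1)^2 P(g) = -N_1, \qquad (\beta+1)\tilde A - \alpha^2 P(g) = -N_2,
\]
where
\[
N_1 := ((\beta+1)^2-\alpha\gamma)P(g) + z^2 \alpha M(g), \qquad N_2 := (\alpha^2-(\beta+1)\gamma)P(g) + z^2 (\beta+1)M(g)
\]
are precisely the bracketed factors appearing in the statement of $\Phi$. These identities show that Cramer's rule applied to pairs of the three equations produces expressions for $u, v$ whose denominators are $\pm N_1$, $\pm N_2$, or $\tilde A^2 - \alpha(\beta+1)P(g)^2$. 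Substituting the Cramer solutions into the remaining equation and clearing denominators by $N_1 N_2$ produces the target relation $\Phi(g,z) = 0$: the term $-((z^2-\gamma)g+\gamma) N_1 N_2$ comes from the right-hand side of the equation treated as the consistency check, while the curly-braced sum collects the substituted expressions for $u$ and $v$ weighted by $\alpha$ and $\beta+1$.

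The main obstacle is purely computational: the coefficients $c_j(z)$ of $\Phi$, viewed as a polynomial in $g$, are of high degree in all variables and cubic in the parameters $\alpha, \beta, \gamma$, so the bookkeeping has to be meticulous. A useful conceptual shortcut is afforded by the involution $\alpha \leftrightarrow \beta+1$, $u \leftrightarrow v$, $N_1 \leftrightarrow N_2$ of the underlying linear system, which reflects the exchange symmetry between $\cK_B^{ab}$ and $\cK_B^{ac}$ built into the boundary condition at the root of $\cT_B^a$ (cf.~Proposition~\ref{symcyc}); it forces $\Phi$ to be symmetric under this involution and roughly halves the amount of independent calculation to be performed.
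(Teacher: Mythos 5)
Your core strategy coincides with the paper's: substitute $G^z_{1,1}=(g-1)/z$, $G^z_{1,4}=zG^z_{1,3}M(g)/P(g)$, $G^z_{1,6}=zG^z_{1,5}M(g)/P(g)$ into (\ref{mateq}), obtain three equations affine-linear in $u=zG^z_{1,3}$, $v=zG^z_{1,5}$, and eliminate the two unknowns; your preparatory identities $\alpha\tilde A-(\beta+1)^2P=-N_1$ and $(\beta+1)\tilde A-\alpha^2P=-N_2$ (with $\tilde A=\gamma P-z^2M$) are correct and do explain where the bracketed factors come from. But the final assembly is not carried out, and as described it does not hold together: if you obtain $u$ from one pair of equations and $v$ from another, all three equations have been consumed and there is no ``remaining equation'' left to substitute into; if instead you solve a single pair (say the first and third) for both unknowns, the common denominator is $\tilde A^2-\alpha(\beta+1)P^2$, not $N_1N_2$. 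Any consistent variant of the elimination produces the $3\times 3$ bordered determinant of the overdetermined system, and verifying that this determinant is (a multiple of) the displayed $\Phi$ is precisely the computation the proposition requires; it cannot be replaced by the observation that the individual Cramer pieces ``look like'' the terms of $\Phi$.

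The labor-saving shortcut you invoke is moreover false: the involution $\alpha\leftrightarrow\beta+1$, $u\leftrightarrow v$, $N_1\leftrightarrow N_2$ is not a symmetry of the linear system, because $P$, $M$, and hence $\tilde A$, $N_1$, $N_2$ are not invariant under exchanging $\alpha$ and $\beta+1$ (the parametrization $C=\mbox{circ}(\alpha,\gamma,\beta+1)$ treats $c_0$ and $c_2$ on a different footing; e.g.\ $M(g)=(\beta^2-\alpha\gamma)g^2+2\beta g+1$ is manifestly asymmetric). Nor does Proposition \ref{symcyc} supply such a symmetry: a circulant $C$ commutes with $C_\sigma$, so the exchange of $\cK_B^{ab}$ and $\cK_B^{ac}$ relates $U_{ab}(C)$ to $U_{ac}(C)$ with the \emph{same} coin and says nothing about swapping $\alpha$ with $\beta+1$. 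Consistently, neither the displayed $\Phi$ nor the expanded coefficients of Proposition \ref{opc} are invariant under that swap (compare the factors $(1+\beta)^2$ versus what they would become in $c_0$), so inferring half of the coefficients from the other half by this involution would produce wrong terms. The elimination itself is the right route — and the paper's — but it has to be pushed through explicitly, term by term.
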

We spell out the detail, for the record.  
\begin{prop}\label{opc} We have
$\Phi(g,z)=\sum_{j=0}^5g^jc_j(z^2)$ with
\bea
\hspace{-1cm}c_5(x)&=& -x^3\{(\beta^2-\alpha \gamma)^2 \\
 &&+x^2\gamma(\beta^2-\alpha \gamma)(2 (\alpha^3+\beta^3+\gamma^3-3\alpha \beta\gamma)+3(\beta^2-\alpha\gamma)) \nonumber\\
&&+x(\alpha\beta+\alpha-\gamma^2)(\alpha^3+\beta^3+\gamma^3-3\alpha \beta\gamma+\beta^2-\alpha\gamma)\times\nonumber\\
&&\hspace{5cm}\times
(\alpha^3+\beta^3+\gamma^3-3\alpha \beta\gamma+3(\beta^2-\alpha\gamma)) \nonumber\\
&&+(\alpha+\beta+\gamma+1)(\alpha^2+\beta^2+\gamma^2-\beta\gamma-\alpha\gamma-\alpha\beta-\alpha-\gamma+2\beta+1)\times\nonumber\\
&&\hspace{5cm}\times(\alpha^3+\beta^3+\gamma^3-3\alpha \beta\gamma+\beta^2-\alpha\gamma)^2 \}.\nonumber\\
c_4(x)&=& x^3 4\beta(\alpha \gamma-\beta^2) \\
         &&+x^2\gamma(4\gamma^3\beta+7\beta^4+4\alpha^3 \beta-18\alpha\beta^3\gamma-12\alpha \beta\gamma+3\gamma^2\alpha^2+12\beta^3) \nonumber\\
&&+x2(\gamma^2-\alpha(\beta+1))(\alpha\gamma^4-\gamma^3\beta^2-4\gamma^3\beta-3\alpha^2\gamma^2(\beta+1)+\alpha^4\gamma\nonumber\\
&&\hspace{2cm}
+6\alpha\beta\gamma+4\alpha\beta^3\gamma+18\alpha \beta^2\gamma-\beta^5-\alpha^3\beta^2-7\beta^4-6\beta^3-4\alpha^3\beta)) \nonumber\\
&&-(\alpha+\beta+\gamma+1)(\alpha^2+\beta^2+\gamma^2-\beta\gamma-\alpha\gamma-\alpha\beta-\alpha-\gamma+2\beta+1)\times\nonumber\\
&&\hspace{0cm}\times(\alpha^3+\beta^3+\gamma^3-3\alpha \beta\gamma-3\beta^2+3\alpha\gamma-4\beta)(\alpha^3+\beta^3+\gamma^3-3\alpha \beta\gamma+\beta^2-\alpha\gamma) .\nonumber
\\
c_3(x)&=& x^3 2(\alpha \gamma-3\beta^2) +x^2 2\gamma(\alpha^3+4\beta^3+\gamma^3-6\alpha \beta\gamma+9\beta^2-3\alpha\gamma) \nonumber\\
&&+x2(\gamma^2-\alpha(\beta+1))
(-2\gamma^3+\gamma^3\beta+3\alpha\gamma(1-\beta^2)+12\alpha \beta\gamma\nonumber\\
&&\hspace{4.3cm}
-9\beta^2+\alpha^3\beta+\beta^4-8\beta^3-2\alpha^3) \\
&&+2(\alpha+\beta+\gamma+1)(\alpha^2+\beta^2+\gamma^2-\beta\gamma-\alpha\gamma-\alpha\beta-\alpha-\gamma+2\beta+1)\times\nonumber\\
&&\hspace{0cm}\times(\alpha^3+4\beta^3+\gamma^3-\alpha \gamma+3\beta^2-6\alpha\beta\gamma-\beta^4-\alpha^3\beta+3\alpha\beta^2\gamma-\beta\gamma^3-2\beta^5\nonumber\\
&&\hspace{3cm}-2\alpha^3\beta^2+8\alpha\beta^3\gamma+2\alpha^4 \gamma-6\alpha^2\beta\gamma^2-2\gamma^3\beta^2+2\alpha \gamma^4) .\nonumber\\
c_2(x)&=& -x^3 4\beta +x^2 2\gamma(6\beta+\beta^2-\alpha\gamma) \\
&&+x 2(\gamma^2-\alpha(\beta+1))
(\alpha^3+4\beta^3+\gamma^3+2\alpha\gamma-6\alpha \beta\gamma
-2\beta^2-6\beta) \nonumber\\
&&-2(\alpha+\beta+\gamma+1)(\alpha^2+\beta^2+\gamma^2-\beta\gamma-\alpha\gamma-\alpha\beta-\alpha-\gamma+2\beta+1)\times\nonumber\\
&&\hspace{0cm}\times(\alpha^3+4\beta^3+\gamma^3+\alpha \gamma-\beta^2-2\beta-6\alpha\beta\gamma-7\alpha\beta^2\gamma+3\beta^4+\alpha^3\beta+\beta\gamma^3+2\alpha^2\gamma^2) .\nonumber\\
c_1(x)&=& -x^3 +x^2 \gamma(3-2\beta) -x (\gamma^2-\alpha(\beta+1))
(3+4\alpha\gamma-4\beta-7\beta^2) \\
&&+(1+\beta)(\alpha+\beta+\gamma+1)
(1+4\alpha\gamma-3\beta-4\beta^2)\times
\nonumber\\
&&\hspace{3.5cm}\times
(\alpha^2+\beta^2+\gamma^2-\beta\gamma-\alpha\gamma-\alpha\beta-\alpha-\gamma+2\beta+1).\nonumber \\
c_0(x)&=& -x^2 \gamma+x 2(\beta+1) (\gamma^2-\alpha(\beta+1)) +(1+\beta)^2(\alpha+\beta+\gamma+1)
\times \\ \nonumber
&&\hspace{3.2cm}\times
(\alpha^2+\beta^2+\gamma^2-\beta\gamma-\alpha\gamma-\alpha\beta-\alpha-\gamma+2\beta+1).
\eea
\end{prop}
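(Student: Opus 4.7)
The statement is purely a restatement of the formula for $\Phi(g,z)$ established in the preceding proposition, expanded and regrouped as a polynomial in $g$ of degree $5$ with coefficients $c_j$ that are polynomials in $x = z^2$. The plan is therefore to carry out this expansion term by term and read off each $c_j$, using the factored form already available.

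Setting $x = z^2$ and abbreviating $A = (\beta+1)^2 - \alpha\gamma$ and $B = \alpha^2 - (\beta+1)\gamma$, the previous proposition gives
\begin{align*}
\Phi(g,z) &= P(g)\Big\{\alpha[\alpha x + (g-1)A][AP(g) + x\alpha M(g)] \\
&\qquad\quad + (\beta+1)[(\beta+1)x + (g-1)B][BP(g) + x(\beta+1)M(g)]\Big\} \\
&\quad - ((x-\gamma)g + \gamma)[BP(g) + x(\beta+1)M(g)][AP(g) + x\alpha M(g)].
\end{align*}
I would first expand each of the two inner brackets $AP(g)+x\alpha M(g)$ and $BP(g)+x(\beta+1)M(g)$ as quadratic polynomials in $g$, using the explicit forms of $P$ and $M$. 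The product of two such brackets is a quartic in $g$; multiplying by the extra factor $P(g)$ or by the linear factor $(x-\gamma)g + \gamma$ yields quintics in $g$ in both summands. Summing the contributions and collecting by powers of $g$ produces the six coefficients $c_j(x)$.

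For the extremal coefficients the computation is more transparent. The constant coefficient $c_0$ is obtained by setting $g=0$, using $P(0)=\beta+1$ and $M(0)=1$; the three terms then collapse directly to the claimed expression involving the recurring block $\alpha+\beta+\gamma+1$ and the quadratic $\alpha^2+\beta^2+\gamma^2-\beta\gamma-\alpha\gamma-\alpha\beta-\alpha-\gamma+2\beta+1$. The leading coefficient $c_5$ receives contributions only from the $g^2$ terms in $P$, $M$ and the linear $(g-1)$ or $(x-\gamma)g$ prefactors; the delicate point is that the $x^0$ contribution must cancel between the first and second summands and that the remaining pieces reorganize into the four-line factored form stated, in particular producing the squared factor $(\alpha^3+\beta^3+\gamma^3-3\alpha\beta\gamma+\beta^2-\alpha\gamma)^2$ in the last line. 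The intermediate $c_1,\ldots,c_4$ are then obtained by the same process, grading by $g^j$, $j=1,\ldots,4$.

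The main obstacle is pure bookkeeping rather than conceptual: the fully expanded form contains several dozen monomials in $(\alpha,\beta,\gamma,x)$, and the compact factored form of each $c_j$ requires recognizing the recurring blocks above. Two simplifications alleviate the work. First, the first summand of $\Phi$ is manifestly symmetric under the simultaneous exchanges $\alpha \leftrightarrow \beta+1$ and $A \leftrightarrow B$, and this symmetry is inherited by each $c_j$, halving the number of independent monomials to track. Second, the second summand is already presented in the form $(\text{linear in }g)\times(\text{quadratic in }g)\times(\text{quadratic in }g)$, so the degree count is immediate and the coefficients can be read off by standard convolution. In practice I would verify the final collection with a computer algebra system; the propositions' stated expressions are then checked against the symbolic expansion by comparison of coefficients.
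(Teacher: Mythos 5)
Your proposal follows essentially the same route as the paper, which offers no displayed proof either: the coefficients $c_j(x)$ are obtained precisely by expanding the factored expression for $\Phi(g,z)$ from the preceding proposition in powers of $g$ and collecting terms (with the mechanical expansion best checked symbolically), so the argument is correct in substance. One small caution: the exchange $(\alpha,A)\leftrightarrow(\beta+1,B)$ is only a formal symmetry of the braced expression with $P$ and $M$ treated as fixed symbols — since $P$ and $M$ are not invariant under $\alpha\leftrightarrow\beta+1$, the coefficients $c_j$ do not literally inherit this symmetry, though it can still serve to organize the bookkeeping.
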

These formulae simplify substantially in case $C\in O(3)\cap \mbox{Circ}(3)$, by repeated application of the following identities that hold in $\mbox{CO}_+(3)$, see (\ref{param12})
\bea
&&\alpha+\beta+\gamma=0, \ \ \ 
\alpha\gamma-\beta(\beta+1)=0, \ \ \ 
\alpha^2+\beta^2+\gamma^2+2\beta=0
\nonumber\\&&
\gamma^2-\alpha(\beta+1)-\gamma=0, \ \ \ 
\alpha^3+\beta^3+\gamma^3-3\alpha \beta\gamma=0.
\eea
Altogether, this yields 
\begin{prop}\label{cjcop}
If  $C\in \mbox{ CO}_+(3)$, we have
$M(g)=-\beta g^2+2\beta g+1, $
and
\bea
c_5(x)&=&-\beta^2(x-1)(x^2+x(1-3\gamma)+1)\nonumber\\
c_4(x)&=&\beta^2(4x^3-9\gamma x^2+6\gamma x-1)\nonumber\\
c_3(x)&=&2\beta((1-2\beta)x^3+3\gamma(\beta-1) x^2+3\gamma x-(\beta+1))\nonumber\\
c_2(x)&=&-2\beta(2x^3-5\gamma x^2+\gamma (3\beta+4)x-(\beta+1))\nonumber\\
c_1(x)&=&-x^3+\gamma(3-2\beta) x^2-3\gamma (1-\beta^2)x+(\beta+1)^2\nonumber\\
c_0(x)&=&-\gamma x^2+2\gamma (\beta+1)x-(\beta+1)^2.
\eea
\end{prop}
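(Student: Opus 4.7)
The plan is to substitute the five $\mathrm{CO}_+(3)$ identities (listed immediately before the statement) into the general polynomial coefficients $c_j(x)$ and $M(g)$ of Proposition \ref{opc}, and to show that the resulting expressions collapse to the compact forms claimed.

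First I would record the reductions that will be used repeatedly. From $\alpha\gamma=\beta(\beta+1)$ one gets $\beta^2-\alpha\gamma=-\beta$, which immediately yields $M(g)=-\beta g^2+2\beta g+1$ from the formula in Theorem \ref{maintech}. From $\alpha^3+\beta^3+\gamma^3-3\alpha\beta\gamma=0$ one gets that the long cubic factor appearing in every $c_j$ reduces to $\beta^2-\alpha\gamma=-\beta$, and in particular the factor
\[
(\alpha^3+\beta^3+\gamma^3-3\alpha\beta\gamma+\beta^2-\alpha\gamma)=-\beta,
\]
while
\[
(\alpha^3+\beta^3+\gamma^3-3\alpha\beta\gamma+3(\beta^2-\alpha\gamma))=-3\beta.
\]
From $\gamma^2-\alpha(\beta+1)=\gamma$ the prefactor $(\gamma^2-\alpha(\beta+1))$ that appears in the middle line of every $c_j$ simplifies to $\gamma$, and similarly $\alpha\beta+\alpha-\gamma^2=-\gamma$.

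Next I would treat the constant prefactor $(\alpha+\beta+\gamma+1)$, which is $1$ by the first identity, and the quadratic prefactor appearing on the bottom line of each $c_j$. Using $\alpha+\beta+\gamma=0$ we have $\alpha\beta+\beta\gamma+\gamma\alpha=\frac12[(\alpha+\beta+\gamma)^2-(\alpha^2+\beta^2+\gamma^2)]=\beta$ by identity~3, so
\[
\alpha^2+\beta^2+\gamma^2-\alpha\beta-\beta\gamma-\gamma\alpha=-2\beta-\beta=-3\beta,
\]
and therefore
\[
\alpha^2+\beta^2+\gamma^2-\beta\gamma-\alpha\gamma-\alpha\beta-\alpha-\gamma+2\beta+1
=-3\beta-(\alpha+\gamma)+2\beta+1=1,
\]
since $\alpha+\gamma=-\beta$. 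Thus the entire last line of each $c_j$ in Proposition \ref{opc} is governed just by a power of $-\beta$.

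With these ingredients in hand, the computation of each coefficient becomes purely mechanical: in every $c_j(x)$ I would replace $\beta^2-\alpha\gamma\mapsto-\beta$, $\gamma^2-\alpha(\beta+1)\mapsto\gamma$, $\alpha\beta+\alpha-\gamma^2\mapsto-\gamma$, the cubic symmetric combination by $0$, and the two large prefactors discussed above by $1$. For $c_1$ and $c_0$ one also uses $\alpha\gamma=\beta^2+\beta$ to rewrite $(1+\beta)^2(\alpha+\beta+\gamma+1)=(1+\beta)^2$ and $4\alpha\gamma-3\beta-4\beta^2=\beta$. Collecting monomials in $x$ then yields precisely the formulas stated.

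The main obstacle is the sheer bookkeeping rather than any conceptual difficulty: one must carefully verify sign cancellations in $c_3$ and $c_2$, where terms with both $\beta$ and $\gamma$ appear and where the length-four inner factors (such as $\alpha^3+4\beta^3+\gamma^3-\alpha\gamma+3\beta^2-6\alpha\beta\gamma-\dots$) must be reduced using the five identities in a careful order. A convenient trick, which I would use to keep track, is to first eliminate all explicit $\alpha$ and $\gamma$ cubic and higher-degree combinations via identities~3 and~5, then reduce quadratic terms via identity~2, and finally eliminate $\alpha+\gamma$ via identity~1; this ordering ensures that no intermediate expression explodes in length.
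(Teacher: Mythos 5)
Your proposal is correct and follows essentially the same route as the paper: the paper's own justification of Proposition \ref{cjcop} is precisely the ``repeated application'' of the five $\mbox{CO}_+(3)$ identities to the coefficients of Proposition \ref{opc}, and your preliminary reductions ($\beta^2-\alpha\gamma=-\beta$, $\gamma^2-\alpha(\beta+1)=\gamma$, $\alpha\beta+\alpha-\gamma^2=-\gamma$, the vanishing of $\alpha^3+\beta^3+\gamma^3-3\alpha\beta\gamma$, and the collapse of the two long prefactors to $1$) are exactly the simplifications that computation requires and are all verified correctly. The remaining work is the same term-by-term bookkeeping the paper leaves implicit, so there is nothing methodologically different to compare.
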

The implicit equation satisfied by $g(z)$,
\be\label{spg}
\Phi(g(z),z^2)=\sum_{j=0}^5g^j(z)c_j(z^2)\equiv 0, \ \ \ z\in\D,
\ee
yields enough information to rule out point spectrum in the orthogonal case. 
\begin{prop}\label{eop}
For $C\in\mbox{Circ}(3)\cap O(3)\setminus\{\pm C_\sigma, \pm C_\pi\}$, 
\be
\sigma_{pp}(U_a(C))=\emptyset, \ \ \  \mbox{and} \ \ \ \sigma(U_a(C))=\overline{\sigma(U_a(C))}=-\sigma(U_a(C)).
\ee
\end{prop}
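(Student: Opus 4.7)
The identity $\sigma(U_a(C))=-\sigma(U_a(C))$ was already established in Theorem~\ref{mainthm} via the $z\leftrightarrow -z$ symmetry of $g$. For $\sigma(U_a(C))=\overline{\sigma(U_a(C))}$, my plan is to exploit reality: for $C\in O(3)$ every ingredient in the construction of $U_a(C)$ on $\cK_B^a$---the shift $S$, the boundary coin $C_\pi$, and the coin $C$ itself---is a real matrix in the canonical basis, so $U_a(C)$ commutes with the antiunitary involution $K$ of coordinatewise complex conjugation, forcing the spectrum to be stable under $\lambda\mapsto\overline\lambda$.

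To prove $\sigma_{pp}(U_a(C))=\emptyset$ I would first dispose of the cases $C=\pm\un$ via the proposition of Section~3.1 (the spectrum is then purely absolutely continuous on the whole circle). For the remaining $C\in\mbox{Circ}(3)\cap O(3)\setminus\{\pm\un,\pm C_\sigma,\pm C_\pi\}$, all entries of $C$ are nonzero by the last statement of the lemma of Section~3, hence $a\otimes a$ is cyclic for $U_{ab}(C)$ by the cyclicity lemma of Section~2; combined with $U_a(C)=U_{ab}(C)\oplus U_{ac}(C)$ and Proposition~\ref{symcyc}, this reduces the question to absence of atoms in $d\mu_{a\otimes a}$. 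By~(\ref{atom}) an atom at $e^{i\theta_0}$ is equivalent to $(1-r)g(re^{i\theta_0})\to\mu(\{\theta_0\})>0$, i.e.\ a radial pole of $g$ with positive mass. Since $g$ satisfies $\Phi(g,z^2)\equiv 0$ on $\D$ by Theorem~\ref{maintech}, such a pole can arise only at the zeros on $\Ss$ of the leading coefficient $c_5(z^2)$. For $C\in\mbox{CO}_+(3)\setminus\{\un,C_\sigma,C_\pi\}$ (so $\beta\notin\{0,-1\}$), Proposition~\ref{cjcop} gives $c_5(x)=-\beta^2(x-1)(x^2+(1-3\gamma)x+1)$; its three roots all lie on $\Ss$ because $(1-3\gamma)^2-4\le 0$ throughout the parameter range $\gamma\in[-1/3,1]$ of $\mbox{CO}_+(3)$, producing at most six candidate points $z_0$.

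At each candidate $z_0$ with $x_0=z_0^2$ a simple root of $c_5$, substitute $g(re^{i\theta_0})\sim A/(1-r)$ into $\Phi(g,z^2)=0$ and balance the leading $(1-r)^{-4}$ terms to obtain $A=c_4(x_0)/(2x_0\,c_5'(x_0))$. At $x_0=1$ the formulas of Proposition~\ref{cjcop} give $A=-\tfrac12$. At the complex roots $x_0=e^{\pm i\phi}$ with $\cos\phi=(3\gamma-1)/2$, the quadratic relation $x_0^2=(3\gamma-1)x_0-1$ reduces $c_4(x_0)$ and $c_5'(x_0)$ to linear expressions in $x_0$, and the sum-to-product identity $1-3\gamma x_0=-2\cos(\phi/2)\,e^{i3\phi/2}$ then collapses the formula to the negative real value $A=-3(1-\gamma)/(8\sin^2(\phi/2))<0$. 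In every case $A<0$ contradicts $A=\mu(\{\theta_0\})\ge 0$, ruling out the atom. The remaining degenerate situation in $\mbox{CO}_+(3)$ is $\gamma=-1/3$ (forcing $(\alpha,\beta,\gamma)=(2/3,-1/3,-1/3)$; the other degeneracy $\gamma=1$ is the excluded $C_\sigma$), where $c_5$ has a double root at $x=-1$; direct substitution into Proposition~\ref{cjcop} then gives $c_j(-1)=0$ for $j=2,3,4$ while $c_1(-1)\ne 0$, so the equation $\Phi(g,-1)=c_1(-1)g+c_0(-1)=0$ is linear with a finite root, precluding a pole at $z=\pm i$. For $C\in\mbox{CO}_-(3)\setminus\{-\un,-C_\sigma,-C_\pi\}$ I would redo the analogous computation using the $\mbox{CO}_-(3)$ identities $\alpha+\beta+\gamma=-2$, $\alpha\gamma=\beta(\beta+1)$, or shortcut via the antipodal correspondence of Remark~\ref{copcom}. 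The main technical obstacle is this residue analysis at the roots of $c_5$: the orthogonality identities are what force the generically complex candidate residue $A$ to collapse to a negative real number, and the degenerate double-root case requires the simultaneous vanishing of several higher $c_j$'s which must be verified by explicit substitution.
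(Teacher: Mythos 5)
Most of your proposal tracks the paper's own argument and even sharpens it: the reduction via cyclicity (with $\pm\un$ treated separately through the Section~3.1 proposition), the observation that an atom forces $c_5(e^{2i\theta_0})=0$, and the radial balance at a \emph{simple} zero giving the candidate mass $A=c_4(x_0)/(2x_0c_5'(x_0))$ are exactly the paper's scheme (its equations (\ref{simple})--(\ref{negweight})), and your explicit evaluations $A=-\tfrac12$ at $x_0=1$ and $A=-3(1-\gamma)/(8\sin^2(\phi/2))$ at the complex roots are correct and fill in the "matter of computation" the paper leaves implicit. Your derivation of $\sigma(U_a(C))=\overline{\sigma(U_a(C))}$ from the reality of $U_a(C)$ (commutation with coordinatewise conjugation) is a legitimately different and simpler route than the paper's, which instead uses that the $c_j$ have real coefficients to get $g(z)=\overline{g(\bar z)}$ and hence $d\mu(\theta)=d\mu(-\theta)$; both are fine, and the paper's version has the small advantage of staying entirely inside the framework of Theorem \ref{maintech}.

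There is, however, a genuine gap in your treatment of the degenerate case $\gamma=-1/3$ (double zero of $c_5(z^2)$ at $z=\pm i$). Your inference "since $c_j(-1)=0$ for $j=2,3,4$ and $c_1(-1)\neq0$, the limiting equation $c_1(-1)g+c_0(-1)=0$ is linear, hence no pole at $\pm i$" is not valid: when the leading coefficient degenerates, four roots of the quintic escape to infinity as $z^2\to-1$, and the limiting polynomial retains no information about them. Because $c_4$ vanishes only to first order at $x=-1$ while $c_5$ vanishes to second order, the terms $g^5c_5(z^2)$ and $g^4c_4(z^2)$ are \emph{both} of order $(1-r)^{-3}$ under the ansatz $g\sim \mu/(1-r)$ along $z=ri$, so there is an escaping root at exactly the critical rate $(1-r)^{-1}$ and an atom is not excluded by your argument. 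What is needed is the second-order balance the paper performs in its case ii), namely $\mu^4\bigl(\mu\,e^{2i\theta_0}c_5''(e^{2i\theta_0})-c_4'(e^{2i\theta_0})\bigr)=0$ up to harmless constants, followed by checking that the nonzero solution is not a positive real number; with $c_5(x)=-\beta^2(x-1)(x+1)^2$ and $c_4'(-1)=4\beta^2$ one finds the candidate mass equals $-1<0$, so the conclusion survives, but this computation (and the analogous one in $\mbox{CO}_-(3)$) must be carried out rather than bypassed by evaluating $\Phi(\cdot,-1)$.
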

{\bf Proof:} Consider the symmetry first. Since the polynomials $c_j$ have real valued coefficients, they satisfy $c_j(z^2)=\overline{c_j(\bar{z}^2)}$. Hence $g(z)\equiv \overline{g(\bar{z})}$, $\forall z\in \D$, so that $\Re g(re^{i\theta})=\Re g(re^{-i\theta})$. This yields the supplementary symmetry $d\mu(\theta)=d\mu(-\theta)$, in the limit $r\ra 1^-$.

Consider now the point spectrum for $C\in\mbox{CO}_+(3)\setminus\{C_\sigma, C_\pi \}$.  Recall that $g(z)=(F(z)+1)/2$ is analytic in $\D$ and 
such that the weight of a point of the spectral measure $d\mu$ located at $\theta_0\in \T$ is given by 
\be
\mu(\{\theta_0\})=\lim_{r\ra1^-}(1-r)g(re^{i\theta_0})>0,
\ee 
according to (\ref{atom}).
Assume that $\theta_0\in \mbox{supp }d\mu_s\in \T$, {\it i.e.} $\Re g(re^{i\theta_0})\ra \infty $ as $r\ra 1^-$. This implies that $c_5(e^{2i\theta_0})=0$ and that $M(g(re^{i\theta_0}))$ 
is bounded away from zero for $r$ close to 1. For $\theta_0\in \T$ fixed and all $z\in \D$, let 
\be
\mu_{\theta_0}(z)=e^{-i\theta_0}g(z)(e^{i\theta_0}-z), \ \ \mbox{s.t.}\ \ \lim_{r\ra 1^-}\mu_{\theta_0}(re^{i\theta_0})=\mu(\{\theta_0\}).
\ee
Thus, (\ref{spg}) reads
\be\label{eqmu}
\sum_{j=0}^5\mu^j_\theta(z)\frac{e^{ij\theta_0}c_j(z^2)}{(e^{i\theta_0}-z)^j}\equiv 0, \ \ \ z\in\D.
\ee
First observe that $\pm 1$ is always a simple root of $c_5(z^2)$ and that $c_5(z^2)$ admits two pairs of conjugated simple zeros $\pm z_0(\gamma), \pm \bar{z_0}(\gamma)$ on ${\mathbb S}$ for $-1/3<\gamma <1$ only. Moreover, $c_4(z^2)\neq 0$ at these simple roots on ${\mathbb S}$. Then, for $\gamma=-1/3$, $c_5(z^2)$ admits double zeros at $\pm i$ which are simultaneously simple zeros of $c_4(z^2)$ and $c_3(z^2)$. The case $\gamma=1$ is excluded by assumption.
Hence two cases occur:
\\
i)  $e^{i\theta_0}$ is a simple zero of $c_5(z^2)$, so that (\ref{eqmu}) is equivalent to
\be\label{simple}
-\mu^5_{\theta_0}(z)e^{2i\theta_0}c'_5(e^{2i\theta_0})(1+O(z-e^{i\theta_0}))+\mu^4_{\theta_0}(z)c_4(e^{2i\theta_0})(1+O(z-e^{i\theta_0}))+O(z-e^{i\theta_0})\equiv 0,
\ee
ii) $e^{i\theta_0}$ is a double zero of $c_5(z^2)$ and a simple zero of $c_4(z^2)$ and $c_3(z^2)$, so that (\ref{eqmu}) is equivalent to
\be\label{double}
\mu^5_{\theta_0}(z)e^{2i\theta_0}c''_5(e^{2i\theta_0})(1+O(z-e^{i\theta_0}))-\mu^4_{\theta_0}(z)c'_4(e^{2i\theta_0})(1+O(z-e^{i\theta_0}))+O(z-e^{i\theta_0})\equiv 0.
\ee
For $z=re^{i\theta_0}$, we get for the two cases  in the limit $r\ra 1^-$
\bea\label{negweight}
i) &\mu^4(\{\theta_0\})(\mu(\{\theta_0\})e^{2i\theta_0}c'_5(e^{2i\theta_0})-c_4(e^{2i\theta_0}))=0\nonumber \\
ii) &\mu^4(\{\theta_0\})(\mu(\{\theta_0\})e^{2i\theta_0}c''_5(e^{2i\theta_0})-c'_4(e^{2i\theta_0}))=0.
\eea 
It is a matter of computation to check that the non-zero solutions to ({\ref{negweight}}) for the different values $e^{i\theta_0}$ of interest are all strictly negative, which implies that $\mu(\{\theta_0\})=0$. 

The case $C\in\mbox{CO}_-(3)\setminus\{- C_\sigma, - C_\pi\}$ is dealt with analogously, making use of Remark \ref{copcom}. The change of variables $\alpha\mapsto -\alpha, \beta\mapsto -\beta-2, \gamma\mapsto -\gamma$ in Proposition \ref{cjcop} yields the corresponding polynomial $c_j(z^2)$ for this case.  These polynomials have the same properties as in the previous case, so that the same conclusions hold.  \ep\\

\end{document}